\documentclass[11pt,letterpaper,reqno]{amsart}
\usepackage{geometry}
\geometry{letterpaper,margin=1in}
\usepackage[justification=centering]{caption}

\usepackage[authoryear,sort&compress]{natbib}
\bibpunct{[}{]}{;}{n}{,}{,}

\usepackage{amsmath}
\usepackage{amsfonts}
\usepackage{amssymb}
\usepackage{amsaddr}
\usepackage{comment}
\usepackage[shortlabels]{enumitem}
\usepackage{mathrsfs}
\usepackage{amsthm}
\usepackage{tikz}
    \usetikzlibrary{matrix}
    \usetikzlibrary{decorations.markings}
\usepackage{hyperref}
\usepackage[T1]{fontenc}
\usepackage{oldgerm}
\usepackage{scalerel,stackengine}
\usepackage{stmaryrd}
\usepackage{xcolor}
\usepackage{xfrac}
\usepackage{MnSymbol}

\newtheorem{theorem}{Theorem}[section]
\newtheorem{definition}{Definition}[section]

\newtheorem{proposition}{Proposition}[section]
\newtheorem{lemma}{Lemma}[section]

\newtheorem{remark}{Remark}[section]

\begin{document}

\title[Discrete Dirac reduction]{Discrete Dirac reduction of implicit Lagrangian systems with abelian symmetry groups}

\author{\'Alvaro Rodr\'iguez Abella}
\address{Instituto de Ciencias Matem\'aticas (CSIC-UAM-UC3M-UCM), Calle Nicol\'as Cabrera, 13-15, Madrid, 28049, Madrid, Spain.}
\email{alvrod06@ucm.es (corresponding author)}

\author{Melvin Leok}
\address{Department of Mathematics, UC San Diego, 9500 Gilman Drive, La Jolla, CA 92093-0112, USA.}
\email{mleok@ucsd.edu}


\begin{abstract}
This paper develops the theory of discrete Dirac reduction of discrete Lagrange--Dirac systems with an abelian symmetry group acting on the configuration space. We begin with the linear theory and, then, we extend it to the nonlinear setting using retraction compatible charts. We consider the reduction of both the discrete Dirac structure and the discrete Lagrange--Pontryagin principle, and show that they both lead to the same discrete Lagrange--Poincar\'e--Dirac equations. The coordinatization of the discrete reduced spaces relies on the notion of discrete connections on principal bundles. At last, we demonstrate the method obtained by applying it to a charged particle in a magnetic field, and to the double spherical pendulum.
\end{abstract}

\keywords{discrete mechanical systems, geometric numerical integration, Lagrange--Poincar\'e--Dirac equations, reduction by symmetries} 
\subjclass{37J39, 65P10, 70G65, 70H33}
\maketitle


\section{Introduction}

Symmetry reduction plays a central role in the field of geometric mechanics \cite{Ar1989, AbMa1978, MaRa1999}, and it involves expressing the dynamics of a mechanical system with symmetry in terms of the equivalence classes of group orbits on the space of solutions. This allows one to derive reduced equations of motion on a lower-dimensional reduced space which is obtained by quotienting the phase space by the symmetry action. The modern approach to symmetry reduction was introduced in \citet{Ar1966}, \citet{Sm1970}, \citet{Me1973}, and \citet{MaWe1974}, but the notion of symmetry reduction arises in earlier work of Lagrange, Poisson, Jacobi, and Noether.

Discrete variational mechanics~\cite{MaWe2001,west2006,LeZh2009} provides a discrete (in time) notion of Lagrangian dynamics, based on a discrete Hamilton's principle. This leads to discrete flow maps that are symplectic, and exhibit a discrete Noether's theorem. In turn, this naturally raises the question of whether one can develop a corresponding theory of symmetry reduction for discrete variational mechanics. Interest in this direction was motivated in part by attempts to understand the integrable discretization of the Euler top due to \citet{MoVe1991}. Prior work on discrete symmetry reduction includes a constrained variational formulation of discrete Euler--Poincar\'e reduction~\cite{MaPeSh1999, BoSu1999a}, the associated reduced discrete Poisson structure~\cite{MaPeSh2000}, discrete Euler--Poincar\'e reduction for field theories~\cite{Va2007} and discrete fluids~\cite{PaMuToKaMaDe2011,NaCo2017}, discrete Lie--Poisson integrators~\cite{MaRo2010,BaMa2018}, discrete higher-order Lagrange--Poincar\'e reduction~\cite{BlCoJi2019}, and a discrete notion of Routh reduction for abelian groups~\cite{jalnapurkar2005}. The resulting symplectic and Poisson integrators can be viewed as geometric structure-preserving numerical integrators, and this is an active area of study that is surveyed in \cite{HaLuWa2006}. In addition, discrete reduction theory can also be expressed in terms of composable groupoid sequences, which was the approach introduced in~\cite{We1996}, and explored further in \cite{MaDDMa2006}, and extended to field theories in~\cite{VaCa2007}.

The goal of this paper is to develop a discrete analogue of Dirac cotangent bundle reduction \cite{marsden2009} for Lagrangian systems, that is to say, a discrete Dirac reduction theory for implicit Lagrangian systems. The relevance of this type of reduction lies in the fact that it allows for treating a wide variety of systems with symmetry from a unifying viewpoint, including degenerate systems. Therefore, our discrete analogue allows us to construct geometric integrators for those systems. The theory presented here is limited to systems with abelian symmetry groups, but we expect to extend it to nonabelian groups in the future. Moreover, our work here is the first step in the construction of a category containing discrete Dirac structures that is closed under quotients, as proposed in \cite{leok2011}, which is fundamental for the study of discrete Dirac reduction by stages.

In addition, framing the discrete reduction theory in the context of discrete Dirac mechanics is significant, as it is the natural setting for studying discrete Hamiltonian mechanics on manifolds, particularly when expressed in terms of the discrete generalized energy. This is because the notion of discrete Hamiltonians~\cite{west2006,LeZh2009}, which are Type 2 and 3 generating functions, does not make intrinsic sense on a nonlinear manifold, since it is not possible to specify a covector on a nonlinear manifold without also specifying a base point. In contrast, the discrete generalized energy does make intrinsic sense, and is a more promising foundation on which to construct a discrete analogue of Hamiltonian mechanics on manifolds. Discrete Dirac mechanics is also the basis of a discrete theory of interconnections~\cite{PaLe2016}, which allows one to construct discretizations of complex multiphysics systems by interconnecting simpler subsystems, which provides a framework for geometric structure-preserving discretization of port-Hamiltonian systems~\cite{Va2006}.

Reduction theory can be addressed either in terms of the reduction of geometric structures like the symplectic or Poisson structures, or in terms of the reduction of variational principles. In this work, we will start with the discrete Lagrange--Dirac mechanics that was developed in~\cite{leok2011}, which can be viewed as a discrete analogue of Lagrange--Dirac mechanics, that can be formulated both in terms of Dirac structures~\cite{YoMa2006a} and the Hamilton--Pontryagin variational principle~\cite{YoMa2006b}. In order to coordinatize the reduced spaces arising in the reduction that we will perform, we rely on the notion of discrete principal connections that was introduced in \cite{leok2005} and further developed in \cite{fernandez2013}.

\subsection*{Overview} In Section~\ref{sec:Prelims}, we will recall the notion of discrete principal connection \cite{leok2005,fernandez2013} and discrete Dirac mechanics \cite{leok2010,leok2011}. Initially, we will consider the case where the configuration space $Q$ is a vector space, and the symmetry group $G$ is a vector subspace acting on $Q$ by addition, and show how the discrete Dirac structure is group-invariant, descends to a discrete Dirac structure on the quotient space in Section~\ref{sec:reductiondiracstrucure}. In Section~\ref{sec:reducedequations}, we will show how the reduced discrete Dirac structure can be used to derive the reduced discrete equations of motion, after we derive the reduced discrete Dirac differential, which is a map arising in the reduced discrete Tulczyjew triple. In Section~\ref{sec:reductionvariationalprinciple}, we also show that the discrete variational principle can be expressed in terms of a discrete generalized energy, and that if the discrete Lagrangian is group-invariant, then so is the discrete generalized energy. This induces a reduced discrete variational principle, which yields the same reduced discrete equations of motion. Then, in Section~\ref{sec:nonlinear}, we show that, in the nonlinear setting, we can use the notion of retractions~\cite{absil2008} and an atlas of retraction compatible charts~\cite{leok2011} to develop a global discrete theory, whose local representatives recover the vector space theory that we considered in the earlier part of the paper. This is significant, because that implies that with respect to an atlas of retraction compatible charts, the local theory that we initially constructed on each chart is compatible on overlapping domains, and induces a well-defined global theory on a manifold endowed with a semi-global retraction map. In Section~\ref{sec:examples}, we will illustrate the proposed discrete reduction theory by applying it to simulate the charged particle in a magnetic field, and the double spherical pendulum. Finally, we will summarize our contributions in the conclusion, and discuss future research directions.

\section{Preliminaries}\label{sec:Prelims}

\subsection{Group actions}

Let $Q$ be a smooth manifold, $G$ be a Lie group, and $\Phi^Q: G\times Q\to Q$ a free and proper left action of $G$ on $Q$. We will denote $\Phi^Q(g,q)=g\cdot q$ for each $q\in Q$ and $g\in G$. For a fixed $g\in G$, we define the map $\Phi_g^Q:Q\to Q$ as $q\mapsto q\cdot g$. Likewise, for a fixed $q\in Q$, we define the map $\Phi_q^Q:G\to Q$ as $g\mapsto q\cdot g$. The action being free and proper yields a principal bundle $\pi_{Q,\Sigma}: Q\to\Sigma$, where $\Sigma=Q/G$. We denote the equivalence class of $q\in Q$ by $[q]\in \Sigma$. In general, we use a superscript to denote the space where $G$ acts, and brackets $[\cdot]$ to denote the corresponding equivalence classes.

The diagonal action on $Q\times Q$ is given by
\begin{equation}\label{eq:diagonal action}
\Phi^{Q\times Q}: G\times Q\times Q\to Q\times Q,\qquad(g,(q_0,q_1))\mapsto(g\cdot q_0,g\cdot q_1).
\end{equation}
Likewise, the lift of the action to the cotangent bundle is given by
\begin{equation*}
\Phi^{T^*Q}: G\times T^*Q\to T^*Q,\qquad\left(g,p_q\right)\mapsto\left(\left(d\Phi_{g^{-1}}^Q\right)_{g\cdot q}^*(p_q)\right),
\end{equation*}
where the superscript $*$ denotes the adjoint map. In other words, for each $v_{g\cdot q}\in T_{g\cdot q} Q$ we have
\begin{equation*}
\left\langle g\cdot p_q,v_{g\cdot q}\right\rangle=\left\langle p_q,\left(d\Phi_{g^{-1}}^Q\right)_{g\cdot q}(v_{g\cdot q})\right\rangle.
\end{equation*}

The infinitesimal generators (or fundamental vector fields) of the action $\Phi^Q$ are denoted by $\xi_Q\in\mathfrak X(Q)$ for each $\xi\in\mathfrak g$, where $\mathfrak g$ is the Lie algebra of $G$, and analogously for the actions defined on the other spaces. In the same fashion, the momentum map $J: T^*Q\to\mathfrak g^*$ of the action is defined by $\langle J(p_q),\xi\rangle=\langle p_q,\xi_Q(q)\rangle$ for each $p_q\in T^*Q$ and $\xi\in\mathfrak g$.

\subsubsection{Group actions on vector spaces}\label{sec:vectoractions}

Now we suppose that $Q$ is a vector space, which allows us to identify 
\begin{equation}\label{eq:tangentidentification}
TQ=Q\times Q,\qquad T^*Q\simeq Q\times Q^*.    
\end{equation}
In addition, assume that $G\subset Q$ is a vector subspace\footnote{
Recall that any vector space can be regarded as a Lie group with the additive structure.} acting by addition, i.e., $g\cdot q=g+q$ for each $g\in G$ and $q\in Q$. In this case, we have $g^{-1}=-g$ and the action on $T^*Q$ reduces to,
\begin{equation}\label{eq:dualaction}
\Phi^{T^*Q}: G\times T^*Q\to T^*Q,\qquad(g,(q_0,p_1))\mapsto\left(g+q_0,p_1\right).
\end{equation}

To conclude, note that $\mathfrak g=G$ and $\mathfrak g^*=G^*$, and that the exponential map is the identity. Hence, $\xi_Q(q)=\xi$ for each $\xi\in\mathfrak g$ and $q\in Q$. Subsequently, the momentum map is given by $\langle J(q_0,p_0),\xi\rangle=\langle p_0,\xi\rangle$ for each $(q_0,p_0)\in T^*Q$ and $\xi\in\mathfrak g$.

\subsection{Discrete principal connections}\label{sec:discreteconnection}

Discrete principal connections were first introduced in \cite{leok2005} and further studied in \cite{fernandez2013}. Let $Q$ be a smooth manifold and $G$ be a Lie group acting freely and properly on $Q$.

\begin{definition}\label{def:discreteconnection}
A \emph{discrete principal connection} on the principal bundle $\pi_{Q,\Sigma}: Q\to\Sigma$ is a (smooth) function $\omega_d: Q\times Q\to G$ such that
\begin{enumerate}[(i)]
    \item $\omega_d(q_0,q_0)=e$, the identity element, for each $q_0\in Q$.
    \item $\omega_d(g_0\cdot q_0,g_1\cdot q_1)=g_1 \omega_d(q_0,q_1) g_0^{-1}$ for each $(q_0,q_1)\in Q\times Q$ and $g_0,g_1\in G$.
\end{enumerate}
\end{definition}

Discrete principal connections are not generally defined globally on $Q\times Q$, as its global existence implies a global trivialization of $TQ$. Instead, it is only semi-globally defined on a $G$-invariant open subset $U\subset Q\times Q$ containing the diagonal, i.e., $(q_0,q_0)\in U$ for each $q_0\in Q$. Nevertheless, in the following we assume that they are globally defined in order to simplify the notation.

A discrete principal connection $\omega_d$ enables us to define the \emph{discrete horizontal bundle} as
\begin{equation*}
H_d=\left\{(q_0,q_1)\in Q\times Q\mid\omega_d(q_0,q_1)=e\right\}\subset Q\times Q.
\end{equation*}
In addition, $\omega_d$ induces a \emph{discrete horizontal lift}, i.e., a map  $h_d: Q\times\Sigma\to Q\times Q$ that is the inverse of the diffeomorphism $(\textrm{\normalfont id}_Q\times\pi_{Q,\Sigma})|_{H_d}: H_d\to Q\times\Sigma$. We denote $\overline h_d=\pi_2\circ h_d: Q\times\Sigma\to Q$, where $\pi_2$ denotes the projection onto the second component.

\subsubsection{Local expression of discrete connections on vector spaces}\label{sec:localdiscreteconnection}

Suppose that $Q$ is a vector space and $G\subset Q$ is a vector subspace acting by addition. Therefore, $\Sigma=Q/G$ is a vector space and the projection $\pi_{Q,\Sigma}: Q\to\Sigma$ is linear. In this case, we assume that the horizontal lift, $h_d: Q\times\Sigma\to Q\times Q$, is a linear map (where the vector structure on the product is given by the direct sum). This, in turn, ensures that $\overline h_d: Q\times\Sigma\to Q$ is also linear.

Working in a trivialization of $\pi_{Q,\Sigma}$, i.e., supposing that $Q=\Sigma\times G$ with $\pi_{Q,\Sigma}=\pi_1$ and the vector structure given by the direct sum, then the horizontal lift of the discrete connection is determined by a map $\textrm{\normalfont h}_d: Q\times \Sigma\to G$ defined by $\textrm{\normalfont h}_d=\pi_2\circ\overline h_d$, i.e.,
\begin{equation*}
\overline h_d(q_0,x_1)=(x_1,\textrm{\normalfont h}_d(q_0,x_1)),\qquad q_0=(x_0,g_0)\in Q,\quad x_1\in \Sigma.
\end{equation*}
Of course, $\textrm{\normalfont h}_d$ is also linear and, from \cite[Equation (4.2)]{fernandez2013}, it satisfies
\begin{equation*}
\omega_d(q_0,q_1)=g_1-\textrm{\normalfont h}_d(q_0,x_1),\qquad q_0,q_1\in Q.
\end{equation*}
In particular, $\textrm{\normalfont h}_d(q_0,x_0)=g_0$ and $\textrm{\normalfont h}_d(g+q_0,x_1)=g+\textrm{\normalfont h}_d(q_0,x_1)$, for each $g\in G$ (to obtain this second equation we have used the equivariance of $\omega_d$).  

On the other hand, the dual of $Q$ is given by $Q^*=\Sigma^*\times G^*$, and the adjoint of $\textrm{\normalfont h}_d$ can be written as $\textrm{\normalfont h}_d^*=\left(\textrm{\normalfont h}_{d,Q}^*,\textrm{\normalfont h}_{d,\Sigma}^*\right): G^*\to Q^*\times\Sigma^*$ for some linear maps $\textrm{\normalfont h}_{d,Q}^*: G^*\to Q^*$ and $\textrm{\normalfont h}_{d,\Sigma}^*: G^*\to\Sigma^*$. They are related to the adjoint of $\overline h_d$ as follows
\begin{equation}\label{eq:adjointhorizontalliftlocal}
\overline h_d^*(p_0)=\left(\textrm{\normalfont h}_{d,Q}^*(r_0),w_0+\textrm{\normalfont h}_{d,\Sigma}^*(r_0)\right),\qquad p_0=(w_0,r_0)\in Q^*.
\end{equation}
Similarly, we denote by $\textrm{\normalfont h}_{d,Q}:\Sigma\to G$ the adjoint map of $\textrm{\normalfont h}_{d,Q}^*$, and analogously for $\textrm{\normalfont h}_{d,\Sigma}:\Sigma\to G$. In fact, we have $\textrm{\normalfont h}_{d,Q}(q_0)=\textrm{\normalfont h}_d(q_0,0)$ and $\textrm{\normalfont h}_{d,\Sigma}(x_1)=\textrm{\normalfont h}_d(0,x_1)$. Hence,
\begin{equation}\label{eq:decompositionhd}
\textrm{\normalfont h}_d(q_0,x_1)=\textrm{\normalfont h}_{d,Q}(q_0)+\textrm{\normalfont h}_{d,\Sigma}(x_1).
\end{equation}

For the sake of simplicity, we introduce the map $\textrm{\normalfont h}_d^0:\Sigma\times\Sigma\to G$, where $\textrm{\normalfont h}_d^0(x_0,x_1)=\textrm{\normalfont h}_d((x_0,0),x_1)$. Lastly, observe that $\textrm{\normalfont h}_d: Q\times\Sigma\to G$ may be regarded locally as a map defined on $(\Sigma\times G)\times\Sigma$. For this reason, we denote its partial derivatives by
\begin{equation*}
D_1 \textrm{\normalfont h}_d(q_0,x_1)=\frac{\partial\textrm{\normalfont h}_d}{\partial q_0}(q_0,x_1)=\left(\frac{\partial\textrm{\normalfont h}_d}{\partial x_0}(x_0,g_0,x_1),\frac{\partial\textrm{\normalfont h}_d}{\partial g_0}(x_0,g_0,x_1)\right),\quad D_2 \textrm{\normalfont h}_d(q_0,x_1)=\frac{\partial\textrm{\normalfont h}_d}{\partial x_1}(q_0,x_1),
\end{equation*}
for each $(q_0,x_1)=((x_0,g_0),x_1)\in(\Sigma\times G)\times\Sigma$. Due to the linearity of $\textrm{\normalfont h}_d$, they are given by
\begin{equation}\label{eq:partialderivativeshd}
D_1\textrm{\normalfont h}_d(q_0,x_1)(q)=\textrm{\normalfont h}_d(q,0),\quad D_2\textrm{\normalfont h}_d(q_0,x_1)(x)=\textrm{\normalfont h}_d(0,x),\qquad q\in Q,\quad x\in\Sigma.
\end{equation}

We conclude with the following straightforward result.

\begin{lemma}\label{lemma:Jlineartrivial}
Let $Q=\Sigma\times G$ be a trivialization of $\pi_{Q,\Sigma}$. Then for each $q_0=(x_0,g_0)\in Q$ and $p_0=(w_0,r_0)\in Q^*$ we have $J(q_0,p_0)=r_0$.
\end{lemma}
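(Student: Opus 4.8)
The plan is to unwind the definition of the momentum map $J$ given in the preliminaries and combine it with the explicit description of the infinitesimal generators in the vector-space/trivialized setting. Recall that for a general action, $J: T^*Q \to \mathfrak g^*$ is characterized by $\langle J(p_q),\xi\rangle = \langle p_q, \xi_Q(q)\rangle$ for all $\xi \in \mathfrak g$. So the whole computation reduces to identifying $\xi_Q$ at the point $q_0 = (x_0,g_0)$ and then pairing it against $p_0 = (w_0,r_0)$.

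First I would observe that in the trivialization $Q = \Sigma \times G$, the group $G$ acts by addition on the second factor only: $g \cdot (x_0,g_0) = (x_0, g_0 + g)$. This is exactly the situation of Section~\ref{sec:vectoractions}, transported to the factor $G$ of the product. Hence $\mathfrak g = G$, the exponential map is the identity, and the infinitesimal generator associated to $\xi \in \mathfrak g = G$ is the constant vector field $\xi_Q(q_0) = (0,\xi) \in T_{q_0}Q = \Sigma \times G$ (zero in the $\Sigma$-direction, $\xi$ in the $G$-direction). Next I would pair: under the identification $T^*Q \simeq Q \times Q^*$ with $Q^* = \Sigma^* \times G^*$, we have $\langle p_0, \xi_Q(q_0)\rangle = \langle (w_0,r_0), (0,\xi)\rangle = \langle w_0, 0\rangle + \langle r_0, \xi\rangle = \langle r_0, \xi\rangle$. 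Since this holds for every $\xi \in \mathfrak g = G$, the defining property of $J$ forces $\langle J(q_0,p_0),\xi\rangle = \langle r_0,\xi\rangle$ for all $\xi$, i.e. $J(q_0,p_0) = r_0 \in G^* = \mathfrak g^*$.

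There is essentially no obstacle here; the only point requiring minor care is bookkeeping the identifications — namely that the tangent/cotangent splittings $TQ = Q\times Q$ and $T^*Q \simeq Q\times Q^*$ from \eqref{eq:tangentidentification} are compatible with the product structure $Q = \Sigma\times G$, so that $T_{q_0}Q \simeq \Sigma\times G$ and $T^*_{q_0}Q \simeq \Sigma^*\times G^*$, and that the pairing decomposes as a sum over the two factors. Once this is in place, the result is immediate, which is why it is stated as ``straightforward.'' I would present it as a two-line proof: identify $\xi_Q(q_0) = (0,\xi)$, pair with $p_0 = (w_0,r_0)$ to get $\langle r_0,\xi\rangle$, and conclude by nondegeneracy of the pairing that $J(q_0,p_0) = r_0$.
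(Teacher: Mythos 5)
Your proof is correct and follows exactly the route the paper intends: the lemma is stated as ``straightforward'' precisely because, as you show, in the trivialization the generator is $\xi_Q(q_0)=(0,\xi)$ (this is the identification $\mathfrak g=G\simeq\{0\}\times G\subset Q$ together with $\xi_Q(q)=\xi$ from Section~\ref{sec:vectoractions}), so $\langle J(q_0,p_0),\xi\rangle=\langle(w_0,r_0),(0,\xi)\rangle=\langle r_0,\xi\rangle$ and $J(q_0,p_0)=r_0$. Nothing is missing; your bookkeeping of the splittings $T^*_{q_0}Q\simeq\Sigma^*\times G^*$ is the only point of care and you handle it correctly.
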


\subsection{Discrete Dirac mechanics}

Let us briefly recall the formulation of discrete Dirac mechanics introduced in \cite{leok2010,leok2011}. Here, we present only the unconstrained case. In the following, let $Q$ be a vector space.

\subsubsection{Generating functions and discrete Tulczyjew triple}

The Tulczyjew triple consists of three diffeomorphisms defined between the iterated tangent and cotangent bundles of a smooth manifold (cf. \cite{YoMa2006a}).
\begin{equation*}
	\begin{tikzpicture}
			\matrix (m) [matrix of math nodes,row sep=0.1em,column sep=3em,minimum width=2em]
			{	T^*(TQ) & T(T^*Q) & T^*(T^*Q)\\};
			\path[-stealth]
			(m-1-1) edge [bend left = 25] node [above] {$\gamma_Q$} (m-1-3)
			(m-1-2) edge [] node [above] {$\kappa_Q$} (m-1-1)
			(m-1-2) edge [] node [above] {$\Omega^\flat$} (m-1-3);
	\end{tikzpicture}
\end{equation*}
If $\Theta_{T^*(T^*Q)}\in\Omega^1(T^*(T^*Q))$ and $\Theta_{T^*(TQ)}\in\Omega^1(T^*(TQ))$ denote the tautological 1-forms on $T^*(T^*Q)$ and $T^*(TQ)$, respectively, then the Tulczyjew triple induce a symplectic form on $T(T^*Q)$,
\begin{equation*}
\Omega_{T(T^*Q)}=-d\left(\kappa_Q^*\Theta_{T^*(TQ)}\right)=d\left((\Omega^\flat)^*\Theta_{T^*(T^*Q)}\right)\in\Omega^2(T(T^*Q)).
\end{equation*}

The key idea of discrete Dirac mechanics is to build a discrete analogue of the Tulczyjew triple that retains the symplecticity of the maps involved. To that end, generating functions are utilized. 

By means of the Poincaré Lemma (see \cite[\S 3.2]{leok2011} for details), it can be shown that a map $F:T^*Q\to T^*Q$, given by $(q_0,p_0)\mapsto(q_1,p_1)$ is symplectic if and only if there exists a function $S_1:Q\times Q\to\mathbb R$, known as the \emph{Type 1 generating function}, such that $p_0=-D_1 S_1(q_0,q_1)$ and $p_1=D_2 S_1(q_0,q_1)$ for each $(q_0,q_1)\in Q\times Q$. We denote by $\iota_1:Q\times Q\to T^*Q\times T^*Q$ the map defined as $(q_0,q_1)\mapsto\big((q_0,p_0),(q_1,p_1)\big)$, and arrive at the following commutative diagram.
\begin{equation*}
\begin{array}{cc}
\begin{tikzpicture}
		\matrix (m) [matrix of math nodes,row sep=3em,column sep=1.5em,minimum width=2em]
		{	T^*Q\times T^*Q & & T^*(Q\times Q)\\
			 & Q\times Q & \\};
		\path[-stealth]
		(m-1-1) edge [] node [above] {$\kappa_Q^d$} (m-1-3)
		(m-2-2) edge [] node [below] {$\iota_1$} (m-1-1)
		(m-2-2) edge [] node [below] {$dS_1$} (m-1-3);
\end{tikzpicture} &
\begin{tikzpicture}
		\matrix (m) [matrix of math nodes,row sep=3em,column sep=1.5em,minimum width=2em]
		{	\big((q_0,p_0),(q_1,p_1)\big) & & (q_0,q_1,-p_0,p_1)\\
			 & (q_0,q_1) & \\};
		\path[-stealth]
		(m-1-1) edge [|->] node [above] {} (m-1-3)
		(m-2-2) edge [|->] node [left] {} (m-1-1)
		(m-2-2) edge [|->] node [above] {} (m-1-3);
\end{tikzpicture}
\end{array}
\end{equation*}
An analogous conclusion may be reached by using a function $S_2:Q\times Q^*\to\mathbb R$, known as \emph{Type 2 generating function}. In such case, we obtain a map
\begin{equation*}
\Omega_{d+}^\flat:T^*Q\times T^*Q\to T^*(Q\times Q^*),\qquad\big((q_0,p_0),(q_1,p_1)\big)\mapsto(q_0,p_1,p_0,q_1).
\end{equation*}
This way, we define the \emph{$(+)$-discrete Tulczyjew triple} as follows,
\begin{equation}\label{eq:Tulczyjew_d+}
	\begin{tikzpicture}[baseline=(current  bounding  box.center)]
			\matrix (m) [matrix of math nodes,row sep=0.1em,column sep=3em,minimum width=2em]
			{	T^*(Q\times Q) & T^*Q\times T^*Q & T^*(Q\times Q^*)\\
				(q_0,q_1,-p_0,p_1) & \big((q_0,p_0),(q_1,p_1)\big) & (q_0,p_1,p_0,q_1)\\};
			\path[-stealth]
			(m-1-1) edge [bend left = 25] node [above] {$\gamma_Q^{d+}$} (m-1-3)
			(m-1-2) edge [] node [above] {$\kappa_Q^d$} (m-1-1)
			(m-1-2) edge [] node [above] {$\Omega_{d+}^\flat$} (m-1-3)
			(m-2-2) edge [|->] node [] {} (m-2-1)
			(m-2-2) edge [|->] node [] {} (m-2-3);
	\end{tikzpicture}
\end{equation}
The discrete triple preserve the symplecticity of the continuous triple in the sense that the following is a natural symplectic form on $T^*Q\times T^*Q$,
\begin{equation*}
\Omega_{T^*Q\times T^*Q}=-d\left((\kappa_Q^d)^*\Theta_{T^*(Q\times Q)}\right)=d\left((\Omega_{d+}^\flat)^*\Theta_{T^*(Q\times Q^*)}\right)\in\Omega^2(T^*Q\times T^*Q).
\end{equation*}

On the other hand, a function $S_3:Q^*\times Q\to\mathbb R$, known as \emph{Type 3 generating function}, may be used to get a map
\begin{equation*}
\Omega_{d-}^\flat:T^*Q\times T^*Q\to T^*(Q^*\times Q),\qquad\big((q_0,p_0),(q_1,p_1)\big)\mapsto(p_0,q_1,-q_0,-p_1).
\end{equation*}
By using $\Omega_{d-}^\flat$ instead of $\Omega_{d+}^\flat$, we arrive to an analogous diagram called the \emph{$(-)$-discrete Tulczyjew triple}, which also inherits the symplecticity properties of the continuous triple.

\begin{equation}\label{eq:Tulczyjew_d-}
	\begin{tikzpicture}[baseline=(current  bounding  box.center)]
			\matrix (m) [matrix of math nodes,row sep=0.1em,column sep=3em,minimum width=2em]
			{	T^*(Q\times Q) & T^*Q\times T^*Q & T^*(Q^*\times Q)\\
				(q_0,q_1,-p_0,p_1) & \big((q_0,p_0),(q_1,p_1)\big) & (p_0,q_1,-q_0,-p_1)\\};
			\path[-stealth]
			(m-1-1) edge [bend left = 25] node [above] {$\gamma_Q^{d-}$} (m-1-3)
			(m-1-2) edge [] node [above] {$\kappa_Q^d$} (m-1-1)
			(m-1-2) edge [] node [above] {$\Omega_{d-}^\flat$} (m-1-3)
			(m-2-2) edge [|->] node [] {} (m-2-1)
			(m-2-2) edge [|->] node [] {} (m-2-3);
	\end{tikzpicture}
\end{equation}

\subsubsection{Discrete implicit Lagrangian systems}

\begin{definition}
The \emph{$(+)$-discrete induced Dirac structure} is defined as
\begin{equation*}
D^{d+} = \left\{(z_0,z_1,\alpha_{z^+})\mid z_0,z_1\in T^*Q,z^+=(q_0,p_1),\alpha_{z^+}=\Omega_{d+}^\flat(z_0,z_1)\right\}\subset (T^*Q\times T^*Q)\times T^*(Q\times Q^*).
\end{equation*}
where $z=(q,p)\in T^*Q\simeq Q\times Q^*$. Similarly, the \emph{$(-)$-discrete induced Dirac structure} is defined as
\begin{equation*}
D^{d-} = \left\{(z_0,z_1,\alpha_{z^-})\mid z_0,z_1\in T^*Q,z^-=(p_0,q_1),\alpha_{z^-}=\Omega_{d-}^\flat(z_0,z_1)\right\}\subset (T^*Q\times T^*Q)\times T^*(Q^*\times Q).
\end{equation*}
\end{definition}

\begin{remark}
A Dirac structure on a smooth manifold $M$ is a maximally isotropic subbundle of its Pontryagin bundle, $TM\oplus T^*M$ (see, for example, \cite{YoMa2006a}). Recall that the Whitney sum is a fibered sum over $M$, i.e., its elements are of the form $(v_x,\alpha_x)\in T_x M\times T_x^* M$ for each $x\in M$. Note that $\pi_1(z_0,z_1)=z_0\neq z^+=\pi_1(\alpha_{z^+})$, where $\pi_1$ denotes the projection onto the first component. Subsequently, $(z_0,z_1,\alpha_{z^+})$ are not elements of the Pontryagin bundle of $M=T^*Q$ and, hence, the $(+)$-discrete induced Dirac structure $D^{d+}$ defined above is not a Dirac structure on $T^*Q$ in the sense of Dirac structures on manifolds. The same holds for the $(-)$-discrete induced Dirac structure.
\end{remark}

Given a (possibly degenerate) discrete Lagrangian $L_d: Q\times Q\to\mathbb R$, its derivative is the map $dL_d: Q\times Q\to T^*(Q\times Q)$ given by
\begin{equation}\label{eq:dLdlocal}
dL_d(q_0,q_1)=(q_0,q_1,D_1 L_d(q_0,q_1),D_2 L_d(q_0,q_1)),\qquad(q_0,q_1)\in Q\times Q,
\end{equation}
where $D_i$ denotes the partial derivative with respect to the $i$-th component, $i=1,2$. The \emph{$(+)$-discrete Dirac differential} is the map
\begin{equation*}
\mathcal D^+ L_d=\gamma_Q^{d+}\circ dL_d: Q\times Q\longrightarrow T^*(Q\times Q^*).
\end{equation*}
Analogously, the \emph{$(-)$-discrete Dirac differential} is the map
\begin{equation*}
\mathcal D^- L_d=\gamma_Q^{d-}\circ dL_d: Q\times Q\longrightarrow T^*(Q^*\times Q).
\end{equation*}

At last, a \emph{discrete vector field} on $T^*Q$ is a sequence
\begin{equation*}
X_d=\left\{X_d^k=\big((q_k,p_k),(q_{k+1},p_{k+1})\big)\in T^*Q\times T^*Q\mid 0\leq k\leq N-1\right\}.
\end{equation*}

\begin{definition}[$(+)$-discrete implicit Lagrangian system]
A \emph{$(+)$-discrete implicit Lagrangian system}, also called a \emph{$(+)$-discrete Lagrange--Dirac system}, is a pair $(L_d,X_d)$, where $L_d$ is a discrete Lagrangian on $Q$ and $X_d$ is a discrete vector field on $T^*Q$, satisfying the \emph{$(+)$-discrete Lagrange--Dirac equations}, i.e.,
\begin{equation*}
\left(X_d^k,\mathcal D^+ L_d(q_k,q_k^+)\right)\in D^{d+},\qquad 0\leq k\leq N-1.
\end{equation*}
\end{definition}

The equations are locally given by
\begin{equation*}
q_k^+=q_{k+1},\quad p_{k+1}=D_2 L_d(q_k,q_k^+),\quad p_k=-D_1 L_d(q_k,q_k^+),\qquad 0\leq k\leq N-1.
\end{equation*}

\begin{definition}[$(-)$-discrete implicit Lagrangian system]
A \emph{$(-)$-discrete implicit Lagrangian system}, also called a \emph{$(-)$-discrete Lagrange--Dirac system}, is a pair $(L_d,X_d)$, where $L_d$ is a discrete Lagrangian on $Q$ and $X_d$ is a discrete vector field on $T^*Q$, satisfying the \emph{$(-)$-discrete Lagrange--Dirac equations}, i.e.,
\begin{equation*}
\left(X_d^k,\mathcal D^- L_d(q_{k+1}^-,q_{k+1})\right)\in D^{d-},\qquad 0\leq k\leq N-1.
\end{equation*}
\end{definition}

The equations are locally given by
\begin{equation*}
q_k=q_{k+1}^-,\quad p_{k+1}=D_2 L_d(q_{k+1}^-,q_{k+1}),\quad p_k=-D_1 L_d(q_{k+1}^-,q_{k+1}),\qquad 0\leq k\leq N-1.
\end{equation*}

Since we are considering the unconstrained case, both the $(+)$ and $(-)$-discrete Lagrange--Dirac equations are equivalent to the discrete Euler--Lagrange equations \cite{MaWe2001}.

\subsubsection{Variational structure for discrete implicit Lagrangian systems}

The discrete Lagrange--Dirac equations may also be obtained from a variational principle, as shown in \cite{leok2011}. As above, there exist two possible choices when performing discretization. Firstly, the \emph{$(+)$-discrete Pontryagin bundle} is the (vector) bundle over $Q$ given by
\begin{equation*}
(Q\times Q)\oplus_Q(Q\times Q^*)\simeq Q\times Q\times Q^*=\left\{(q_0,q_0^+,p_1)\mid q_0,q_0^+\in Q,~p_1\in Q^*\right\}.
\end{equation*}
Given a discrete Lagrangian $L_d: Q\times Q\to\mathbb R$, the \emph{$(+)$-discrete Lagrange--Pontryagin action} is the discrete augmented action defined as
\begin{equation}\label{eq:lagrangepontryaginaction}
\mathbb S_{L_d}^+\left[(q_k,q_k^+,p_{k+1})_{k=0}^N\right]=\sum_{k=0}^{N-1}\left(L_d(q_k,q_k^+)+\left\langle p_{k+1},q_{k+1}-q_k^+\right\rangle\right).
\end{equation}
The \emph{$(+)$-discrete Lagrange--Pontryagin principle},
\begin{equation}\label{eq:lagrangepontryaginprinciple}
\delta\mathbb S_{L_d}^+\left[(q_k,q_k^+,p_{k+1})_{k=0}^N\right]=0,
\end{equation}
is obtained by enforcing free variations $\left\{(\delta q_k,\delta q_k^+,\delta p_{k+1})\in Q\times Q\times Q^*\mid 0\leq k\leq N\right\}$ that vanish at the endpoints, i.e., $\delta q_0=\delta q_N=0$.

\begin{theorem}
The $(+)$-discrete Lagrange--Pontryagin principle is equivalent to the $(+)$-discrete Lagrange--Dirac equations.
\end{theorem}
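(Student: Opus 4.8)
The plan is to compute the variation of $\mathbb S_{L_d}^+$ directly and extract the discrete Lagrange--Dirac equations from the resulting conditions, then argue the converse. First I would take the expression \eqref{eq:lagrangepontryaginaction} and differentiate with respect to each of the free variables. Treating the sum term by term, the variations $\delta q_k^+$ appear in the $k$-th summand through $D_2 L_d(q_k,q_k^+)$ and through $-\langle p_{k+1},q_k^+\rangle$; setting the coefficient of $\delta q_k^+$ to zero yields $p_{k+1}=D_2 L_d(q_k,q_k^+)$ for $0\le k\le N-1$. The variations $\delta p_{k+1}$ appear only in $\langle p_{k+1},q_{k+1}-q_k^+\rangle$, so their coefficient gives the primary constraint $q_{k+1}=q_k^+$. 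Finally, the variations $\delta q_k$ for the interior indices $1\le k\le N-1$ appear in two consecutive summands: in the $k$-th through $D_1 L_d(q_k,q_k^+)$ and in the $(k-1)$-th through $-\langle p_k,q_{k-1}^+\rangle$ reindexed, i.e. through $+\langle p_k,q_k\rangle$; collecting these and using $\delta q_0=\delta q_N=0$ yields $p_k=-D_1 L_d(q_k,q_k^+)$ for $1\le k\le N-1$. I would be careful about the index bookkeeping at the ends: $p_0$ and $q_N^+$ do not get constrained by the principle, which matches the fact that the discrete Lagrange--Dirac equations are also only imposed for $0\le k\le N-1$.

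Having derived the three families of equations
\[
q_k^+=q_{k+1},\qquad p_{k+1}=D_2 L_d(q_k,q_k^+),\qquad p_k=-D_1 L_d(q_k,q_k^+),\qquad 0\le k\le N-1,
\]
I would then recall from the preliminaries that these are exactly the local expressions for $\bigl(X_d^k,\mathcal D^+ L_d(q_k,q_k^+)\bigr)\in D^{d+}$. Concretely, from \eqref{eq:dLdlocal} we have $dL_d(q_k,q_k^+)=(q_k,q_k^+,D_1 L_d(q_k,q_k^+),D_2 L_d(q_k,q_k^+))$, and applying $\gamma_Q^{d+}$ as in the $(+)$-discrete Tulczyjew triple \eqref{eq:Tulczyjew_d+} produces the point $(q_k,D_2 L_d(q_k,q_k^+),-D_1 L_d(q_k,q_k^+),q_k^+)\in T^*(Q\times Q^*)$. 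Matching this against the definition of $D^{d+}$ with $z^+=(q_k,p_{k+1})$ and $\alpha_{z^+}=\Omega_{d+}^\flat(z_0,z_1)=(q_k,p_{k+1},p_k,q_{k+1})$ forces precisely the boxed relations. This is the step where I would simply quote the local formula for the $(+)$-discrete Lagrange--Dirac equations already recorded in the excerpt, so the identification is immediate once the variational computation is done.

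For the converse direction, I would run the same computation backward: assuming $(L_d,X_d)$ satisfies the $(+)$-discrete Lagrange--Dirac equations, substitute the three relations into the first variation and check that every coefficient of $\delta q_k$, $\delta q_k^+$, $\delta p_{k+1}$ vanishes, using $\delta q_0=\delta q_N=0$ to kill the boundary terms. Since the first variation is linear in the variations and we have shown its vanishing is equivalent, summand by summand and index by index, to the three relations, the two formulations are equivalent.

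I expect the main obstacle to be purely organizational rather than conceptual: getting the summation indices and the shift between the $k$-th and $(k-1)$-th summands exactly right, and being careful that the term $\langle p_{k+1},q_{k+1}-q_k^+\rangle$ contributes a $+\langle p_{k+1},q_{k+1}\rangle$ to the variation at index $k+1$ and a $-\langle p_{k+1},q_k^+\rangle$ at index $k$. A minor subtlety worth a sentence is that the endpoint variables $p_0$ and $q_N^+$ are genuinely unconstrained by the principle, and one should note that the discrete vector field $X_d$ has components $X_d^k$ only for $0\le k\le N-1$, so there is no mismatch. Everything else is a routine extraction of coefficients from a finite sum.
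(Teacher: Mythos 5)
Your overall strategy---computing the first variation of \eqref{eq:lagrangepontryaginaction} term by term and matching the resulting coefficient equations against the local description of $D^{d+}$ via \eqref{eq:dLdlocal} and the triple \eqref{eq:Tulczyjew_d+}---is the standard argument; the paper itself only recalls this theorem from \cite{leok2011} without proof, and its own proof of the reduced analogue (the proposition on the reduced $(+)$-discrete Lagrange--Pontryagin principle) proceeds by exactly this kind of coefficient extraction. Your computations for the coefficients of $\delta q_k^+$, $\delta p_{k+1}$, and the interior $\delta q_k$ are correct.

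The one place the write-up does not hold together is the boundary bookkeeping that you flag yourself. The principle yields $p_k=-D_1L_d(q_k,q_k^+)$ only for $1\le k\le N-1$ (because $\delta q_0=0$), yet you then assert all three families for $0\le k\le N-1$ and justify this by saying that $p_0$ is not constrained by the principle, ``which matches the fact that the discrete Lagrange--Dirac equations are also only imposed for $0\le k\le N-1$.'' That reconciliation is backwards: the $k=0$ instance of the Lagrange--Dirac equations is precisely $p_0=-D_1L_d(q_0,q_0^+)$, so it \emph{does} constrain $p_0$, whereas $p_0$ never appears in $\mathbb S^+_{L_d}$ at all (the action only involves $p_1,\dots,p_N$). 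As literally stated, a variational solution with an arbitrary $p_0$ attached would fail the $k=0$ Dirac equation, so the two sets of conditions are not yet identical. The fix is short and standard: since $p_0$ is not part of the Pontryagin data $(q_k,q_k^+,p_{k+1})$, the $k=0$ equation should be read as the discrete Legendre transform defining the initial momentum of $X_d^0$, i.e., one sets $p_0:=-D_1L_d(q_0,q_0^+)$; with this convention the passage from variational solutions to Dirac solutions, and the converse substitution you describe, are inverse to each other. State that convention explicitly (and restrict the third family to $1\le k\le N-1$ as the output of the variation); with that amendment your argument is complete.
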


On the other hand, the \emph{$(-)$-discrete Pontryagin bundle} is given by
\begin{equation*}
(Q\times Q)\oplus_Q(Q^*\times Q)\simeq Q\times Q^*\times Q=\left\{(q_1^-,p_0,q_1)\mid q_1^-,q_1\in Q,~p_0\in Q^*\right\}.
\end{equation*}
The \emph{$(-)$-discrete Lagrange--Pontryagin action} is defined as
\begin{equation}\label{eq:lagrangepontryaginactionminus}
\mathbb S_{L_d}^-\left[(q_{k+1}^-,p_k,q_{k+1})_{k=0}^N\right]=\sum_{k=0}^{N-1}\left(L_d(q_{k+1}^-,q_{k+1})+\left\langle p_k,q_k-q_{k+1}^-\right\rangle\right).
\end{equation}
The \emph{$(-)$-discrete Lagrange--Pontryagin principle},
\begin{equation}\label{eq:lagrangepontryaginprincipleminus}
\delta\mathbb S_{L_d}^-\left[(q_{k+1}^-,p_k,q_{k+1})_{k=0}^N\right]=0,
\end{equation}
is obtained by enforcing free variations $\left\{(\delta q_{k+1}^-,\delta p_k,\delta q_{k+1})\in Q\times Q\times Q^*\mid 0\leq k\leq N\right\}$ that vanish at the endpoints, i.e., $\delta q_0=\delta q_N=0$.

\begin{theorem}
The $(-)$-discrete Lagrange--Pontryagin principle is equivalent to the $(-)$-discrete Lagrange--Dirac equations.
\end{theorem}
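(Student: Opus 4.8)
The plan is to establish the equivalence by computing the first variation of the $(-)$-discrete Lagrange--Pontryagin action directly, in exact parallel with the argument for the $(+)$-case. I would take the admissible variations to be sequences $(\delta q_{k+1}^-,\delta p_k,\delta q_{k+1})$ subject only to $\delta q_0 = \delta q_N = 0$, and differentiate \eqref{eq:lagrangepontryaginactionminus} summand by summand: the summand indexed by $k$ contributes
\[
\langle D_1 L_d(q_{k+1}^-,q_{k+1}),\delta q_{k+1}^-\rangle + \langle D_2 L_d(q_{k+1}^-,q_{k+1}),\delta q_{k+1}\rangle + \langle \delta p_k,\, q_k - q_{k+1}^-\rangle + \langle p_k,\, \delta q_k - \delta q_{k+1}^-\rangle .
\]
The only bookkeeping that needs care is the re-indexing of the terms involving the ``real'' nodes: a node $q_j$ with $1 \le j \le N-1$ appears in two summands --- once through the second slot $q_{k+1}$ of $L_d$ with $k = j-1$, and once through the slot $q_k$ of the pairing with $k = j$ --- so its total coefficient in $\delta\mathbb S_{L_d}^-$ must be assembled from both occurrences, while the endpoint conditions $\delta q_0 = \delta q_N = 0$ annihilate the leftover boundary contributions at $j=0$ and $j=N$.

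After regrouping, $\delta\mathbb S_{L_d}^-$ becomes a sum of pairings against the independent free variations $\{\delta p_k\}_{k=0}^{N-1}$, $\{\delta q_{k+1}^-\}_{k=0}^{N-1}$ and $\{\delta q_j\}_{j=1}^{N-1}$, so the $(-)$-discrete Lagrange--Pontryagin principle \eqref{eq:lagrangepontryaginprincipleminus} holds if and only if each coefficient vanishes. The coefficient of $\delta p_k$ yields $q_k = q_{k+1}^-$; the coefficient of $\delta q_{k+1}^-$ yields the relation between $p_k$ and $D_1 L_d(q_{k+1}^-,q_{k+1})$; and the coefficient of $\delta q_j$ yields the relation between $p_{k+1}$ and $D_2 L_d(q_{k+1}^-,q_{k+1})$. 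These are precisely the local equations displayed immediately after the definition of a $(-)$-discrete implicit Lagrangian system. Unwinding the definitions --- writing out $\mathcal D^- L_d = \gamma_Q^{d-}\circ dL_d$ with $\gamma_Q^{d-} = \Omega_{d-}^\flat\circ(\kappa_Q^d)^{-1}$, and $D^{d-}$ in terms of $\Omega_{d-}^\flat$ --- shows that this system of local equations is exactly the condition $(X_d^k,\mathcal D^- L_d(q_{k+1}^-,q_{k+1})) \in D^{d-}$ for $0 \le k \le N-1$, which gives the ``only if'' direction. For the converse, I would substitute a discrete vector field $X_d$ satisfying these local equations into the regrouped expression for $\delta\mathbb S_{L_d}^-$ and observe that every coefficient vanishes, so that \eqref{eq:lagrangepontryaginprincipleminus} is satisfied.

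I expect the main obstacle to be organizational rather than conceptual: keeping the index shifts and the sign conventions straight so that the coefficients extracted from $\delta\mathbb S_{L_d}^-$ line up with the local form of the $(-)$-discrete Lagrange--Dirac equations and, through them, with the defining relations of $D^{d-}$ and $\mathcal D^- L_d$ (the $(-)$-discrete Tulczyjew triple and the attendant Type 3 generating function $S_3\colon Q^*\times Q\to\mathbb R$ are set up precisely so that these relations hold by design, so the nontrivial content is just matching them up carefully). As a sanity check, one may also eliminate $p_k$ and $q_{k+1}^-$ through their own (purely algebraic) stationarity conditions, which collapses $\mathbb S_{L_d}^-$ to the ordinary discrete action $\sum_k L_d(q_k,q_{k+1})$; this recovers the discrete Euler--Lagrange equations and is consistent with the earlier remark that in the unconstrained case the $(+)$- and $(-)$-discrete Lagrange--Dirac equations both reduce to them.
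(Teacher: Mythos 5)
Your overall strategy---computing the first variation of $\mathbb S_{L_d}^-$ summand by summand, regrouping by the independent variations $\delta p_k$, $\delta q_{k+1}^-$, $\delta q_j$, and matching coefficients with the local form of the $(-)$-discrete Lagrange--Dirac equations---is the expected argument (the paper itself does not write out a proof, so a direct variation mirroring the $(+)$-case is exactly what is called for). However, the central assertion of your second paragraph fails by a sign if carried out literally with \eqref{eq:lagrangepontryaginactionminus} as printed. The coefficient of $\delta q_{k+1}^-$ is $D_1 L_d(q_{k+1}^-,q_{k+1})-p_k$, and the coefficient of an interior node variation $\delta q_j$ is $D_2 L_d(q_j^-,q_j)+p_j$, so stationarity gives
\begin{equation*}
q_k=q_{k+1}^-,\qquad p_k=+D_1 L_d(q_{k+1}^-,q_{k+1}),\qquad p_{k+1}=-D_2 L_d(q_{k+1}^-,q_{k+1}),
\end{equation*}
which are \emph{not} the displayed $(-)$-discrete Lagrange--Dirac equations ($p_k=-D_1L_d$, $p_{k+1}=+D_2L_d$) but their image under $p\mapsto -p$. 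The displayed local equations cannot be the ones to adjust: unwinding $\mathcal D^-L_d=\gamma_Q^{d-}\circ dL_d$ and $D^{d-}$ from \eqref{eq:Tulczyjew_d-}, the condition $\left(X_d^k,\mathcal D^-L_d(q_{k+1}^-,q_{k+1})\right)\in D^{d-}$ unpacks precisely to $q_k=q_{k+1}^-$, $p_k=-D_1L_d$, $p_{k+1}=D_2L_d$, consistent with the $(-)$-discrete Legendre transform. So the step ``these are precisely the local equations displayed immediately after the definition'' is the step that would fail; you flagged sign bookkeeping as the main risk but did not actually resolve it.

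To close the gap you must observe that the multiplier term in the $(-)$-action has to be $\left\langle p_k,\,q_{k+1}^--q_k\right\rangle$ rather than $\left\langle p_k,\,q_k-q_{k+1}^-\right\rangle$ (equivalently, replace $p_k$ by $-p_k$ throughout); note that, unlike the $(+)$-case, the consistent pairing is (internal point) minus (node), which is presumably how the sign slipped into the printed formula. With that convention the variations of $p_k$, $q_{k+1}^-$ and the interior nodes give $q_{k+1}^-=q_k$, $p_k=-D_1L_d(q_{k+1}^-,q_{k+1})$ and $p_j=D_2L_d(q_j^-,q_j)$, which do coincide with the $(-)$-discrete Lagrange--Dirac equations, and your identification through $\gamma_Q^{d-}$ and $D^{d-}$ then yields both directions of the equivalence. (A minor caveat worth one sentence in a full write-up: since $\delta q_N=0$ and $p_N$ does not appear in the action, the $k=N-1$ instance of $p_{k+1}=D_2L_d$ is not produced by the principle and serves as the definition of $p_N$, exactly as $p_0=-D_1L_d(q_0,q_0^+)$ does in the $(+)$-case.) Finally, be aware that your sanity check---eliminating $p_k$ and $q_{k+1}^-$ to recover the discrete Euler--Lagrange equations---is insensitive to the sign of the momenta and therefore cannot detect this discrepancy; only the explicit coefficient matching does.
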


\section{Reduction of the discrete Dirac structure}\label{sec:reductiondiracstrucure}

Let $Q$ be a vector space and $G\subset Q$ be a vector subspace acting by addition on $Q$. In this section we will show that the discrete induced Dirac structure on $Q$ is $G$-invariant and we will reduce it to the corresponding quotient.

\subsection{Trivializations of the tangent bundle and cotangent bundles}\label{sec:trivializationsTQ}

Let $\omega_d: Q\times Q\to G$ be a discrete connection form on $\pi_{Q,\Sigma}$. We define a right trivialization of $TQ=Q\times Q$ as 
\begin{equation*}
\lambda_d: Q\times Q\to Q\times(\Sigma\times G),\qquad(q_0,q_1)\mapsto\left(q_0,[q_1],\omega_d(q_0,q_1)\right).
\end{equation*}
Observe that it is a linear map with the vector structure of $Q\times Q$ given by the direct sum. Using \cite[Remark 4.3]{fernandez2013} it is straightforward to check that $\lambda_d$ is an isomorphism (of bundles over $T\Sigma=\Sigma\times\Sigma$) with inverse given by
\begin{equation*}
\lambda_d^{-1}: Q\times(\Sigma\times G)\to Q\times Q,\qquad(q_0,x_1,g_1)\mapsto\left(q_0,g_1+\overline h_d(q_0,x_1)\right).
\end{equation*}

The action of $G$ on $Q\times Q$ induces an action on $Q\times(\Sigma\times G)$ by means of $\lambda_d$. Using the equivariance of $\omega_d$, such an action is given by
\begin{equation*}
g\cdot(q_0,x_1,g_1)=(g+q_0,x_1,g_1),\qquad g\in G,\quad(q_0,x_1,g_1)\in Q\times(\Sigma\times G).
\end{equation*}
By construction, $\lambda_d$ is equivariant, i.e., $\lambda_d\circ\Phi_g^{Q\times Q}=\Phi_g^{Q\times(\Sigma\times G)}\circ\lambda_d$ for each $g\in G$. Subsequently, it descends to a left trivialization of $(Q\times Q)/G$, i.e., an isomorphism of the corresponding quotients, 
\begin{equation*}
[\lambda_d]: (Q\times Q)/G \longrightarrow (Q\times(\Sigma\times G))/G.
\end{equation*}
Furthermore, note that $(Q\times(\Sigma\times G))/G\simeq\Sigma\times(\Sigma\times G)$ via the isomorphism
\begin{equation}\label{eq:quotientQQ}
[q_0,x_1,g_1]\longmapsto\left([q_0],x_1,g_1\right).
\end{equation}

\begin{remark}\label{remark:lambdaminus}
For the $(-)$-case, it will be useful to trivialize the first factor instead, i.e., we consider the map
\begin{equation*}
\tilde\lambda_d:Q\times Q\to(\Sigma\times G)\times Q,\qquad(q_0,q_1)\mapsto\left([q_0],\omega_d(q_1,q_0),q_1\right).
\end{equation*}
Of course, all the computations performed for $\lambda_d$ are also valid for $\tilde\lambda_d$ by exchanging the order of the factors.
\end{remark}

On the other hand, we define a right trivialization of $T^*Q\simeq Q\times Q^*$ as
\begin{equation*}
\hat\lambda_d: Q\times Q^*\to Q\times(\Sigma^*\times\mathfrak g^*),\qquad(q_0,p_0)\mapsto\left(q_0,\pi_2\circ\overline h_d^*(p_0),J(q_0,p_0)\right).
\end{equation*}
Again, this trivialization is a linear map with the vector structure on $Q\times Q^*$ given by the direct sum.

\begin{remark}
Observe that the adjoint of the projection $\pi_{Q,\Sigma}: Q\to \Sigma$ yields a canonical embedding of $\Sigma^*$ into $Q^*$. However, there is not a canonical linear projection of $Q^*$ onto $\Sigma^*$. The discrete principal connection gives a choice of this projection via the adjoint of the horizontal lift, thus yielding the following split
\begin{equation*}
Q^*=\textrm{\normalfont im}(\pi_{Q,\Sigma}^*)\oplus\ker\left(\pi_2\circ\overline h_d^*\right).
\end{equation*}
In addition, $\langle\pi_{Q,\Sigma}^*(w),\xi_Q(q_0)\rangle=0$ for every $w\in\Sigma^*$ and $\xi\in\mathfrak g$, so we have
\begin{equation*}
\textrm{\normalfont im}(\pi_{Q,\Sigma}^*)=\{\xi_Q(q_0)\in Q\mid\xi\in\mathfrak g\}^0,
\end{equation*}
where the superscript $0$ denotes the annihilator.
\end{remark}

Using the previous remark, it can be seen that the inverse of $\hat\lambda_d$ is given by
\begin{equation*}
\hat\lambda_d^{-1}: Q\times(\Sigma^*\times\mathfrak g^*)\to Q\times Q^*,\qquad(q_0,w_0,\mu_0)\mapsto(q_0,\pi_{Q,\Sigma}^*(w_0)+(\mu_0)_Q(q_0)),
\end{equation*}
where $(\mu_0)_Q(q_0)\in Q^*$ is implicitly defined by the relations $\langle(\mu_0)_Q(q_0),\xi_Q(q_0)\rangle=\langle\mu_0,\xi\rangle$ for each $\xi\in\mathfrak g$ and $\left(\pi_2\circ\overline h_d^*\right)((\mu_0)_Q(q_0))=0$.

It is easy to check that the action of $G$ on $Q\times(\Sigma^*\times\mathfrak g^*)$ induced by $\hat\lambda_d$ is given by
\begin{equation*}
g\cdot(q_0,w_0,\mu_0)=(g+q_0,w_0,\mu_0),\qquad g\in G,\quad(q_0,w_0,\mu_0)\in Q\times(\Sigma^*\times\mathfrak g^*).
\end{equation*}

\begin{remark}\label{remark:hatlambdaminus}
For the $(-)$-case, it will be useful to trivialize the first factor instead. To that end, we consider the map
\begin{equation*}
\check\lambda_d:Q^*\times Q\to(\Sigma^*\times\mathfrak g^*)\times Q,\qquad(p_0,q_0)\mapsto\left(\pi_2\circ\overline h_d^*(p_0),J(q_0,p_0),q_0\right).
\end{equation*}
As for $\tilde\lambda_d$, all the computations performed for $\hat\lambda_d$ are also valid for $\check\lambda_d$ by exchanging the order of the factors.
\end{remark}

Since $Q\times Q$, $Q\times(\Sigma\times G)$, $Q\times Q^*$ and $Q\times(\Sigma^*\times\mathfrak g^*)$ are vector spaces, we may identify their tangent and cotangent bundles as in \eqref{eq:tangentidentification}, e.g., $T^*(Q\times Q)=Q\times Q\times Q^*\times Q^*$. Furthermore, we may consider the diagonal actions of $G$ on $T(Q\times Q)$ and $T(Q\times Q^*)$, as in \eqref{eq:diagonal action}. Likewise, on $T^*(Q\times Q)$ and $T^*(Q\times Q^*)$ we consider the cotangent lift of the action, as in \eqref{eq:dualaction}. In turn, these actions may be transferred to the corresponding (co)tangent bundles of $Q\times(\Sigma\times G)$ and $Q\times(\Sigma^*\times\mathfrak g^*)$ using the maps $\lambda_d$ and $\hat\lambda_d$, and their adjoint maps, accordingly. Note that since these maps are linear, their derivatives are the maps themselves. This can be done because $\lambda_d$ and $\hat\lambda_d$ are linear isomorphisms. For instance, the action of $G$ on $T^*(Q\times(\Sigma^*\times\mathfrak g^*))$ is induced from the action on $T^*(Q\times Q^*)$ using the map $\hat\lambda_d$ and its adjoint, as well as their inverses.

\subsection{Local expressions of the trivializations and quotients}\label{sec:quotientspaces}

In order to study the explicit expression of the maps introduced above, we choose a trivialization $Q=\Sigma\times G$ of $\pi_{Q,\Sigma}$ as in Section \ref{sec:localdiscreteconnection}. Using the local expression of the discrete connection, we have
\begin{equation}\label{eq:lambdadlocal}
\lambda_d(q_0,q_1)=(q_0,x_1,g_1-\textrm{\normalfont h}_d(q_0,x_1)),\qquad(q_0,q_1)\in Q\times Q.
\end{equation}
Hence,
\begin{equation}\label{eq:invlambdadlocal}
\lambda_d^{-1}(q_0,x_1,g_1)=(q_0,(x_1,g_1+\textrm{\normalfont h}_d(q_0,x_1))),\qquad(q_0,x_1,g_1)\in Q\times(\Sigma\times G).
\end{equation}
Its adjoint is given by
\begin{equation}\label{eq:adjointlambdadlocal}
\lambda_d^*(p_0,w_1,r_1)=\left(p_0-\langle r_1,\textrm{\normalfont h}_d(\cdot,0)\rangle,(w_1-\langle r_1,\textrm{\normalfont h}_d(0,\cdot)\rangle,r_1)\right),\quad(p_0,w_1,r_1)\in Q^*\times(\Sigma^*\times G^*).
\end{equation}
Hence,
\begin{equation}\label{eq:invadjointlambdadlocal}
\left(\lambda_d^*\right)^{-1}(p_0,p_1)=\left(p_0+\langle r_1,\textrm{\normalfont h}_d(\cdot,0)\rangle,w_1+\langle r_1,\textrm{\normalfont h}_d(0,\cdot)\rangle,r_1\right),\qquad (p_0,p_1)\in Q^*\times Q^*.
\end{equation}
Now we perform analogous computations for $\hat\lambda_d$, where we used \eqref{eq:adjointhorizontalliftlocal} and Lemma \ref{lemma:Jlineartrivial},
\begin{equation}\label{eq:hatlambdadlocal}
\hat\lambda_d(q_0,p_0)=\left(q_0,w_0+\textrm{\normalfont h}_{d,\Sigma}^*(r_0),r_0\right),\qquad(q_0,p_0)\in Q\times Q^*.
\end{equation}
The inverse is given by
\begin{equation}\label{eq:invhatlambdadlocal}
\hat\lambda_d^{-1}(q_0,w_0,\mu_0)=\big(q_0,(w_0-\textrm{\normalfont h}_{d,\Sigma}^*(\mu_0),\mu_0)\big),\qquad(q_0,w_0,\mu_0)\in Q\times(\Sigma^*\times\mathfrak g^*).
\end{equation}
Likewise, its adjoint is given by
\begin{equation}\label{eq:adjointhatlambdadlocal}
\hat\lambda_d^*(p_1,x_1,\xi_1)=\big(p_1,(x_1,\textrm{\normalfont h}_{d,\Sigma}(x_1)+\xi_1)\big),\qquad (p_1,x_1,\xi_1)\in Q^*\times(\Sigma\times\mathfrak g).
\end{equation}
At last, we have
\begin{equation}\label{eq:invadjointhatlambdadlocal}
\left(\hat\lambda_d^*\right)^{-1}(p_1,q_1)=\left(p_1,x_1,g_1-\textrm{\normalfont h}_{d,\Sigma}(x_1)\right),\qquad (p_1,q_1)\in Q^*\times Q.
\end{equation}

On the other hand, observe that the group action is locally given by
\begin{equation*}
g+q_0=(x_0,g+g_0),\qquad q_0=(x_0,g_0)\in Q,\quad g\in G,
\end{equation*}
where we identify $G\simeq\{0\}\times G\subset Q$. Using the local expression for the trivializations and their adjoints, we obtain the local expression for the actions of $G$ on the (co)tangent bundles of $Q\times(\Sigma\times G)$ and $Q\times(\Sigma^*\times\mathfrak g^*)$.

\begin{proposition}
The action of $G$ on $T^*(Q\times(\Sigma^*\times\mathfrak g^*))$ is locally given by
\begin{equation*}
g\cdot\big((q_0,w_0,\mu_0),(p_1,x_1,\xi_1)\big)=\big((g+q_0,w_0,\mu_0),(p_1,x_1,g+\xi_1)\big),
\end{equation*}
for each $g\in G$, $(q_0,w_0,\mu_0)\in Q\times(\Sigma^*\times\mathfrak g^*)$ and $(p_1,x_1,\xi_1)\in Q^*\times(\Sigma\times\mathfrak g)$. 

Analogously, the action on $T(Q\times(\Sigma^*\times\mathfrak g^*))$ is locally given by
\begin{equation*}
g\cdot\big((q_0,w_0,\mu_0),(q_1,w_1,\mu_1)\big)=\big((g+q_0,w_0,\mu_0),(g+q_1,w_1,\mu_1)\big),
\end{equation*}
for each $g\in G$ and $(q_0,w_0,\mu_0),(q_1,w_1,\mu_1)\in Q\times(\Sigma^*\times\mathfrak g^*)$.
\end{proposition}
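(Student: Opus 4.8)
The plan is to compute each action directly by transporting the standard diagonal/cotangent-lift actions through the linear isomorphisms $\hat\lambda_d$ and $\hat\lambda_d^{\times}:=\hat\lambda_d\times\hat\lambda_d$ (and their adjoints), using the local formulas \eqref{eq:hatlambdadlocal}–\eqref{eq:invadjointhatlambdadlocal} already established. Since all maps in sight are linear, their derivatives are themselves, so the tangent-lift of $\hat\lambda_d$ is just $\hat\lambda_d\times\hat\lambda_d$ acting on $T(Q\times Q^*)=(Q\times Q^*)\times(Q\times Q^*)$, and the cotangent-lift is $(\hat\lambda_d^*)^{-1}$ applied fiberwise on $T^*(Q\times Q^*)=(Q\times Q^*)\times(Q^*\times Q)$; this is the reduction of the general recipe from Section~\ref{sec:trivializationsTQ} to the concrete setting. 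So the two formulas are obtained by pre- and post-composing the known action of $G$ on $T(Q\times Q^*)$ (resp.\ $T^*(Q\times Q^*)$) with the appropriate instances of these linear maps.

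First I would treat the tangent case, as it is the simpler of the two. By \eqref{eq:tangentidentification}, $T(Q\times Q^*)\simeq(Q\times Q^*)\times(Q\times Q^*)$, and by \eqref{eq:diagonal action} together with the local form $g+q_0=(x_0,g+g_0)$ of the additive action, the diagonal action on $T(Q\times Q^*)$ sends $\big((q_0,p_0),(q_1,p_1)\big)$ to $\big((g+q_0,p_0),(g+q_1,p_1)\big)$ (the covector slots are untouched, cf.\ \eqref{eq:dualaction}). Applying $\hat\lambda_d$ in each slot via \eqref{eq:hatlambdadlocal}, and using that $\hat\lambda_d$ is $G$-equivariant for the stated action $g\cdot(q_0,w_0,\mu_0)=(g+q_0,w_0,\mu_0)$ on $Q\times(\Sigma^*\times\mathfrak g^*)$ — which was already recorded right after \eqref{eq:invhatlambdadlocal} — one reads off immediately that the transferred action on $T(Q\times(\Sigma^*\times\mathfrak g^*))$ is $g\cdot\big((q_0,w_0,\mu_0),(q_1,w_1,\mu_1)\big)=\big((g+q_0,w_0,\mu_0),(g+q_1,w_1,\mu_1)\big)$, as claimed. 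There is essentially nothing to check beyond equivariance of $\hat\lambda_d$, which is in hand.

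For the cotangent case, $T^*(Q\times Q^*)\simeq(Q\times Q^*)\times(Q^*\times Q)$ and the relevant action is the cotangent lift, which on the base point acts as $g+q_0=(x_0,g+g_0)$ and on the fiber acts by the inverse-adjoint of the base translation — but translation by $g$ has derivative the identity, so the fiber part of the cotangent-lifted action on $T^*(Q\times Q^*)$ is trivial, i.e.\ $g\cdot\big((q_0,p_0),(p_1,q_1)\big)=\big((g+q_0,p_0),(p_1,q_1)\big)$ in the natural coordinates. To transfer this to $T^*(Q\times(\Sigma^*\times\mathfrak g^*))$ I would use $\hat\lambda_d$ on the base slot and $(\hat\lambda_d^*)^{-1}$ on the fiber slot, reading the base covector $p_1\in Q^*$ off via \eqref{eq:invadjointhatlambdadlocal} as $(p_1,x_1,g_1-\mathrm h_{d,\Sigma}(x_1))$ where $q_1=(x_1,g_1)$. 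The one nonroutine point — and the step I expect to be the main obstacle — is tracking how the $G$-action, trivial in the $(q_1,p_1)$ coordinates, becomes nontrivial in the $(p_1,x_1,\xi_1)$ coordinates: since $q_1=(x_1,g_1)\mapsto g+q_1=(x_1,g+g_1)$ leaves $x_1$ fixed but sends $g_1\mapsto g+g_1$, the third slot $\xi_1=g_1-\mathrm h_{d,\Sigma}(x_1)$ transforms as $\xi_1\mapsto g+g_1-\mathrm h_{d,\Sigma}(x_1)=g+\xi_1$, which is exactly the advertised $g\cdot(p_1,x_1,\xi_1)=(p_1,x_1,g+\xi_1)$. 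Combining this with the base-slot computation (identical to the tangent case) yields $g\cdot\big((q_0,w_0,\mu_0),(p_1,x_1,\xi_1)\big)=\big((g+q_0,w_0,\mu_0),(p_1,x_1,g+\xi_1)\big)$. Finally I would note consistency: the $\mathfrak g$-component of the base covector (the slot $\mu_0$, which is $J(q_0,p_0)=r_0$ by Lemma~\ref{lemma:Jlineartrivial}) is $G$-invariant, as it must be since the momentum map is equivariant and $\mathfrak g^*$ carries the trivial coadjoint action for an abelian $G$; this is a useful sanity check rather than part of the argument proper.
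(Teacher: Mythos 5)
Your tangent-bundle argument and your final cotangent formula agree with the paper, but there is a genuine inconsistency at precisely the step you flag as the crux: the identification of the $G$-action on $T^*(Q\times Q^*)$ that is to be transferred through $\big(\hat\lambda_d,(\hat\lambda_d^*)^{-1}\big)$. You first assert that this action is the literal cotangent lift of the translation action \eqref{eq:dualaction} on the base $Q\times Q^*$, hence fiber-trivial, $g\cdot\big((q_0,p_0),(p_1,q_1)\big)=\big((g+q_0,p_0),(p_1,q_1)\big)$; one sentence later you transform the fiber by $q_1\mapsto g+q_1$ in order to get $\xi_1\mapsto g+\xi_1$. These two statements are incompatible. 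If you follow your stated premise through, the trivialized fiber slot comes out unchanged and you would ``prove'' $g\cdot\big((q_0,w_0,\mu_0),(p_1,x_1,\xi_1)\big)=\big((g+q_0,w_0,\mu_0),(p_1,x_1,\xi_1)\big)$, contradicting the proposition; note also that under the fiber-trivial action $D^{d+}$ would fail to be $G$-invariant, so it cannot be the action intended in this section.

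The action the paper actually uses on $T^*(Q\times Q^*)$ is not the cotangent lift of the translation on $Q\times Q^*$: it is the action obtained by transporting the diagonal action on $T^*Q\times T^*Q$ through $\Omega_{d+}^\flat$ in \eqref{eq:Tulczyjew_d+}, i.e.\ every $Q$-slot is translated and every $Q^*$-slot is fixed, $\Phi_g^{T^*(Q\times Q^*)}(q_0,p_1,p_0,q_1)=(g+q_0,p_1,p_0,g+q_1)$, exactly as recorded in the proof that $D^{d+}$ is $G$-invariant. Its fiber part is affine rather than linear, so it is not the cotangent lift of any base action, and this is the one nonroutine fact your write-up needed to pin down. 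Once your premise is replaced by this action, your computation $\xi_1=g_1-\mathrm{h}_{d,\Sigma}(x_1)\mapsto g+g_1-\mathrm{h}_{d,\Sigma}(x_1)=g+\xi_1$ is precisely the middle step of the paper's proof, which conjugates the action by the local maps \eqref{eq:invhatlambdadlocal}, \eqref{eq:adjointhatlambdadlocal} and \eqref{eq:hatlambdadlocal}, \eqref{eq:invadjointhatlambdadlocal}; the tangent case is, as you say, immediate from the equivariance of $\hat\lambda_d$ and the diagonal action on $T(Q\times Q^*)$.
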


\begin{proof}
As explained at the end of the previous section, the action of $G$ on $T^*(Q\times(\Sigma^*\times\mathfrak g^*))$ is induced from the action on $T^*(Q\times Q^*)$. Namely,
\begin{equation*}
\begin{tikzpicture}
\matrix (m) [matrix of math nodes,row sep=1em,column sep=3em,minimum width=2em]
{\big((q_0,w_0,\mu_0),(p_1,x_1,\xi_1)\big)\\
\Big(\big(q_0,(w_0-\textrm{\normalfont h}_{d,\Sigma}^*(\mu_0),\mu_0)\big),\big(p_1,(x_1,\textrm{\normalfont h}_{d,\Sigma}(x_1)+\xi_1)\big)\Big)\\
\Big(\big(g+q_0,(w_0-\textrm{\normalfont h}_{d,\Sigma}^*(\mu_0),\mu_0)\big),\big(p_1,(x_1,g+\textrm{\normalfont h}_{d,\Sigma}(x_1)+\xi_1)\big)\Big)\\
\Big(\big(g+q_0,w_0,\mu_0\big),\big(p_1,x_1,g+\xi_1\big)\Big).\\};
\path[-stealth]
(m-1-1) edge [|->] node [right] {\scriptsize\eqref{eq:invhatlambdadlocal},\;\eqref{eq:adjointhatlambdadlocal}} (m-2-1)
(m-2-1) edge [|->] node [right] {\scriptsize$G$ action} (m-3-1)
(m-3-1) edge [|->] node [right] {\scriptsize\eqref{eq:hatlambdadlocal},\;\eqref{eq:invadjointhatlambdadlocal}} (m-4-1);
\end{tikzpicture}
\end{equation*}
The computation for $T(Q\times(\Sigma^*\times\mathfrak g^*))$ is analogous.
\end{proof}

To conclude, by the previous proposition, we may define the following local isomorphisms,
\begin{equation}
\begin{array}{ccc}\label{eq:quotienttangentbundle}
T(Q\times(\Sigma^*\times\mathfrak g^*))/G & \simeq & \Sigma\times(\Sigma^*\times\mathfrak g^*)\times Q\times(\Sigma^*\times\mathfrak g^*),\\
\left[(q_0,w_0,\mu_0),(q_1,w_1,\mu_1)\right] & \mapsto & \left(x_0,w_0,\mu_0,-g_0+q_1,w_1,\mu_1\right),
\end{array}
\end{equation}
and
\begin{equation}\label{eq:quotientcotangentbundle}
\begin{array}{ccc}
T^*(Q\times(\Sigma^*\times\mathfrak g^*))/G & \simeq & \Sigma\times(\Sigma^*\times\mathfrak g^*)\times Q^*\times(\Sigma\times\mathfrak g),\\
\left[(q_0,w_0,\mu_0),(p_1,x_1,\xi_1)\right] & \mapsto & \left(x_0,w_0,\mu_0,p_1,x_1,-g_0+\xi_1\right).
\end{array}
\end{equation}

\begin{remark}\label{remark:isomorphismtangentcotangentminus}
By using $\check\lambda_d$ instead of $\hat\lambda_d$ (recall Remark \ref{remark:hatlambdaminus}), we arrive at analogous results. In particular, we have the isomorphisms
\begin{equation*}
T(Q\times(\Sigma^*\times\mathfrak g^*))/G\simeq Q\times(\Sigma^*\times\mathfrak g^*)\times\Sigma\times(\Sigma^*\times\mathfrak g^*),
\end{equation*}
and
\begin{equation*}
T^*((\Sigma^*\times\mathfrak g^*)\times Q)/G\simeq(\Sigma^*\times\mathfrak g^*)\times \Sigma\times(\Sigma\times\mathfrak g)\times Q^*.
\end{equation*}
\end{remark}

\subsection{Invariance of the discrete induced Dirac structure}

The definition of invariance for (continuum) Dirac structures \cite{marsden2009} can be extended to the discrete setting. More specifically, $D^{d+}$ is said to be \emph{$G$-invariant} if for each $g\in G$ and $(z_0,z_1,\alpha_{z^+})\in D^{d+}$ we have
\begin{equation*}
\left(\Phi_g^{T^*Q}(z_0),\Phi_g^{T^*Q}(z_1),\Phi_g^{T^*(Q\times Q^*)}(\alpha_{z+})\right)\in D^{d+}.
\end{equation*}
Similarly, $D^{d-}$ is said to be \emph{$G$-invariant} if for each $g\in G$ and $(z_0,z_1,\alpha_{z^-})\in D^{d-}$ we have
\begin{equation*}
\left(\Phi_g^{T^*Q}(z_0),\Phi_g^{T^*Q}(z_1),\Phi_g^{T^*(Q^*\times Q)}(\alpha_{z-})\right)\in D^{d-}.
\end{equation*}

\begin{proposition}
The discrete induced Dirac structures $D^{d+}$ and $D^{d-}$ are $G$-invariant.
\end{proposition}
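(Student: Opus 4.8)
The plan is to reduce the statement to the $G$-equivariance of the maps $\Omega_{d+}^\flat$ and $\Omega_{d-}^\flat$, and then to verify that equivariance by a short coordinate computation. First I would observe that, by construction, $D^{d+}$ is exactly the graph of $\Omega_{d+}^\flat\colon T^*Q\times T^*Q\to T^*(Q\times Q^*)$, i.e.\ a triple $(z_0,z_1,\alpha_{z^+})$ belongs to $D^{d+}$ if and only if $\alpha_{z^+}=\Omega_{d+}^\flat(z_0,z_1)$, and likewise $D^{d-}$ is the graph of $\Omega_{d-}^\flat$. Hence it suffices to establish the equivariance identities
\[
\Phi_g^{T^*(Q\times Q^*)}\circ\Omega_{d+}^\flat=\Omega_{d+}^\flat\circ\big(\Phi_g^{T^*Q}\times\Phi_g^{T^*Q}\big),\qquad
\Phi_g^{T^*(Q^*\times Q)}\circ\Omega_{d-}^\flat=\Omega_{d-}^\flat\circ\big(\Phi_g^{T^*Q}\times\Phi_g^{T^*Q}\big),
\]
for every $g\in G$: given $(z_0,z_1,\Omega_{d+}^\flat(z_0,z_1))\in D^{d+}$, its image under $g$ is then $\big(\Phi_g^{T^*Q}(z_0),\Phi_g^{T^*Q}(z_1),\Omega_{d+}^\flat(\Phi_g^{T^*Q}(z_0),\Phi_g^{T^*Q}(z_1))\big)$, which again lies in $D^{d+}$, and similarly for $D^{d-}$.

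Next I would record the coordinate form of the actions involved: by \eqref{eq:dualaction}, $\Phi_g^{T^*Q}(q,p)=(g+q,p)$, and for the cotangent-lift actions on $T^*(Q\times Q^*)\simeq Q\times Q^*\times Q^*\times Q$ and on $T^*(Q^*\times Q)$ I would use the description already set up in Section~\ref{sec:trivializationsTQ} and made explicit in Section~\ref{sec:quotientspaces} (the same action that enters the proof of the preceding proposition), according to which translation by $g$ acts on the configuration-type slots and fixes the momentum-type slots, i.e.\ $\Phi_g^{T^*(Q\times Q^*)}(q_0,p_1,p_0,q_1)=(g+q_0,p_1,p_0,g+q_1)$ and $\Phi_g^{T^*(Q^*\times Q)}(p_0,q_1,-q_0,-p_1)=(p_0,g+q_1,-(g+q_0),-p_1)$.

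With these formulas the verification is immediate. Writing $z_0=(q_0,p_0)$ and $z_1=(q_1,p_1)$,
\[
\Omega_{d+}^\flat\big(\Phi_g^{T^*Q}(z_0),\Phi_g^{T^*Q}(z_1)\big)=\Omega_{d+}^\flat\big((g+q_0,p_0),(g+q_1,p_1)\big)=(g+q_0,p_1,p_0,g+q_1)=\Phi_g^{T^*(Q\times Q^*)}\big(\Omega_{d+}^\flat(z_0,z_1)\big),
\]
which is the equivariance identity for $\Omega_{d+}^\flat$; the computation for $\Omega_{d-}^\flat$ is verbatim the same, using $\Omega_{d-}^\flat\big((a,b),(c,d)\big)=(b,c,-a,-d)$ and the displayed formula for $\Phi_g^{T^*(Q^*\times Q)}$. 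Combined with the first paragraph, this gives the $G$-invariance of both $D^{d+}$ and $D^{d-}$.

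The only step requiring care — and the main (mild) obstacle — is pinning down the cotangent-lift actions on $T^*(Q\times Q^*)$ and $T^*(Q^*\times Q)$ correctly; in particular, one must notice that the configuration variable occupying a covector slot (the final $q_1$ in $(q_0,p_1,p_0,q_1)$, and the $-q_0$ in $(p_0,q_1,-q_0,-p_1)$) is translated by $g$ and not left fixed, since it is precisely this translation that supplies the matching $g$ on both sides of the equivariance identity. As this action was already identified in Section~\ref{sec:trivializationsTQ}, I would simply cite it, present the $(+)$ case in full, and remark that the $(-)$ case follows by exchanging the order of the factors as in Remarks~\ref{remark:lambdaminus} and~\ref{remark:hatlambdaminus}.
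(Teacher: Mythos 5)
Your proposal is correct and follows essentially the same route as the paper: both reduce the claim to the equivariance of $\Omega_{d\pm}^\flat$ with respect to the translation-type actions (which fix momentum slots and translate configuration slots, including the configuration variable sitting in the covector fiber) and verify it by the same one-line coordinate computation, the paper writing out the $(+)$ case and declaring the $(-)$ case analogous. Your explicit treatment of the action on $T^*(Q^*\times Q)$, with the sign $-(g+q_0)$ in the fiber slot, is consistent with the paper's conventions and merely fills in the step the paper leaves implicit.
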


\begin{proof}
Let $g\in G$ and $(z_0,z_1,\alpha_{z^+})\in D^{d+}$. By definition, we have
\begin{equation*}
\alpha_{z^+}=\Omega_{d+}^\flat(z_0,z_1)=(q_0,p_1,p_0,q_1).
\end{equation*}
Recall that the actions are given by $\Phi_g^{T^*Q}(q_0,p_0)=(g+q_0,p_0)$ and $\Phi_g^{T^*(Q\times Q^*)}(q_0,p_0,p_1,q_1)=(g+q_0,p_0,p_1,g+q_1)$, respectively.  Hence,
\begin{equation*}
\tilde z_i=\Phi_g^{T^*Q}(z_i)=\left(g+q_i,p_i\right),\qquad i=0,1,
\end{equation*}
and
\begin{equation*}
\tilde\alpha_{\tilde z^+}=\Phi_g^{T^*(Q\times Q^*)}(\alpha_{z+})=\left(g+q_0,p_1,p_0,g+q_1\right),
\end{equation*}
where $\tilde z^+=\left(g+q_0,p_1\right)$. It immediately follows that $\Omega_{d+}^\flat(\tilde z_0,\tilde z_1)=\tilde\alpha_{\tilde z^+}$, which establishes the desired result. The computation for $D^{d-}$ is analogous.
\end{proof}

On the other hand, the right trivialization $\hat\lambda_d$ allows $D^{d+}$ to induce a $(+)$-discrete Dirac structure on $Q\times(\Sigma^*\times\mathfrak g^*)$,
\begin{align*}
\hat D^{d+} & = \Big\{\left(\hat z_0,\hat z_1,\hat\alpha_{\hat z^+}\right)\mid z_0,z_1\in Q\times Q^*,\hat\alpha_{\hat z^+}\in T_{\hat z^+}^*(Q\times(\Sigma^*\times\mathfrak g^*)),\\
&\qquad\qquad\qquad\qquad z^+=(q_0,p_1),\left(z_0,z_1,\left(\hat\lambda_d^{-1},\hat\lambda_d^*\right)\left(\hat\alpha_{\hat z^+}\right)\right)\in D^{d+}\Big\},
\end{align*}
where, for the sake of simplicity, we let $\hat z=\hat\lambda_d(z)$ for each $z=(q,p)\in Q\times Q^*$. Equivalently, $\Omega_{d+}^\flat$ induces a map $\hat\Omega_{d+}^\flat$ between the trivialized spaces by imposing the commutativity of the following diagram,
\begin{equation*}
	\begin{tikzpicture}
			\matrix (m) [matrix of math nodes,row sep=5em,column sep=4em,minimum width=2em]
			{	(Q\times Q^*)\times(Q\times Q^*) & T^*(Q\times Q^*)\\
				(Q\times(\Sigma^*\times\mathfrak g^*))\times(Q\times(\Sigma^*\times\mathfrak g^*)) & T^*(Q\times(\Sigma^*\times\mathfrak g^*))\\};
			\path[-stealth]
			(m-1-1) edge [] node [left] {$\left(\hat\lambda_d,\hat\lambda_d\right)$} (m-2-1)
			(m-1-1) edge [] node [above] {$\Omega_{d+}^\flat$} (m-1-2)
			(m-1-2) edge [] node [right] {$\left(\hat\lambda_d,(\hat\lambda_d^*)^{-1}\right)$} (m-2-2)
			(m-2-1) edge [] node [above] {$\hat\Omega_{d+}^\flat$} (m-2-2);
	\end{tikzpicture}
\end{equation*}

This way, $\hat D^{d+}$ can be regarded as the $(+)$-discrete Dirac structure induced by $\hat\Omega_{d+}^\flat$, i.e.,
\begin{align*}
\hat D^{d+} & = \Big\{\left(\hat z_0,\hat z_1,\hat\alpha_{\hat z^+}\right)\mid\hat z_0,\hat z_1\in Q\times(\Sigma^*\times\mathfrak g^*),\hat\alpha_{\hat z^+}\in T_{\hat z^+}^*(Q\times(\Sigma^*\times\mathfrak g^*)),\vspace{0.1cm}\\
&\qquad\qquad\qquad\qquad\qquad \hat z^+=\hat\lambda_d(q_0,p_1),\hat\Omega_{d+}^\flat(\hat z_0,\hat z_1)=\hat\alpha_{\hat z^+}\Big\}.
\end{align*}

Of course, the $G$-invariance of $D^{d+}$ implies the $G$-invariance of $\hat D^{d+}$, since we have constructed the actions on the trivialized space so that $\hat\lambda_d$ is equivariant.

The same stands for the $(-)$-discrete Dirac structure. Namely, we have the following diagram,
\begin{equation*}
	\begin{tikzpicture}
			\matrix (m) [matrix of math nodes,row sep=5em,column sep=4em,minimum width=2em]
			{	(Q\times Q^*)\times(Q\times Q^*) & T^*(Q^*\times Q)\\
				(Q\times(\Sigma^*\times\mathfrak g^*))\times(Q\times(\Sigma^*\times\mathfrak g^*)) & T^*((\Sigma^*\times\mathfrak g^*)\times Q)\\};
			\path[-stealth]
			(m-1-1) edge [] node [left] {$\left(\hat\lambda_d,\hat\lambda_d\right)$} (m-2-1)
			(m-1-1) edge [] node [above] {$\Omega_{d-}^\flat$} (m-1-2)
			(m-1-2) edge [] node [right] {$\left(\check\lambda_d,(\check\lambda_d^*)^{-1}\right)$} (m-2-2)
			(m-2-1) edge [] node [above] {$\hat\Omega_{d-}^\flat$} (m-2-2);
	\end{tikzpicture}
\end{equation*}
Hence, we define
\begin{align*}
\hat D^{d-} & = \Big\{\left(\hat z_0,\hat z_1,\hat\alpha_{\hat z^-}\right)\mid\hat z_0,\hat z_1\in Q\times(\Sigma^*\times\mathfrak g^*),\hat\alpha_{\hat z^-}\in T_{\hat z^-}^*((\Sigma^*\times\mathfrak g^*)\times Q),\vspace{0.1cm}\\
& \qquad\qquad\qquad\qquad\qquad \hat z^-=\check\lambda_d(p_0,q_1),\hat\Omega_{d-}^\flat(\hat z_0,\hat z_1)=\hat\alpha_{\hat z^-}\Big\}.
\end{align*}

\subsection{Reduced discrete Dirac structure}

The $G$-invariance of $\hat D^{d+}$ ensures that it descends to a discrete Dirac structure on the quotient space. Since $\Omega_{d+}^\flat$ and $\hat\lambda_d$ are equivariant, so is $\hat\Omega_{d+}^\flat$, which induces a well-defined map between the quotient spaces,
\begin{equation*}
[\hat\Omega_{d+}^\flat]:\left[(Q\times(\Sigma^*\times\mathfrak g^*))\times(Q\times(\Sigma^*\times\mathfrak g^*))\right]/G\longrightarrow\left[T^*(Q\times(\Sigma^*\times\mathfrak g^*))\right]/G.
\end{equation*}
Then, the \emph{reduced $(+)$-discrete Dirac structure} is the structure induced by this map, i.e.,
\begin{multline}\label{eq:reduceddiracstructure}
[\hat D^{d+}] = \Big\{\left([\hat z_0,\hat z_1],[\hat\alpha_{\hat z^+}]\right)\mid[\hat z_0,\hat z_1]\in\left[(Q\times(\Sigma^*\times\mathfrak g^*))\times(Q\times(\Sigma^*\times\mathfrak g^*)\right]/G,\\
[\hat\alpha_{\hat z^+}]\in\left[T^*(Q\times(\Sigma^*\times\mathfrak g^*))\right]/G,[\hat\Omega_{d+}^\flat]\left([\hat z_0,\hat z_1]\right)=[\hat\alpha_{\hat z^+}]\Big\}.
\end{multline}
Locally, identifications \eqref{eq:quotienttangentbundle} and \eqref{eq:quotientcotangentbundle} enable us to regard $[\hat\Omega_{d+}^\flat]$ as a map between the trivializations,
\begin{equation*}
[\hat\Omega_{d+}^\flat]:\Sigma\times(\Sigma^*\times\mathfrak g^*)\times Q\times(\Sigma^*\times\mathfrak g^*)\longrightarrow\Sigma\times(\Sigma^*\times\mathfrak g^*)\times Q^*\times(\Sigma\times\mathfrak g).
\end{equation*}

\begin{lemma}
Working in a trivialization and using the above identification, we have
\begin{equation*}
[\hat\Omega_{d+}^\flat](x_0,w_0,\mu_0,q_1,w_1,\mu_1)=\Big(x_0,w_1,\mu_1,\big(w_0-\textrm{\normalfont h}_{d,\Sigma}^*(\mu_0),\mu_0\big),x_1,g_1-\textrm{\normalfont h}_{d,\Sigma}(x_1)\Big),
\end{equation*}
for each $(x_0,w_0,\mu_0,q_1,w_1,\mu_1)\in\Sigma\times(\Sigma^*\times\mathfrak g^*)\times Q\times(\Sigma^*\times\mathfrak g^*)$.
\end{lemma}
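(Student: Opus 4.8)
The plan is to unravel the commutative diagram that defines $\hat\Omega_{d+}^\flat$, which gives
$$\hat\Omega_{d+}^\flat=\left(\hat\lambda_d,(\hat\lambda_d^*)^{-1}\right)\circ\Omega_{d+}^\flat\circ\left(\hat\lambda_d^{-1},\hat\lambda_d^{-1}\right),$$
and then to chase a generic point through the three composed maps using only the explicit local formulas already recorded. Starting from $\big((q_0,w_0,\mu_0),(q_1,w_1,\mu_1)\big)$, applying $\hat\lambda_d^{-1}$ in each slot via \eqref{eq:invhatlambdadlocal} yields $z_i=(q_i,p_i)$ with $p_0=\big(w_0-\textrm{\normalfont h}_{d,\Sigma}^*(\mu_0),\mu_0\big)$ and $p_1=\big(w_1-\textrm{\normalfont h}_{d,\Sigma}^*(\mu_1),\mu_1\big)$. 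Then $\Omega_{d+}^\flat$ sends $\big((q_0,p_0),(q_1,p_1)\big)$ to the covector over $Q\times Q^*$ with base point $(q_0,p_1)$ and fiber component $(p_0,q_1)\in Q^*\times Q$. Finally one applies $\hat\lambda_d$ to the base point through \eqref{eq:hatlambdadlocal} and $(\hat\lambda_d^*)^{-1}$ to the fiber component through \eqref{eq:invadjointhatlambdadlocal}.

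The one substantive observation is a cancellation. Because $\Omega_{d+}^\flat$ carries the base point $(q_0,p_1)$ and not $(q_0,p_0)$, formula \eqref{eq:hatlambdadlocal} gives $\hat\lambda_d(q_0,p_1)=\big(q_0,(w_1-\textrm{\normalfont h}_{d,\Sigma}^*(\mu_1))+\textrm{\normalfont h}_{d,\Sigma}^*(\mu_1),\mu_1\big)=(q_0,w_1,\mu_1)$, so the horizontal-lift corrections disappear from the base point. For the fiber, writing $q_1=(x_1,g_1)\in\Sigma\times G$, formula \eqref{eq:invadjointhatlambdadlocal} turns $(p_0,q_1)$ into $\big(p_0,x_1,g_1-\textrm{\normalfont h}_{d,\Sigma}(x_1)\big)$ with $p_0$ unchanged. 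This already produces the non-reduced identity
$$\hat\Omega_{d+}^\flat\big((q_0,w_0,\mu_0),(q_1,w_1,\mu_1)\big)=\Big((q_0,w_1,\mu_1),\big(\,(w_0-\textrm{\normalfont h}_{d,\Sigma}^*(\mu_0),\mu_0),\,x_1,\,g_1-\textrm{\normalfont h}_{d,\Sigma}(x_1)\,\big)\Big).$$

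It remains to translate this through the quotient identifications \eqref{eq:quotienttangentbundle} on the source and \eqref{eq:quotientcotangentbundle} on the target. Given a trivialized source element $(x_0,w_0,\mu_0,q_1,w_1,\mu_1)$, the representative with $g_0=0$, i.e.\ $q_0=(x_0,0)$, has zero shift in both \eqref{eq:quotienttangentbundle} and \eqref{eq:quotientcotangentbundle}, so substituting into the displayed formula and reading off the image under \eqref{eq:quotientcotangentbundle} gives precisely the claimed expression. Independence of the chosen representative is free, since it is exactly the $G$-equivariance of $\Omega_{d+}^\flat$ and $\hat\lambda_d$ already noted. I do not expect any conceptual difficulty here; the only thing requiring care is keeping straight which factor of each map acts on the base point versus the fiber component, together with the $\Sigma\times G$ versus $\Sigma^*\times\mathfrak g^*$ splittings — this lemma is essentially a bookkeeping substitution.
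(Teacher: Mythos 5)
Your proposal is correct and follows essentially the same route as the paper: the paper's proof is exactly this chase through \eqref{eq:quotienttangentbundle}, \eqref{eq:invhatlambdadlocal}, $\Omega_{d+}^\flat$, \eqref{eq:hatlambdadlocal}/\eqref{eq:invadjointhatlambdadlocal}, and \eqref{eq:quotientcotangentbundle}, using the representative $q_0=(x_0,0)$, with the same cancellation $w_1-\textrm{\normalfont h}_{d,\Sigma}^*(\mu_1)+\textrm{\normalfont h}_{d,\Sigma}^*(\mu_1)=w_1$ appearing implicitly. The only cosmetic difference is that you compute $\hat\Omega_{d+}^\flat$ on the trivialized spaces first and then pass to the quotient, while the paper starts directly from the quotient representative.
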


\begin{proof}
We employ the explicit local expression computed in the previous sections,
\begin{equation*}
\begin{tikzpicture}
\matrix (m) [matrix of math nodes,row sep=1em,column sep=3em,minimum width=2em]
{(x_0,w_0,\mu_0,q_1,w_1,\mu_1)\\
\Big(\big((x_0,0),w_0,\mu_0\big),\big(q_1,w_1,\mu_1\big)\Big)\\
\Big(\big((x_0,0),(w_0-\textrm{\normalfont h}_{d,\Sigma}^*(\mu_0),\mu_0)\big),\big(q_1,(w_1-\textrm{\normalfont h}_{d,\Sigma}^*(\mu_1),\mu_1)\big)\Big)\\
\Big(\big((x_0,0),(w_1-\textrm{\normalfont h}_{d,\Sigma}^*(\mu_1),\mu_1)\big),\big((w_0-\textrm{\normalfont h}_{d,\Sigma}^*(\mu_0),\mu_0),q_1\big)\Big)\\
\Big(\big((x_0,0),w_1,\mu_1\big),\big((w_0-\textrm{\normalfont h}_{d,\Sigma}^*(\mu_0),\mu_0),x_1,g_1-\textrm{\normalfont h}_{d,\Sigma}(x_1)\big)\Big)\\
\Big(x_0,w_1,\mu_1,\big(w_0-\textrm{\normalfont h}_{d,\Sigma}^*(\mu_0),\mu_0\big),x_1,g_1-\textrm{\normalfont h}_{d,\Sigma}(x_1)\Big).\\};
\path[-stealth]
(m-1-1) edge [|->] node [right] {\scriptsize\eqref{eq:quotienttangentbundle}} (m-2-1)
(m-2-1) edge [|->] node [right] {\scriptsize\eqref{eq:invhatlambdadlocal}} (m-3-1)
(m-3-1) edge [|->] node [right] {\scriptsize$\Omega_{d+}^\flat$ \eqref{eq:Tulczyjew_d+}} (m-4-1)
(m-4-1) edge [|->] node [right] {\scriptsize\eqref{eq:hatlambdadlocal},\;\eqref{eq:invadjointhatlambdadlocal}} (m-5-1)
(m-5-1) edge [|->] node [right] {\scriptsize\eqref{eq:quotientcotangentbundle}} (m-6-1);
\end{tikzpicture}
\end{equation*}
\end{proof}

Observe that $[\hat\Omega_{d+}^\flat]$ is a bundle morphism covering the identity if we regard the previous maps as bundles over $\Sigma$ with the projection onto the first component.

\begin{proposition}\label{prop:reduceddiractructurelocal}
Locally, the reduced $(+)$-discrete Dirac structure is given by
\begin{align*}
[\hat D^{d+}] & = \Big\{\left((x_0,w_0,\mu_0,q_1,w_1,\mu_1),(x_0,w_1,\mu_1,p,x_1,\xi)\right)\mid\\
&\qquad\qquad p=\big(w_0-\textrm{\normalfont h}_{d,\Sigma}^*(\mu_0),\mu_0\big),\xi=g_1-\textrm{\normalfont h}_{d,\Sigma}(x_1)\Big\}\vspace{0.1cm}\\
&\subset\big(\Sigma\times(\Sigma^*\times\mathfrak g^*)\times Q\times(\Sigma^*\times\mathfrak g^*)\big)\times\big(\Sigma\times(\Sigma^*\times\mathfrak g^*)\times Q^*\times(\Sigma\times\mathfrak g)\big).
\end{align*}
\end{proposition}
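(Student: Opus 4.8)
The plan is to notice that, by the very definition~\eqref{eq:reduceddiracstructure}, the reduced $(+)$-discrete Dirac structure $[\hat D^{d+}]$ is exactly the graph of the reduced bundle map $[\hat\Omega_{d+}^\flat]$, viewed as a subset of the product of $\left[(Q\times(\Sigma^*\times\mathfrak g^*))\times(Q\times(\Sigma^*\times\mathfrak g^*))\right]/G$ and $\left[T^*(Q\times(\Sigma^*\times\mathfrak g^*))\right]/G$. Thus the entire content of the Proposition is the translation of this graph into the local coordinates supplied by the identifications~\eqref{eq:quotienttangentbundle} and~\eqref{eq:quotientcotangentbundle}, after inserting the explicit formula for $[\hat\Omega_{d+}^\flat]$ obtained in the preceding Lemma.

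Concretely, I would proceed as follows. First, fix a trivialization $Q=\Sigma\times G$ of $\pi_{Q,\Sigma}$ as in Section~\ref{sec:localdiscreteconnection}, so that the quotient of the trivialized tangent bundle is identified with $\Sigma\times(\Sigma^*\times\mathfrak g^*)\times Q\times(\Sigma^*\times\mathfrak g^*)$ through~\eqref{eq:quotienttangentbundle}, and the quotient of the trivialized cotangent bundle with $\Sigma\times(\Sigma^*\times\mathfrak g^*)\times Q^*\times(\Sigma\times\mathfrak g)$ through~\eqref{eq:quotientcotangentbundle}. Second, write a generic element $[\hat z_0,\hat z_1]$ of the domain as $(x_0,w_0,\mu_0,q_1,w_1,\mu_1)$ in these coordinates, with $q_1=(x_1,g_1)\in\Sigma\times G$. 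Third, apply the Lemma to get $[\hat\Omega_{d+}^\flat](x_0,w_0,\mu_0,q_1,w_1,\mu_1)=(x_0,w_1,\mu_1,(w_0-\textrm{\normalfont h}_{d,\Sigma}^*(\mu_0),\mu_0),x_1,g_1-\textrm{\normalfont h}_{d,\Sigma}(x_1))$. Finally, set $p=(w_0-\textrm{\normalfont h}_{d,\Sigma}^*(\mu_0),\mu_0)\in Q^*$ and $\xi=g_1-\textrm{\normalfont h}_{d,\Sigma}(x_1)\in\mathfrak g$; then a pair $([\hat z_0,\hat z_1],[\hat\alpha_{\hat z^+}])$ belongs to $[\hat D^{d+}]$ if and only if $[\hat\alpha_{\hat z^+}]=(x_0,w_1,\mu_1,p,x_1,\xi)$, which is precisely the claimed description.

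The only delicate point is bookkeeping rather than substance: one must confirm that the anchoring condition built into $D^{d+}$ --- that $\hat\alpha_{\hat z^+}$ lives over $\hat z^+=\hat\lambda_d(q_0,p_1)$, and not over $\hat z_0$ --- is correctly encoded in the fact that the first three slots of $[\hat\Omega_{d+}^\flat](x_0,w_0,\mu_0,q_1,w_1,\mu_1)$ are $(x_0,w_1,\mu_1)$, i.e.\ the $\Sigma$-base together with the $(\Sigma^*\times\mathfrak g^*)$-data coming from the second leg $\hat z_1$. This is exactly what the chain of maps in the proof of the Lemma already verifies, since the identifications~\eqref{eq:quotienttangentbundle} and~\eqref{eq:quotientcotangentbundle} were set up so that the residual $G$-action becomes trivial; hence no additional equivariance check is required at this stage. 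Apart from this, the argument is a plain unwinding of definitions, so I expect no real obstacle.
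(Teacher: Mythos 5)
Your proposal is correct and follows the paper's own route: the Proposition is exactly the graph of $[\hat\Omega_{d+}^\flat]$ read through the identifications \eqref{eq:quotienttangentbundle} and \eqref{eq:quotientcotangentbundle}, with the local formula supplied by the preceding Lemma, which is how the paper obtains it (the Proposition is stated there as an immediate consequence, with no separate proof). Your remark about the anchoring over $\hat z^+$ being already handled in the Lemma's chain of maps is also accurate.
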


In the same fashion, we may define the \emph{reduced $(-)$-discrete Dirac structure} as
\begin{multline}\label{eq:reduceddiracstructureminus}
[\hat D^{d-}] = \Big\{\left([\hat z_0,\hat z_1],[\hat\alpha_{\hat z^-}]\right)\mid[\hat z_0,\hat z_1]\in\left[(Q\times(\Sigma^*\times\mathfrak g^*))\times(Q\times(\Sigma^*\times\mathfrak g^*))\right]/G,\\
[\hat\alpha_{\hat z^-}]\in\left[T^*((\Sigma^*\times\mathfrak g^*)\times Q)\right]/G,[\hat\Omega_{d-}^\flat]\left([\hat z_0,\hat z_1]\right)=[\hat\alpha_{\hat z^-}]\Big\}.
\end{multline}
By analogous computations using Remark \ref{remark:isomorphismtangentcotangentminus}, we arrive at the local expressions.

\begin{lemma}
Locally, we have
\begin{equation*}
[\hat\Omega_{d-}^\flat](q_0,w_0,\mu_0,x_1,w_1,\mu_1)=\Big(w_0,\mu_0,x_1,-x_0,-g_0+\textrm{\normalfont h}_{d,\Sigma}(x_0),(-w_1+\textrm{\normalfont h}_{d,\Sigma}^*(\mu_1),-\mu_1)\Big),
\end{equation*}
for each $(q_0,w_0,\mu_0,x_1,w_1,\mu_1)\in Q\times(\Sigma^*\times\mathfrak g^*)\times\Sigma\times(\Sigma^*\times\mathfrak g^*)$.
\end{lemma}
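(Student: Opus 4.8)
The plan is to imitate, move for move, the proof of the preceding lemma for $[\hat\Omega_{d+}^\flat]$, replacing every $(+)$-object by its $(-)$-counterpart. Concretely, I would factor $[\hat\Omega_{d-}^\flat]$ as the composite of five elementary maps and chase a generic point through them: first, the representative-lifting isomorphism of Remark~\ref{remark:isomorphismtangentcotangentminus} (the version $T(Q\times(\Sigma^*\times\mathfrak g^*))/G\simeq Q\times(\Sigma^*\times\mathfrak g^*)\times\Sigma\times(\Sigma^*\times\mathfrak g^*)$), which sends $(q_0,w_0,\mu_0,x_1,w_1,\mu_1)$ to the canonical representative $\big((q_0,w_0,\mu_0),((x_1,0),w_1,\mu_1)\big)$ — note that here it is the \emph{second} $Q$-slot that is normalized to vanishing $G$-component, opposite to the $(+)$-case; second, the componentwise inverse trivialization $\hat\lambda_d^{-1}$ of~\eqref{eq:invhatlambdadlocal}, landing in $(Q\times Q^*)\times(Q\times Q^*)\simeq T^*Q\times T^*Q$; third, the map $\Omega_{d-}^\flat$ of~\eqref{eq:Tulczyjew_d-}, $\big((q_0,p_0),(q_1,p_1)\big)\mapsto(p_0,q_1,-q_0,-p_1)$, interpreted in $T^*(Q^*\times Q)$ via $(Q^*)^*\simeq Q$; fourth, the trivialization $\check\lambda_d$ on the base together with $(\check\lambda_d^*)^{-1}$ on the fibre, whose local formulas are obtained from~\eqref{eq:hatlambdadlocal} and~\eqref{eq:invadjointhatlambdadlocal} by exchanging the order of the two factors, as prescribed in Remark~\ref{remark:hatlambdaminus}; and fifth, the quotient identification $T^*((\Sigma^*\times\mathfrak g^*)\times Q)/G\simeq(\Sigma^*\times\mathfrak g^*)\times\Sigma\times(\Sigma\times\mathfrak g)\times Q^*$ of Remark~\ref{remark:isomorphismtangentcotangentminus}. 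That all five maps are well defined on the indicated quotients follows from the equivariance of $\Omega_{d-}^\flat$ and the linearity (hence equivariance) of the trivializations, both already recorded in the text.

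Running the chase, step two turns the representative into $\big((q_0,(w_0-\textrm{\normalfont h}_{d,\Sigma}^*(\mu_0),\mu_0)),((x_1,0),(w_1-\textrm{\normalfont h}_{d,\Sigma}^*(\mu_1),\mu_1))\big)$; applying $\Omega_{d-}^\flat$ yields the base point $\big((w_0-\textrm{\normalfont h}_{d,\Sigma}^*(\mu_0),\mu_0),(x_1,0)\big)\in Q^*\times Q$ and the fibre $\big(-q_0,(-w_1+\textrm{\normalfont h}_{d,\Sigma}^*(\mu_1),-\mu_1)\big)$; then $\check\lambda_d$ reabsorbs the $\textrm{\normalfont h}_{d,\Sigma}^*(\mu_0)$ term into the $\Sigma^*$-slot of the base, producing $(w_0,\mu_0,(x_1,0))$, while $(\check\lambda_d^*)^{-1}$ sends $-q_0=(-x_0,-g_0)$ to $\big(-x_0,-g_0-\textrm{\normalfont h}_{d,\Sigma}(-x_0)\big)$, which by linearity of $\textrm{\normalfont h}_{d,\Sigma}$ equals $\big(-x_0,-g_0+\textrm{\normalfont h}_{d,\Sigma}(x_0)\big)$, and leaves the last fibre slot as $(-w_1+\textrm{\normalfont h}_{d,\Sigma}^*(\mu_1),-\mu_1)$. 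Because the representative was chosen with vanishing $G$-component already, step five introduces no further translation and simply reads off $\big(w_0,\mu_0,x_1,-x_0,-g_0+\textrm{\normalfont h}_{d,\Sigma}(x_0),(-w_1+\textrm{\normalfont h}_{d,\Sigma}^*(\mu_1),-\mu_1)\big)$, which is exactly the claimed right-hand side. As in the $(+)$-case, I would present this as a short commuting ladder (a \texttt{tikzpicture}) rather than prose.

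I expect the only real obstacle to be bookkeeping rather than anything conceptual: one must keep careful track of which $Q$- or $\mathfrak g$-coordinate is normalized to zero in the canonical representative at each stage, since in the $(-)$-case the ``intact'' and ``symmetry-reduced'' factors occupy different slots than in the $(+)$-case (compare \eqref{eq:quotienttangentbundle}--\eqref{eq:quotientcotangentbundle} with Remark~\ref{remark:isomorphismtangentcotangentminus}), and one must transcribe the local expressions of $\check\lambda_d$ and $(\check\lambda_d^*)^{-1}$ with the factor order swapped. The single place where a careless computation goes wrong is the fibre term $-g_0+\textrm{\normalfont h}_{d,\Sigma}(x_0)$: the two sign flips built into $\Omega_{d-}^\flat$ combine with the identity $\textrm{\normalfont h}_{d,\Sigma}(-x_0)=-\textrm{\normalfont h}_{d,\Sigma}(x_0)$ to flip the sign of the horizontal-lift contribution relative to what the $(+)$-case analogue would naively suggest, so this sign should be double-checked explicitly.
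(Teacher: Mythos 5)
Your proposal is correct and is essentially the paper's own argument: the paper proves the $(+)$-case lemma by exactly this five-step diagram chase and then states that the $(-)$-case follows ``by analogous computations using Remark~\ref{remark:isomorphismtangentcotangentminus}'', which is what you carry out, including the correct normalization of the \emph{second} $Q$-slot and the sign bookkeeping giving $-g_0+\textrm{\normalfont h}_{d,\Sigma}(x_0)$ and $\big(-w_1+\textrm{\normalfont h}_{d,\Sigma}^*(\mu_1),-\mu_1\big)$. All intermediate expressions in your chase match what the paper's local formulas \eqref{eq:invhatlambdadlocal}, \eqref{eq:Tulczyjew_d-}, \eqref{eq:hatlambdadlocal}, \eqref{eq:invadjointhatlambdadlocal} (with factors exchanged per Remark~\ref{remark:hatlambdaminus}) produce, so the proof is complete as written.
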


\begin{proposition}\label{prop:reduceddiractructurelocalminus}
Locally, the reduced $(-)$-discrete Dirac structure is given by
\begin{align*}
[\hat D^{d-}] & = \Big\{\left((q_0,w_0,\mu_0,x_1,w_1,\mu_1),(w_0,\mu_0,x_1,-x_0,\xi,p\right)\mid\\
&\qquad\qquad\xi=-g_0+\textrm{\normalfont h}_{d,\Sigma}(x_0),~p=\big(-w_1+\textrm{\normalfont h}_{d,\Sigma}^*(\mu_1),-\mu_1\big)\Big\}\vspace{0.1cm}\\
&\subset\big(Q\times(\Sigma^*\times\mathfrak g^*)\times\Sigma\times(\Sigma^*\times\mathfrak g^*)\big)\times\big((\Sigma^*\times\mathfrak g^*)\times \Sigma\times(\Sigma\times\mathfrak g)\times Q^*\big).
\end{align*}
\end{proposition}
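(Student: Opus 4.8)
The plan is to mirror, for the $(-)$-case, exactly the chain of substitutions already carried out for the $(+)$-case in Proposition~\ref{prop:reduceddiractructurelocal}, with $\hat\lambda_d$ replaced by $\check\lambda_d$ and the roles of the two factors of $Q\times Q$ exchanged, as flagged in Remark~\ref{remark:hatlambdaminus} and Remark~\ref{remark:isomorphismtangentcotangentminus}. Concretely, $[\hat D^{d-}]$ is defined in~\eqref{eq:reduceddiracstructureminus} as the graph of the reduced map $[\hat\Omega_{d-}^\flat]$; so it suffices to compute the local expression of $[\hat\Omega_{d-}^\flat]$ on the trivialization $Q=\Sigma\times G$ and then read off the pairs $(\text{domain point},[\hat\Omega_{d-}^\flat](\text{domain point}))$ in the identifications of Remark~\ref{remark:isomorphismtangentcotangentminus}.

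The computation of $[\hat\Omega_{d-}^\flat]$ proceeds through the five-step diagram analogous to the one in the proof of the preceding lemma, which I would carry out as follows. Starting from $(q_0,w_0,\mu_0,x_1,w_1,\mu_1)$ in $Q\times(\Sigma^*\times\mathfrak g^*)\times\Sigma\times(\Sigma^*\times\mathfrak g^*)$, I first undo the quotient identification of Remark~\ref{remark:isomorphismtangentcotangentminus} to obtain a representative $\big((q_0,w_0,\mu_0),((x_1,0),w_1,\mu_1)\big)$ in $(Q\times(\Sigma^*\times\mathfrak g^*))^2$; then I apply $\check\lambda_d^{-1}$ componentwise (the analogue of~\eqref{eq:invhatlambdadlocal}, which for $\check\lambda_d$ reads $(w,\mu,q)\mapsto(q,(w-\textrm{h}_{d,\Sigma}^*(\mu),\mu))$ up to reordering) to land in $(Q\times Q^*)\times(Q\times Q^*)$; then I apply $\Omega_{d-}^\flat$ as given in~\eqref{eq:Tulczyjew_d-}, namely $\big((q_0,p_0),(q_1,p_1)\big)\mapsto(p_0,q_1,-q_0,-p_1)$; then I apply $(\check\lambda_d,(\check\lambda_d^*)^{-1})$ to the result, using the analogues of~\eqref{eq:hatlambdadlocal} and~\eqref{eq:invadjointhatlambdadlocal} — here I must be careful that the adjoint factor is transported by $(\check\lambda_d^*)^{-1}$, which produces the terms $-g_0+\textrm{h}_{d,\Sigma}(x_0)$ in the $\mathfrak g$-component and $(-w_1+\textrm{h}_{d,\Sigma}^*(\mu_1),-\mu_1)$ in the $Q^*$-component; and finally I apply the cotangent-bundle quotient identification of Remark~\ref{remark:isomorphismtangentcotangentminus}, $T^*((\Sigma^*\times\mathfrak g^*)\times Q)/G\simeq(\Sigma^*\times\mathfrak g^*)\times\Sigma\times(\Sigma\times\mathfrak g)\times Q^*$, which subtracts off the $g_0$-shift in the appropriate slot. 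This yields precisely the formula in the Lemma immediately preceding the proposition, and then $[\hat D^{d-}]$ is, by definition, the set of pairs $\big((q_0,w_0,\mu_0,x_1,w_1,\mu_1),[\hat\Omega_{d-}^\flat](q_0,w_0,\mu_0,x_1,w_1,\mu_1)\big)$, i.e. the set displayed in the proposition with $\xi=-g_0+\textrm{h}_{d,\Sigma}(x_0)$ and $p=(-w_1+\textrm{h}_{d,\Sigma}^*(\mu_1),-\mu_1)$.

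The main bookkeeping obstacle — and the only place where an error is likely — is keeping straight which factor of each product space is being trivialized and in which order the components appear, since the $(-)$-case trivializes the \emph{first} factor rather than the second; in particular one must confirm that the $G$-action on the relevant cotangent bundle $T^*((\Sigma^*\times\mathfrak g^*)\times Q)$ shifts the $\mathfrak g$-component of the base-point slot (coming from the $J$-component of $\check\lambda_d$) by $g_0$, which is what makes the quotient identification in Remark~\ref{remark:isomorphismtangentcotangentminus} well defined and is responsible for the $-g_0$ appearing inside $\xi$. Once the component orderings are fixed by Remark~\ref{remark:hatlambdaminus} and Remark~\ref{remark:isomorphismtangentcotangentminus}, everything else is the formal mirror of the $(+)$-computation and involves no new ideas; the equivariance of $\Omega_{d-}^\flat$ and $\check\lambda_d$ — already established in the text — guarantees that $[\hat\Omega_{d-}^\flat]$ is well defined on the quotient, so no further justification is needed there.
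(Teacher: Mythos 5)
Your proposal is correct and is essentially the paper's own argument: the paper obtains this proposition by exactly the mirror of the $(+)$-case computation (undo the quotient identification of Remark~\ref{remark:isomorphismtangentcotangentminus} with the representative whose second base group-component is $0$, untrivialize, apply $\Omega_{d-}^\flat$ from \eqref{eq:Tulczyjew_d-}, transport by $(\check\lambda_d,(\check\lambda_d^*)^{-1})$, re-quotient), and then reads off $[\hat D^{d-}]$ as the graph of $[\hat\Omega_{d-}^\flat]$ as dictated by \eqref{eq:reduceddiracstructureminus}. Two slips in your write-up do not affect the outcome but should be fixed: the domain leg is trivialized by $(\hat\lambda_d,\hat\lambda_d)$, not by $\check\lambda_d$ (the formula you actually apply is \eqref{eq:invhatlambdadlocal}, so nothing changes), and the term $-g_0+\textrm{\normalfont h}_{d,\Sigma}(x_0)$ in $\xi$ arises from the covector entry $-q_0$ produced by $\Omega_{d-}^\flat$ and then transported by $(\check\lambda_d^*)^{-1}$, not from the $G$-action or the final quotient identification --- the action leaves the $J$ (i.e.\ $\mathfrak g^*$) component of the base point invariant and instead shifts the $G$-part of its $Q$-factor together with the $\mathfrak g$-part of the fibre, and the last identification subtracts nothing here because the chosen representative has base group-component $0$.
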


\section{Discrete Lagrange--Poincar\'e--Dirac reduction}\label{sec:reducedequations}

Making use of the reduced discrete Dirac structure, we will compute the the reduced equations corresponding to a discrete Lagrange--Dirac system. Let $Q$ be a vector space and $G\subset Q$ be a vector subspace acting by addition on $Q$, and $L_d:Q\times Q\to\mathbb R$ be a (possibly degenerate) $G$-invariant discrete Lagrangian, i.e.,
\begin{equation*}
    L_d(g+q_0,g+q_1)=L_d(q_0,q_1),\qquad g\in G,\quad(q_0,q_1)\in Q\times Q.
\end{equation*}
The invariance of the discrete Lagrangian leads to the invariance of its partial derivatives.

\begin{lemma}\label{lemma:equivariancepartialderivatives}
If $L_d$ is $G$-invariant, then so are its partial derivatives $D_i L_d: Q\times Q\to Q^*$, $i=1,2$, i.e.,
\begin{equation*}
D_i L_d(g+q_0,g+q_1)=D_i L_d(q_0,q_1),\qquad(q_0,q_1)\in Q\times Q,\quad g\in G.
\end{equation*}
\end{lemma}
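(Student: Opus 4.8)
The plan is to differentiate both sides of the invariance identity $L_d(g+q_0,g+q_1)=L_d(q_0,q_1)$ with respect to $(q_0,q_1)$, holding $g$ fixed. Since $Q$ is a vector space, for a fixed $g\in G$ the translation map $\tau_g: Q\times Q\to Q\times Q$, $(q_0,q_1)\mapsto(g+q_0,g+q_1)$, is affine, and its derivative at any point is the identity on $Q\times Q=TQ\times TQ$. Hence the invariance identity reads $L_d\circ\tau_g=L_d$, and the chain rule gives $d(L_d\circ\tau_g)(q_0,q_1)=dL_d(\tau_g(q_0,q_1))\circ d\tau_g(q_0,q_1)=dL_d(g+q_0,g+q_1)$, which must equal $dL_d(q_0,q_1)$.

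The next step is to read off the component statement. Writing $dL_d(q_0,q_1)=(D_1 L_d(q_0,q_1),D_2 L_d(q_0,q_1))$ under the identification $T^*(Q\times Q)\simeq(Q\times Q)\times(Q^*\times Q^*)$ from \eqref{eq:dLdlocal}, the equality $dL_d(g+q_0,g+q_1)=dL_d(q_0,q_1)$ of covectors (both based at points differing by the $G$-action, but with the same cotangent fiber $Q^*\times Q^*$ because $Q$ is a vector space) forces the two $Q^*$-components to agree: $D_i L_d(g+q_0,g+q_1)=D_i L_d(q_0,q_1)$ for $i=1,2$. Equivalently, and perhaps more elementarily, one can take the directional derivative of $L_d(g+q_0,g+q_1)=L_d(q_0,q_1)$ in a direction $v\in Q$ in the first slot: $\langle D_1 L_d(g+q_0,g+q_1),v\rangle=\langle D_1 L_d(q_0,q_1),v\rangle$ for all $v$, hence the claim for $i=1$, and symmetrically for $i=2$ using a variation in the second slot.

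I expect no real obstacle here; the only point requiring a word of care is that differentiating the left-hand side genuinely produces $D_i L_d$ evaluated at the translated point $(g+q_0,g+q_1)$ rather than at $(q_0,q_1)$ — this is exactly because the inner translation contributes the identity Jacobian, so no extra factor appears. I would present the argument in the directional-derivative form, since it is the shortest and makes the vanishing-of-the-Jacobian-correction transparent, and then simply state that the same computation in the second argument yields $i=2$.

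\begin{proof}
Fix $g\in G$ and $(q_0,q_1)\in Q\times Q$. Since $Q$ is a vector space, for any $v\in Q$ we may differentiate the invariance relation $L_d(g+q_0,g+q_1)=L_d(q_0,q_1)$ along the curve $t\mapsto(q_0+tv,q_1)$ at $t=0$. The left-hand side becomes $t\mapsto L_d(g+q_0+tv,g+q_1)$, whose derivative at $t=0$ is $\langle D_1 L_d(g+q_0,g+q_1),v\rangle$ because the translation by $g$ has identity Jacobian; the right-hand side has derivative $\langle D_1 L_d(q_0,q_1),v\rangle$. As $v\in Q$ is arbitrary, $D_1 L_d(g+q_0,g+q_1)=D_1 L_d(q_0,q_1)$. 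Differentiating instead along $t\mapsto(q_0,q_1+tv)$ yields, by the same argument, $D_2 L_d(g+q_0,g+q_1)=D_2 L_d(q_0,q_1)$.
\end{proof}
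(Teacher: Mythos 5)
Your proof is correct and follows essentially the same route as the paper's: differentiate in a direction $v$ in one slot, use the $G$-invariance of $L_d$ to identify the difference quotients (equivalently, the identity Jacobian of the translation), and conclude for $i=1$, with $i=2$ handled symmetrically. The paper simply writes this directional derivative out as an explicit limit, so there is no substantive difference.
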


\begin{proof}
For each $q\in Q$ we have
\begin{align*}
\langle D_1 L_d(g+q_0,g+q_1),q\rangle & =  \displaystyle\lim_{h\to 0}\frac{L_d(g+q_0+hq,g+q_1)-L_d(g+q_0,g+q_1)}{h}\\
& =  \displaystyle\lim_{h\to 0}\frac{L_d(q_0+hq,q_1)-L_d(q_0,q_1)}{h}\\
& =  \langle D_1 L_d(q_0,q_1),q\rangle.
\end{align*}
An analogous computation establishes the result for $D_2 L_d$.
\end{proof}

\begin{remark}\label{remark:partialderivativesLd}
Working on a trivialization $Q=\Sigma\times G$ of $\pi_{Q,\Sigma}$, we may regard $L_d$ as a function defined on $(\Sigma\times G)\times(\Sigma\times G)$. This way, its partial derivatives can be written as
\begin{equation*}
D_1 L_d(q_0,q_1)=\frac{\partial L_d}{\partial q_0}(q_0,q_1)=\left(\frac{\partial L_d}{\partial x_0}(x_0,g_0,x_1,g_1),\frac{\partial L_d}{\partial g_0}(x_0,g_0,x_1,g_1)\right),
\end{equation*}
for each $(q_0,q_1)=(x_0,g_0,x_1,g_1)\in(\Sigma\times G)\times(\Sigma\times G)$, and analogous for $D_2 L_d(q_0,q_1)$.
\end{remark}

We may regard a discrete vector field, $X_{d+}$, that is a solution of the $(+)$-discrete Lagrange--Dirac equations, as a map
\begin{equation*}
\begin{array}{rccc}
X_{d+}: & Q\times Q & \longrightarrow & T(Q\times Q^*)=(Q\times Q^*)\times(Q\times Q^*),\\
& (q_0,q_1) & \longmapsto & \Big(\big(q_0,-D_1 L_d(q_0,q_1)\big),\big(q_1,D_2 L_d(q_0,q_1))\Big).
\end{array}
\end{equation*}
It follows from Lemma \ref{lemma:equivariancepartialderivatives} that this map is $G$-equivariant. Similarly to $\Omega_{d+}^\flat$, this induces a map between the trivialized spaces by imposing the commutativity of the following diagram,

\begin{equation*}
	\begin{tikzpicture}
			\matrix (m) [matrix of math nodes,row sep=5em,column sep=6em,minimum width=2em]
			{	Q\times Q & (Q\times Q^*)\times(Q\times Q^*)\\
				Q\times(\Sigma\times G) & (Q\times(\Sigma^*\times\mathfrak g^*))\times(Q\times(\Sigma^*\times\mathfrak g^*))\\};
			\path[-stealth]
			(m-1-1) edge [] node [left] {$\lambda_d$} (m-2-1)
			(m-1-1) edge [] node [above] {$X_{d+}$} (m-1-2)
			(m-1-2) edge [] node [right] {$\left(\hat\lambda_d,\hat\lambda_d\right)$} (m-2-2)
			(m-2-1) edge [] node [above] {$\hat X_{d+}$} (m-2-2);
	\end{tikzpicture}
\end{equation*}
Since $\hat X_{d+}$ is $G$-equivariant, it descends to a \emph{$(+)$-discrete reduced vector field}, 
\begin{equation}\label{eq:reducedvectorfield}
[\hat X_{d+}]:(Q\times(\Sigma\times G))/G\to T(Q\times(\Sigma^*\times\mathfrak g^*))/G.
\end{equation}
As above, locally we may regard it as $[\hat X_{d+}]:\Sigma\times(\Sigma\times G)\to\Sigma\times(\Sigma^*\times\mathfrak g^*)\times Q\times(\Sigma^*\times\mathfrak g^*)$.

Similarly, if $X_{d-}$ is a solution of the $(-)$-discrete Lagrange--Dirac equations, we may trivialize it as follows,
\begin{equation*}
	\begin{tikzpicture}
			\matrix (m) [matrix of math nodes,row sep=5em,column sep=6em,minimum width=2em]
			{	Q\times Q & (Q\times Q^*)\times(Q\times Q^*)\\
				(\Sigma\times G)\times Q & (Q\times(\Sigma^*\times\mathfrak g^*))\times(Q\times(\Sigma^*\times\mathfrak g^*))\\};
			\path[-stealth]
			(m-1-1) edge [] node [left] {$\tilde\lambda_d$} (m-2-1)
			(m-1-1) edge [] node [above] {$X_{d-}$} (m-1-2)
			(m-1-2) edge [] node [right] {$\left(\hat\lambda_d,\hat\lambda_d\right)$} (m-2-2)
			(m-2-1) edge [] node [above] {$\hat X_{d-}$} (m-2-2);
	\end{tikzpicture}
\end{equation*}
This way, we obtain the \emph{$(-)$-discrete reduced vector field},
\begin{equation}\label{eq:reducedvectorfieldminus}
[\hat X_{d-}]:((\Sigma\times G)\times Q)/G\to T(Q\times(\Sigma^*\times\mathfrak g^*))/G.
\end{equation}
Locally, we may regard it as a map $[\hat X_{d-}]:(\Sigma\times G)\times\Sigma\to Q\times(\Sigma^*\times\mathfrak g^*)\times\Sigma\times(\Sigma^*\times\mathfrak g^*)$.

\subsection{Reduced discrete Dirac differential}

The next step is to trivialize $\gamma_Q^{d+}$ and the exterior derivative of the discrete Lagrangian. To that end, we impose the commutativity of the following diagram,
\begin{equation*}
	\begin{tikzpicture}
			\matrix (m) [matrix of math nodes,row sep=5em,column sep=4em,minimum width=2em]
			{ Q\times Q & T^*(Q\times Q) & T^*(Q\times Q^*)\\
				 Q\times(\Sigma\times G) & T^*(Q\times(\Sigma\times G)) & T^*(Q\times(\Sigma^*\times\mathfrak g^*))\\};
			\path[-stealth]
			(m-1-2) edge [] node [left] {$\left(\lambda_d,(\lambda_d^*)^{-1}\right)$} (m-2-2)
			(m-1-2) edge [] node [above] {$\gamma_Q^{d+}$} (m-1-3)
			(m-1-3) edge [] node [right] {$\left(\hat\lambda_d,(\hat\lambda_d^*)^{-1}\right)$} (m-2-3)
			(m-2-2) edge [] node [above] {$\hat\gamma_Q^{d+}$} (m-2-3)
			(m-1-1) edge [] node [left] {$\lambda_d$} (m-2-1)
			(m-1-1) edge [] node [above] {$dL_d$} (m-1-2)
			(m-2-1) edge [] node [above] {$\hat{dL}_{d}$} (m-2-2);
	\end{tikzpicture}
\end{equation*}
Furthermore, since $\gamma_Q^{d+}$, $\lambda_d$ and $\hat\lambda_d$ are $G$-equivariant, so is $\hat\gamma_Q^{d+}$, by construction. The same holds for $\hat{dL}_{d}$, by Lemma \ref{lemma:equivariancepartialderivatives}. By composing the previous maps, we induce the $(+)$-discrete Dirac differential on the trivialized spaces,
\begin{equation*}
\hat{\mathcal D}^+ L_d=\hat\gamma_Q^{d+}\circ\hat{dL}_{d}: Q\times(\Sigma\times G)\longrightarrow T^*(Q\times(\Sigma^*\times\mathfrak g^*)).
\end{equation*}
Due to $G$-equivariance, it descends to the corresponding quotients, yielding the \emph{reduced $(+)$-discrete Dirac differential}, 
\begin{equation}\label{eq:reduceddiracdifferential}
[\hat{\mathcal D}^{+}L_d]:(Q\times(\Sigma\times G))/G\longrightarrow T^*(Q\times(\Sigma^*\times\mathfrak g^*))/G.
\end{equation}

\begin{proposition}\label{prop:reduceddiracdifferentiallocal}
Locally, the reduced $(+)$-discrete Dirac differential is given by a map $[\hat{\mathcal D}^{+}L_d]:\Sigma\times(\Sigma\times G)\to\Sigma\times(\Sigma^*\times\mathfrak g^*)\times Q^*\times(\Sigma\times\mathfrak g)$. Furthermore, for each $(x_0,x_1,g_1)\in\Sigma\times(\Sigma\times G)$, it is given by
\begin{equation*}
[\hat{\mathcal D}^{+}L_d](x_0,x_1,g_1)=\left(x_0,\frac{\partial L_d}{\partial x_1}+\textrm{\normalfont h}_{d,\Sigma}^*\left(\frac{\partial L_d}{\partial g_1}\right),\frac{\partial L_d}{\partial g_1},-\frac{\partial L_d}{\partial q_0},x_1,g_1+\textrm{\normalfont h}_{d,Q}(x_0,0)\right),
\end{equation*}
where the partial derivatives of $L_d$ are evaluated at $\left(x_0,0,x_1,g_1+\textrm{\normalfont h}_d^0(x_0,x_1)\right)$.
\end{proposition}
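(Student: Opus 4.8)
The plan is to evaluate $[\hat{\mathcal D}^+L_d]$ on a distinguished $G$-representative and then push that point through the chain of maps defining it, exactly in the spirit of the proofs of the preceding lemmas for $[\hat\Omega_{d+}^\flat]$ and $[\hat\Omega_{d-}^\flat]$. First I would collapse the two commuting squares that introduce $\hat{dL}_d$ and $\hat\gamma_Q^{d+}$: since $\lambda_d$ and $\hat\lambda_d$ are linear isomorphisms, their cotangent lifts are $(\lambda_d,(\lambda_d^*)^{-1})$ and $(\hat\lambda_d,(\hat\lambda_d^*)^{-1})$, and chasing the diagram yields $\hat{\mathcal D}^+L_d=\hat\gamma_Q^{d+}\circ\hat{dL}_d=(\hat\lambda_d,(\hat\lambda_d^*)^{-1})\circ\gamma_Q^{d+}\circ dL_d\circ\lambda_d^{-1}=(\hat\lambda_d,(\hat\lambda_d^*)^{-1})\circ\mathcal D^+L_d\circ\lambda_d^{-1}$. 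All maps here are $G$-equivariant, so this descends to the quotients; and because the $G$-action on $Q\times(\Sigma\times G)$ merely translates the $G$-slot of the first factor, the element $(x_0,x_1,g_1)\in\Sigma\times(\Sigma\times G)$ is represented by $((x_0,0),x_1,g_1)$, and it suffices to track this point.

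The computation then proceeds in stages, each justified by a local formula already established. Applying $\lambda_d^{-1}$ via \eqref{eq:invlambdadlocal} lands in $Q\times Q$ at $\big((x_0,0),(x_1,g_1+\textrm{\normalfont h}_d^0(x_0,x_1))\big)$, using $\textrm{\normalfont h}_d((x_0,0),x_1)=\textrm{\normalfont h}_d^0(x_0,x_1)$. Applying $dL_d$ via \eqref{eq:dLdlocal} appends the covector $(D_1L_d,D_2L_d)$ evaluated at that point, which in the $\Sigma^*\times\mathfrak g^*$ splitting reads $\big((\partial L_d/\partial x_0,\partial L_d/\partial g_0),(\partial L_d/\partial x_1,\partial L_d/\partial g_1)\big)$ by Remark~\ref{remark:partialderivativesLd}. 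Applying $\gamma_Q^{d+}$ from \eqref{eq:Tulczyjew_d+} reorders this to $\big((x_0,0),D_2L_d,-D_1L_d,(x_1,g_1+\textrm{\normalfont h}_d^0(x_0,x_1))\big)\in T^*(Q\times Q^*)$. Now $\hat\lambda_d$ acts on the base point $\big((x_0,0),D_2L_d\big)$ by \eqref{eq:hatlambdadlocal}, yielding the $\Sigma^*$-entry $\partial L_d/\partial x_1+\textrm{\normalfont h}_{d,\Sigma}^*(\partial L_d/\partial g_1)$ and the $\mathfrak g^*$-entry $\partial L_d/\partial g_1$, while $(\hat\lambda_d^*)^{-1}$ acts on the covector $\big(-D_1L_d,(x_1,g_1+\textrm{\normalfont h}_d^0(x_0,x_1))\big)$ by \eqref{eq:invadjointhatlambdadlocal}, yielding $\big(-D_1L_d,x_1,g_1+\textrm{\normalfont h}_d^0(x_0,x_1)-\textrm{\normalfont h}_{d,\Sigma}(x_1)\big)$.

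Finally I would simplify the $\mathfrak g$-entry using the decomposition \eqref{eq:decompositionhd}, namely $\textrm{\normalfont h}_d^0(x_0,x_1)=\textrm{\normalfont h}_{d,Q}(x_0,0)+\textrm{\normalfont h}_{d,\Sigma}(x_1)$, so that $\textrm{\normalfont h}_d^0(x_0,x_1)-\textrm{\normalfont h}_{d,\Sigma}(x_1)=\textrm{\normalfont h}_{d,Q}(x_0,0)$, and then pass to the quotient via \eqref{eq:quotientcotangentbundle}; since the base-point $G$-slot is $0$ there is no $-g_0$ correction and one simply records the base $\Sigma$-entry $x_0$, arriving at $\big(x_0,\partial L_d/\partial x_1+\textrm{\normalfont h}_{d,\Sigma}^*(\partial L_d/\partial g_1),\partial L_d/\partial g_1,-\partial L_d/\partial q_0,x_1,g_1+\textrm{\normalfont h}_{d,Q}(x_0,0)\big)$, with all partials evaluated at $\big(x_0,0,x_1,g_1+\textrm{\normalfont h}_d^0(x_0,x_1)\big)$. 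There is no conceptual obstacle here: the argument is mechanical, and the only points needing care are keeping track of which dual space each covector slot inhabits — the dual of $Q\times Q^*$ is $Q^*\times Q$ and that of $Q\times(\Sigma^*\times\mathfrak g^*)$ is $Q^*\times(\Sigma\times\mathfrak g)$, so the adjoint trivializations must be fed the correct slots — together with the bookkeeping of the $\textrm{\normalfont h}_d^0$/$\textrm{\normalfont h}_{d,Q}$/$\textrm{\normalfont h}_{d,\Sigma}$ decomposition; I would present the whole thing as an annotated vertical chain of $\mapsto$'s, as in the preceding lemmas.
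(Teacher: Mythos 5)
Your proposal is correct and follows essentially the same route as the paper's proof: the same annotated chain of local maps \eqref{eq:quotientQQ}, \eqref{eq:invlambdadlocal}, \eqref{eq:dLdlocal}, $\gamma_Q^{d+}$ from \eqref{eq:Tulczyjew_d+}, \eqref{eq:hatlambdadlocal} together with \eqref{eq:invadjointhatlambdadlocal}, and \eqref{eq:quotientcotangentbundle}, with the final simplification via \eqref{eq:decompositionhd}. The only difference is that you spell out the diagram-collapsing and equivariance argument that justifies evaluating on the representative $((x_0,0),x_1,g_1)$, which the paper leaves implicit.
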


\begin{proof}
Once again, we use the explicit local expressions computed in the previous sections,
\begin{equation*}
\begin{tikzpicture}
\matrix (m) [matrix of math nodes,row sep=1em,column sep=3em,minimum width=2em]
{(x_0,x_1,g_1)\\
\big((x_0,0),x_1,g_1\big)\\
\Big((x_0,0),\big(x_1,g_1+\textrm{\normalfont h}_d^0(x_0,x_1)\big)\Big)\\
\Big((x_0,0),\big(x_1,g_1+\textrm{\normalfont h}_d^0(x_0,x_1)\big),D_1 L_d,D_2 L_d\Big)\\
\Big((x_0,0),D_2 L_d,-D_1 L_d,\big(x_1,g_1+\textrm{\normalfont h}_d^0(x_0,x_1)\big),\Big)\\
\displaystyle \left(\left((x_0,0),\frac{\partial L_d}{\partial x_1}+\textrm{\normalfont h}_{d,\Sigma}^*\left(\frac{\partial L_d}{\partial g_1}\right),\frac{\partial L_d}{\partial g_1}\right),\left(-\frac{\partial L_d}{\partial q_0},x_1,g_1+\textrm{\normalfont h}_d^0(x_0,x_1)-\textrm{\normalfont h}_{d,\Sigma}(x_1)\right)\right)\\
\displaystyle \left(x_0,\frac{\partial L_d}{\partial x_1}+\textrm{\normalfont h}_{d,\Sigma}^*\left(\frac{\partial L_d}{\partial g_1}\right),\frac{\partial L_d}{\partial g_1},-\frac{\partial L_d}{\partial q_0},x_1,g_1+\textrm{\normalfont h}_{d,Q}(x_0,0)\right),\\};
\path[-stealth]
(m-1-1) edge [|->] node [right] {\scriptsize\eqref{eq:quotientQQ}} (m-2-1)
(m-2-1) edge [|->] node [right] {\scriptsize\eqref{eq:invlambdadlocal}} (m-3-1)
(m-3-1) edge [|->] node [right] {\scriptsize\eqref{eq:dLdlocal}} (m-4-1)
(m-4-1) edge [|->] node [right] {\scriptsize$\gamma_Q^{d+}$ \eqref{eq:Tulczyjew_d+}} (m-5-1)
(m-5-1) edge [|->] node [right] {\scriptsize\eqref{eq:hatlambdadlocal},\;\eqref{eq:invadjointhatlambdadlocal}} (m-6-1)
(m-6-1) edge [|->] node [right] {\scriptsize\eqref{eq:quotientcotangentbundle}} (m-7-1);
\end{tikzpicture}
\end{equation*}
where the partial derivatives are evaluated at $\left(x_0,0,x_1,g_1+\textrm{\normalfont h}_d^0(x_0,x_1)\right)$ and we used \eqref{eq:decompositionhd} in the last step.
\end{proof}

For the $(-)$-case, we consider the diagram
\begin{equation*}
	\begin{tikzpicture}
			\matrix (m) [matrix of math nodes,row sep=5em,column sep=4em,minimum width=2em]
			{ Q\times Q & T^*(Q\times Q) & T^*(Q^*\times Q)\\
				 (\Sigma\times G)\times Q & T^*((\Sigma\times G)\times Q) & T^*((\Sigma^*\times\mathfrak g^*)\times Q)\\};
			\path[-stealth]
			(m-1-2) edge [] node [left] {$\left(\tilde\lambda_d,(\tilde\lambda_d^*)^{-1}\right)$} (m-2-2)
			(m-1-2) edge [] node [above] {$\gamma_Q^{d-}$} (m-1-3)
			(m-1-3) edge [] node [right] {$\left(\check\lambda_d,(\check\lambda_d^*)^{-1}\right)$} (m-2-3)
			(m-2-2) edge [] node [above] {$\hat\gamma_Q^{d-}$} (m-2-3)
			(m-1-1) edge [] node [left] {$\tilde\lambda_d$} (m-2-1)
			(m-1-1) edge [] node [above] {$dL_d$} (m-1-2)
			(m-2-1) edge [] node [above] {$\hat{dL}_{d}$} (m-2-2);
	\end{tikzpicture}
\end{equation*}
Subsequently, we arrive at the \emph{reduced $(-)$-discrete Dirac differential}, 
\begin{equation}\label{eq:reduceddiracdifferentialminus}
[\hat{\mathcal D}^{-}L_d]:((\Sigma\times G)\times Q)/G\longrightarrow T^*((\Sigma^*\times\mathfrak g^*)\times Q)/G.
\end{equation}

\begin{proposition}\label{prop:reduceddiracdifferentiallocalminus}
Locally, the reduced $(-)$-discrete Dirac differential is given by a map $[\hat{\mathcal D}^{-}L_d]:(\Sigma\times G)\times\Sigma\to(\Sigma^*\times\mathfrak g^*)\times\Sigma\times(\Sigma\times\mathfrak g)\times Q^*$. Furthermore, for each $(x_0,g_0,x_1)\in(\Sigma\times G)\times\Sigma$, it is given by
\begin{equation*}
[\hat{\mathcal D}^{-}L_d](x_0,g_0,x_1)=\left(-\frac{\partial L_d}{\partial x_0}-\textrm{\normalfont h}_{d,\Sigma}^*\left(\frac{\partial L_d}{\partial g_0}\right),-\frac{\partial L_d}{\partial g_0},x_1,-x_0,-g_0-\textrm{\normalfont h}_{d,Q}(x_1,0),-\frac{\partial L_d}{\partial q_1}\right),
\end{equation*}
where the partial derivatives of $L_d$ are evaluated at $\left(x_0,g_0+\textrm{\normalfont h}_d^0(x_1,x_0),x_1,0\right)$.
\end{proposition}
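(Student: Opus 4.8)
The plan is to mirror the proof of Proposition~\ref{prop:reduceddiracdifferentiallocal} essentially verbatim, replacing every ingredient by its $(-)$-counterpart. Recall that $[\hat{\mathcal D}^{-}L_d]$ is by construction the descent to the quotient of $\hat\gamma_Q^{d-}\circ\hat{dL}_{d}$, which, unwinding the diagram defining those two maps, means the following. Given $(x_0,g_0,x_1)\in(\Sigma\times G)\times\Sigma$, I would first lift it to the representative $(x_0,g_0,(x_1,0))\in(\Sigma\times G)\times Q$ of its class, then apply in turn $\tilde\lambda_d^{-1}$, the differential $dL_d$ from~\eqref{eq:dLdlocal}, the map $\gamma_Q^{d-}$ from~\eqref{eq:Tulczyjew_d-}, the pair $\bigl(\check\lambda_d,(\check\lambda_d^*)^{-1}\bigr)$, and finally the quotient projection identified as in Remark~\ref{remark:isomorphismtangentcotangentminus}. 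Each of these maps has an explicit local formula obtained from the corresponding formula in Section~\ref{sec:quotientspaces} simply by exchanging the order of the two factors, exactly as Remarks~\ref{remark:lambdaminus} and~\ref{remark:hatlambdaminus} indicate.

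Running through the chain, the first step gives $\tilde\lambda_d^{-1}(x_0,g_0,(x_1,0))=\bigl((x_0,g_0+\textrm{\normalfont h}_d^0(x_1,x_0)),(x_1,0)\bigr)$, the $(-)$-analogue of~\eqref{eq:invlambdadlocal}; this is precisely where the stated evaluation point $(x_0,g_0+\textrm{\normalfont h}_d^0(x_1,x_0),x_1,0)$ of the partial derivatives of $L_d$ comes from. Applying $dL_d$ and then $\gamma_Q^{d-}$, which by~\eqref{eq:Tulczyjew_d-} sends $(q_0,q_1,-p_0,p_1)$ to $(p_0,q_1,-q_0,-p_1)$, yields a point of $T^*(Q^*\times Q)$ whose base is $\bigl(-D_1 L_d,(x_1,0)\bigr)$ and whose fibre is $\bigl(-q_0,-D_2 L_d\bigr)$. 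Passing the base through the $(-)$-version of~\eqref{eq:hatlambdadlocal} produces the first two entries of the claimed tuple, $-\partial L_d/\partial x_0-\textrm{\normalfont h}_{d,\Sigma}^*(\partial L_d/\partial g_0)$ and $-\partial L_d/\partial g_0$, while passing the fibre through the $(-)$-version of~\eqref{eq:invadjointhatlambdadlocal} produces the $\Sigma\times\mathfrak g$ entry $\bigl(-x_0,\,-g_0-\textrm{\normalfont h}_d^0(x_1,x_0)-\textrm{\normalfont h}_{d,\Sigma}(-x_0)\bigr)$ together with $-\partial L_d/\partial q_1$ in the $Q^*$ slot. Finally, using linearity of $\textrm{\normalfont h}_{d,\Sigma}$ and the splitting $\textrm{\normalfont h}_d^0(x_1,x_0)=\textrm{\normalfont h}_{d,Q}(x_1,0)+\textrm{\normalfont h}_{d,\Sigma}(x_0)$ from~\eqref{eq:decompositionhd}, the $\textrm{\normalfont h}_{d,\Sigma}$-terms cancel, turning that entry into $\bigl(-x_0,-g_0-\textrm{\normalfont h}_{d,Q}(x_1,0)\bigr)$; applying the quotient identification of Remark~\ref{remark:isomorphismtangentcotangentminus} (the factor-exchanged analogue of~\eqref{eq:quotientcotangentbundle}, which subtracts the $G$-part of the base point — here $0$ — from the $\mathfrak g$-component of the fibre) and rearranging the six entries into the asserted order yields the claimed formula.

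I expect the only genuine difficulty to be bookkeeping. Two points deserve care: first, $\gamma_Q^{d-}$ introduces two sign changes where $\gamma_Q^{d+}$ introduced only one, so the minus signs in front of $\partial L_d/\partial x_0$, $\partial L_d/\partial g_0$, $x_0$ and $\partial L_d/\partial q_1$ must be tracked consistently; second, the factor-exchanged local formulas for $\tilde\lambda_d^{-1}$, $\check\lambda_d$, $(\check\lambda_d^*)^{-1}$ and for the identification of Remark~\ref{remark:isomorphismtangentcotangentminus} were not written out explicitly earlier, so they must first be transcribed from the $(+)$-formulas before the computation can proceed mechanically. Once that is done, the argument is entirely parallel to the proof of Proposition~\ref{prop:reduceddiracdifferentiallocal}, and the same commutative-diagram presentation can be reused with the $(-)$-labelled maps.
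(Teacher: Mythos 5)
Your proposal is correct and follows essentially the same route as the paper: the paper proves the $(+)$-case (Proposition \ref{prop:reduceddiracdifferentiallocal}) by chaining the local formulas for the quotient identification, $\tilde\lambda_d^{-1}$ (resp.\ $\lambda_d^{-1}$), $dL_d$, $\gamma_Q^{d\pm}$, $(\check\lambda_d,(\check\lambda_d^*)^{-1})$ (resp.\ $(\hat\lambda_d,(\hat\lambda_d^*)^{-1})$), and the quotient projection, and leaves the $(-)$-case as the analogous computation, which is exactly what you carry out. Your sign bookkeeping through $\gamma_Q^{d-}$, the evaluation point $(x_0,g_0+\textrm{\normalfont h}_d^0(x_1,x_0),x_1,0)$, and the cancellation of the $\textrm{\normalfont h}_{d,\Sigma}$-terms via \eqref{eq:decompositionhd} all check out against the stated formula.
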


\subsection{\texorpdfstring{$(+)$}{(+)}-discrete Lagrange--Poincar\'e--Dirac equations}

Define $\hat L_{d+}: Q\times(\Sigma\times G)\to\mathbb R$ by the condition $\hat L_{d+}\circ\lambda_d=L_d$. Clearly, it is $G$-invariant, thus inducing the  \emph{$(+)$-discrete reduced Lagrangian}, 
\begin{equation*}
l_{d+}:(Q\times(\Sigma\times G))/G\longrightarrow\mathbb R.
\end{equation*}
Locally, we may regard it as $l_{d+}:\Sigma\times\Sigma\times G\to\mathbb R$. It is easy to check that
\begin{equation}\label{eq:reducedlagrangianlocal}
l_{d+}(x_0,x_1,g_1)=L_d\big(x_0,0,x_1,g_1+\textrm{\normalfont h}_d^0(x_0,x_1)\big),\qquad(x_0,x_1,g_1)\in\Sigma\times\Sigma\times G.
\end{equation}

\begin{lemma}\label{lemma:partialderivativesld}
Locally, for each $(x_0,x_1,g_1)\in\Sigma\times\Sigma\times G$ we have 
\begin{align*}
\displaystyle \frac{\partial L_d}{\partial q_0}&=\left(\frac{\partial l_{d+}}{\partial x_0}-\left\langle\frac{\partial l_{d+}}{\partial g_1},\textrm{\normalfont h}_d^0(\cdot,0)\right\rangle,-\left\langle\frac{\partial l_{d+}}{\partial g_1},\textrm{\normalfont h}_d((0,\cdot),0)\right\rangle\right),\vspace{0.1cm}\\
\displaystyle \frac{\partial L_d}{\partial x_1}&=\frac{\partial l_{d+}}{\partial x_1}-\left\langle\frac{\partial l_{d+}}{\partial g_1},\textrm{\normalfont h}_d^0(0,\cdot)\right\rangle,\qquad\frac{\partial L_d}{\partial g_1}=\frac{\partial l_{d+}}{\partial g_1},
\end{align*}
where the partial derivatives of $L_d$ and $l_{d+}$ are evaluated at $\left(x_0,0,x_1,g_1+\textrm{\normalfont h}_d^0(x_0,x_1)\right)$ and $(x_0,x_1,g_1)$, respectively.
\end{lemma}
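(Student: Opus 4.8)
The plan is to prove the chain-rule identities of Lemma~\ref{lemma:partialderivativesld} by differentiating the defining relation \eqref{eq:reducedlagrangianlocal} for $l_{d+}$, which expresses $l_{d+}(x_0,x_1,g_1)$ as $L_d$ evaluated at the point $\big(x_0,0,x_1,g_1+\textrm{\normalfont h}_d^0(x_0,x_1)\big)\in(\Sigma\times G)\times(\Sigma\times G)$. Write $F(x_0,x_1,g_1)=\big(x_0,0,x_1,g_1+\textrm{\normalfont h}_d^0(x_0,x_1)\big)$ for this map, so that $l_{d+}=L_d\circ F$. Since $\textrm{\normalfont h}_d^0:\Sigma\times\Sigma\to G$ is linear (being built from the linear horizontal lift, cf.\ Section~\ref{sec:localdiscreteconnection}), its partial derivatives are the map itself: $D_1\textrm{\normalfont h}_d^0(x_0,x_1)(x)=\textrm{\normalfont h}_d^0(x,0)$ and $D_2\textrm{\normalfont h}_d^0(x_0,x_1)(x)=\textrm{\normalfont h}_d^0(0,x)$, exactly as in \eqref{eq:partialderivativeshd}. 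I would also recall the decomposition \eqref{eq:decompositionhd}, namely $\textrm{\normalfont h}_d(q_0,x_1)=\textrm{\normalfont h}_{d,Q}(q_0)+\textrm{\normalfont h}_{d,\Sigma}(x_1)$, together with $\textrm{\normalfont h}_{d,Q}(q_0)=\textrm{\normalfont h}_d(q_0,0)$, so that in particular $\textrm{\normalfont h}_d^0(x_0,x_1)=\textrm{\normalfont h}_d((x_0,0),x_1)=\textrm{\normalfont h}_{d,Q}(x_0,0)+\textrm{\normalfont h}_{d,\Sigma}(x_1)$.

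First I would compute $\partial l_{d+}/\partial g_1$. Differentiating $l_{d+}=L_d\circ F$ in the $g_1$ direction, only the last slot of $F$ depends on $g_1$, and it does so via the identity map $g_1\mapsto g_1+(\text{const})$; hence $\partial l_{d+}/\partial g_1=\partial L_d/\partial g_1$, giving the third identity immediately. Next, for the $x_1$ derivative: $F$ depends on $x_1$ through both its third slot (via the identity) and its fourth slot (via $x_1\mapsto g_1+\textrm{\normalfont h}_d^0(x_0,x_1)$, whose derivative is $\textrm{\normalfont h}_d^0(0,\cdot)$). The chain rule then yields $\partial l_{d+}/\partial x_1=\partial L_d/\partial x_1+\langle\partial L_d/\partial g_1,\textrm{\normalfont h}_d^0(0,\cdot)\rangle$, and rearranging (using the already-established $\partial L_d/\partial g_1=\partial l_{d+}/\partial g_1$) gives the stated formula for $\partial L_d/\partial x_1$. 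Finally, for $\partial L_d/\partial q_0$, I treat the two components $\partial L_d/\partial x_0$ and $\partial L_d/\partial g_0$ separately. The point $F(x_0,x_1,g_1)$ has its $G$-component in the first factor frozen at $0$, so $\partial l_{d+}/\partial x_0$ only sees the dependence of $F$ on $x_0$ through its first slot (identity) and its fourth slot ($x_0\mapsto\textrm{\normalfont h}_d^0(x_0,x_1)$, derivative $\textrm{\normalfont h}_d^0(\cdot,0)$); the chain rule gives $\partial l_{d+}/\partial x_0=\partial L_d/\partial x_0+\langle\partial L_d/\partial g_1,\textrm{\normalfont h}_d^0(\cdot,0)\rangle$, which rearranges to the first component of the claimed identity. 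For the $g_0$-component, $l_{d+}$ has no $g_0$ argument at all, but $\partial L_d/\partial g_0$ is nonetheless determined: differentiating $L_d\big(x_0,g_0,x_1,g_1+\textrm{\normalfont h}_d((x_0,g_0),x_1)\big)$ in $g_0$ at $g_0=0$ — which by $G$-invariance of $L_d$ equals $L_d$ composed with the constant-in-$g_0$ expression after absorbing $g_0$, but more directly one uses that the composite $g_0\mapsto L_d(x_0,g_0,x_1,g_1+\textrm{\normalfont h}_{d,Q}(x_0,g_0)+\textrm{\normalfont h}_{d,\Sigma}(x_1))$ is constant (it equals $l_{d+}(x_0,x_1,g_1-g_0\cdot(\text{something}))$... ) — I would instead argue via the invariance relation $L_d(g+q_0,g+q_1)=L_d(q_0,q_1)$ differentiated at $g=0$, which gives $D_1L_d(q_0,q_1)+D_2L_d(q_0,q_1)=0$ on the Lie-algebra (i.e.\ $G$) directions, hence $\partial L_d/\partial g_0=-\langle\partial L_d/\partial g_1,\text{(chain term)}\rangle$, and combine with $\partial L_d/\partial g_1=\partial l_{d+}/\partial g_1$ and $D_2\textrm{\normalfont h}_d((x_0,0),x_1)$ acting on $G$-directions via $\textrm{\normalfont h}_d((0,\cdot),0)$.

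The main obstacle I anticipate is bookkeeping rather than conceptual: keeping straight which slots of the four-argument function $L_d$ and the three-argument function $l_{d+}$ are being perturbed, and correctly tracking the fact that $\textrm{\normalfont h}_d^0$ and $\textrm{\normalfont h}_d$, restricted to various sub-directions, appear as the linear maps $\textrm{\normalfont h}_d^0(\cdot,0)$, $\textrm{\normalfont h}_d^0(0,\cdot)$, and $\textrm{\normalfont h}_d((0,\cdot),0)$ in the answer. The delicate point is the $\partial L_d/\partial g_0$ identity, because $l_{d+}$ has no $g_0$ slot, so this component must be extracted from $G$-invariance of $L_d$ (Lemma~\ref{lemma:equivariancepartialderivatives} and its infinitesimal form) together with the decomposition \eqref{eq:decompositionhd}, rather than from a direct chain rule on \eqref{eq:reducedlagrangianlocal}; I would make sure to spell out that $D_1L_d+D_2L_d$ vanishes on $\mathfrak g$-directions and then feed in the appropriate derivative of $g_1+\textrm{\normalfont h}_d((x_0,g_0),x_1)$ with respect to $g_0$, which is the linear map $g_0\mapsto g_0+\textrm{\normalfont h}_d((0,g_0),0)$. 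Everything else is a routine application of the chain rule and the linearity relations \eqref{eq:partialderivativeshd}, and I would present it compactly, computing the three derivatives $\partial l_{d+}/\partial g_1$, $\partial l_{d+}/\partial x_1$, $\partial l_{d+}/\partial x_0$ in that order and then solving the resulting linear system for the partials of $L_d$.
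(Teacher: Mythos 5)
Your proposal is essentially correct, but it reroutes one piece of the argument relative to the paper. The paper's proof does not differentiate the restricted relation \eqref{eq:reducedlagrangianlocal}; instead it starts from the identity
$L_d\big((x_0,g_0),(x_1,g_1')\big)=l_{d+}\big(x_0,x_1,g_1'-\textrm{\normalfont h}_d((x_0,g_0),x_1)\big)$,
valid for \emph{arbitrary} $g_0$ and $g_1'$ (this is where the $G$-invariance of $L_d$ enters, packaged through $\hat L_{d+}$ and the quotient), and applies the chain rule in all four variables, afterwards setting $g_0=0$, $g_1'=g_1+\textrm{\normalfont h}_d^0(x_0,x_1)$. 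That uniform computation yields all four partials at once, including $\partial L_d/\partial g_0=-\langle\partial l_{d+}/\partial g_1,\textrm{\normalfont h}_d((0,\cdot),0)\rangle$ directly from the $g_0$-dependence of $-\textrm{\normalfont h}_d((x_0,g_0),x_1)$ via \eqref{eq:partialderivativeshd}. Your route differentiates the inverse-direction relation $l_{d+}=L_d\circ F$ with $g_0$ frozen at $0$, which correctly produces the $g_1$-, $x_1$- and $x_0$-identities, but necessarily loses the $g_0$-information; you recover it from infinitesimal $G$-invariance, $\partial L_d/\partial g_0+\partial L_d/\partial g_1=0$, which is legitimate since $L_d$ is assumed $G$-invariant in this section. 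To match the lemma as stated you then still need the observation that $\textrm{\normalfont h}_d((0,g),0)=g$ for all $g\in G$ (immediate from the property $\textrm{\normalfont h}_d(q_0,x_0)=g_0$ recorded in Section~\ref{sec:localdiscreteconnection}), so that $-\partial l_{d+}/\partial g_1=-\langle\partial l_{d+}/\partial g_1,\textrm{\normalfont h}_d((0,\cdot),0)\rangle$; you gesture at this but do not state it, and you misattribute the relevant linear map to $D_2\textrm{\normalfont h}_d$ when it is the restriction of $D_1\textrm{\normalfont h}_d$ to $G$-directions. These are easily repaired loose ends rather than gaps. The trade-off: the paper's version avoids any separate invariance argument and any need for the identification $\textrm{\normalfont h}_d((0,\cdot),0)=\mathrm{id}_G$, at the cost of working with the unrestricted identity; yours stays closer to the defining formula \eqref{eq:reducedlagrangianlocal} for $l_{d+}$ but must import invariance explicitly for the single component that formula cannot see. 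You should also delete the abandoned half-argument ("it equals $l_{d+}(x_0,x_1,g_1-g_0\cdot(\text{something}))$\dots") before finalizing.
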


\begin{proof}
Let $(x_0,g_0,x_1,g_1')\in(\Sigma\times G)\times(\Sigma\times G)$. Using \eqref{eq:lambdadlocal}, we obtain
\begin{align*}
L_d\big((x_0,g_0),(x_1,g_1')\big) & = \left(\hat L_{d+}\circ\lambda_d\right)\big((x_0,g_0),(x_1,g_1')\big)\vspace{0.1cm}\\
& =  \hat L_{d+}\left((x_0,g_0),x_1,g_1'-\textrm{\normalfont h}_d\big((x_0,g_0),x_1\big)\right)\vspace{0.1cm}\\
& = l_{d+}\left(x_0,x_1,g_1'-\textrm{\normalfont h}_d\big((x_0,g_0),x_1\big)\right).
\end{align*}
Recall that the partial derivatives of $\textrm{\normalfont h}_d$ are given by \eqref{eq:partialderivativeshd}. Using this fact, as well as the chain rule and the relation above, we obtain
\begin{alignat*}{3}
\displaystyle\frac{\partial L_d}{\partial x_0}&=\frac{\partial l_{d+}}{\partial x_0}-\left\langle\frac{\partial l_{d+}}{\partial g_1},\textrm{\normalfont h}_d^0(\cdot,0)\right\rangle,&\qquad  
\displaystyle\frac{\partial L_d}{\partial g_0}&=-\left\langle\frac{\partial l_{d+}}{\partial g_1},\textrm{\normalfont h}_d((0,\cdot),0)\right\rangle,\vspace{0.1cm}\\
\displaystyle\frac{\partial L_d}{\partial x_1}&=\frac{\partial l_{d+}}{\partial x_1}-\left\langle\frac{\partial l_{d+}}{\partial g_1},\textrm{\normalfont h}_d^0(0,\cdot)\right\rangle, &
\displaystyle\frac{\partial L_d}{\partial g_1}&=\frac{\partial l_{d+}}{\partial g_1}.
\end{alignat*}
In the above expressions, the partial derivatives of $L_d$ and $l_{d+}$ are evaluated at $(x_0,g_0,x_1,g_1')$ and $(x_0,x_1,g_1'-\textrm{\normalfont h}_d((x_0,g_0),x_1))$, respectively. To conclude, we choose $g_0=0$ and $g_1'=g_1+\textrm{\normalfont h}_d^0(x_0,x_1)$.
\end{proof}

Gathering the results of the previous sections, we arrive at the main result of this paper.

\begin{theorem}[$(+)$-discrete Lagrange--Poincar\'e--Dirac equations]
Let $(L_d,X_{d+})$ be a $(+)$-discrete Lagrange--Dirac system and suppose that $L_d$ is $G$-invariant. Let $[\hat D^{d+}]$, $[\hat X_{d+}]$ and $[\hat{\mathcal D}^{+} L_d]$ be the reduced $(+)$-discrete Dirac structure, the $(+)$-discrete reduced vector field and the reduced $(+)$-discrete Dirac differential defined in \eqref{eq:reduceddiracstructure}, \eqref{eq:reducedvectorfield} and \eqref{eq:reduceddiracdifferential}, respectively. Then, they satisfy the \emph{$(+)$-discrete Lagrange--Poincar\'e--Dirac equations}, i.e., for each $0\leq k\leq N-1$, we have
\begin{equation*}
\Big([\hat X_{d+}]\big([q_k,x_{k+1},g_{k+1}]\big),[\hat{\mathcal D}^{+} L_d]\big([q_k,x_k^+,g_k^+]\big)\Big)\in[\hat D^{d+}].
\end{equation*}
\end{theorem}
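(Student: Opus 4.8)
The plan is to obtain the statement as a purely formal consequence of the (unreduced) $(+)$-discrete Lagrange--Dirac equations satisfied by $(L_d,X_{d+})$, by transporting that single membership relation through the equivariant linear isomorphisms (the right trivializations $\lambda_d$, $\hat\lambda_d$ and their adjoints) and the quotient projections set up in Sections~\ref{sec:reductiondiracstrucure} and~\ref{sec:reducedequations}. No new geometric content is required: $\hat D^{d+}$, $\hat X_{d+}$ and $\hat{\mathcal D}^+ L_d$, together with their reductions, were arranged precisely so that the relevant squares commute and the $G$-actions are intertwined, so that the theorem is just the image of the Lagrange--Dirac condition under these maps.

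First I would invoke the hypothesis: for each $0\le k\le N-1$ the pair $\bigl(X_{d+}^k,\mathcal D^+ L_d(q_k,q_k^+)\bigr)$ lies in $D^{d+}$. Unpacking this membership yields in particular the matching $q_k^+=q_{k+1}$, so that $X_{d+}^k=X_{d+}(q_k,q_{k+1})$ and $\mathcal D^+ L_d(q_k,q_k^+)=\mathcal D^+ L_d(q_k,q_{k+1})$, where $X_{d+}$ and $\mathcal D^+ L_d=\gamma_Q^{d+}\circ dL_d$ are viewed as maps out of $Q\times Q$. Next I would apply the trivializations, namely $(\hat\lambda_d,\hat\lambda_d)$ to the vector-field value and $(\hat\lambda_d,(\hat\lambda_d^*)^{-1})$ to the covector. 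By the commuting diagrams defining $\hat X_{d+}$, $\hat\gamma_Q^{d+}$ and $\hat{dL}_{d}$ (hence $\hat{\mathcal D}^+ L_d=\hat\gamma_Q^{d+}\circ\hat{dL}_{d}$), these images are exactly $\hat X_{d+}\bigl(\lambda_d(q_k,q_{k+1})\bigr)$ and $\hat{\mathcal D}^+ L_d\bigl(\lambda_d(q_k,q_{k+1})\bigr)$; and since $\hat D^{d+}$ is \emph{defined} as the set of triples whose preimage under $\hat\lambda_d^{-1}$ (on the states) and $(\hat\lambda_d^{-1},\hat\lambda_d^*)$ (on the covector slot) lies in $D^{d+}$, the hypothesis is equivalent to $\bigl(\hat X_{d+}(\lambda_d(q_k,q_{k+1})),\,\hat{\mathcal D}^+ L_d(\lambda_d(q_k,q_{k+1}))\bigr)\in\hat D^{d+}$.

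Then I would descend to the quotient. Since $\hat D^{d+}$ is $G$-invariant and $\hat X_{d+}$, $\hat{\mathcal D}^+ L_d$ are $G$-equivariant, and since $[\hat D^{d+}]$, $[\hat X_{d+}]$, $[\hat{\mathcal D}^+ L_d]$ are by construction the images of $\hat D^{d+}$, $\hat X_{d+}$, $\hat{\mathcal D}^+ L_d$ under the quotient projections (so that $[\hat X_{d+}]([\,\cdot\,])=[\hat X_{d+}(\cdot)]$, and likewise for the reduced Dirac differential), projecting the relation above yields $\bigl([\hat X_{d+}]([\lambda_d(q_k,q_{k+1})]),\,[\hat{\mathcal D}^+ L_d]([\lambda_d(q_k,q_{k+1})])\bigr)\in[\hat D^{d+}]$. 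To finish, I would identify the argument: by the definition of $\lambda_d$ and the isomorphism~\eqref{eq:quotientQQ}, $[\lambda_d(q_k,q_{k+1})]=[q_k,[q_{k+1}],\omega_d(q_k,q_{k+1})]$, and writing $x_{k+1}=[q_{k+1}]$, $g_{k+1}=\omega_d(q_k,q_{k+1})$ and using $q_k^+=q_{k+1}$ (so $x_k^+=x_{k+1}$, $g_k^+=g_{k+1}$) turns this into the asserted relation.

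The main obstacle is not mathematical depth but keeping the bookkeeping straight: one has to check that the two arguments in the statement, $[q_k,x_{k+1},g_{k+1}]$ for $[\hat X_{d+}]$ and $[q_k,x_k^+,g_k^+]$ for $[\hat{\mathcal D}^+ L_d]$, really denote the \emph{same} point $[\lambda_d(q_k,q_{k+1})]$ of the reduced discrete Pontryagin bundle---which is exactly where the discrete matching $q_k^+=q_{k+1}$ enters---and that the base points of the tangent and cotangent parts of $[\hat D^{d+}]$ align, a compatibility inherited from the base point $z^+=(q_0,p_1)$ already present in $D^{d+}$. As an alternative route that also makes the resulting equations explicit, one can verify the claim directly in a trivialization by substituting the local formulas of Proposition~\ref{prop:reduceddiractructurelocal}, Proposition~\ref{prop:reduceddiracdifferentiallocal} and the local form of $[\hat X_{d+}]$, and checking that the defining relations of $[\hat D^{d+}]$ reduce to the discrete Euler--Lagrange equations $p_k=-D_1 L_d$, $p_{k+1}=D_2 L_d$; combining this with Lemma~\ref{lemma:partialderivativesld} then expresses them in terms of $l_{d+}$.
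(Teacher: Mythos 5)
Your proposal is correct and matches the paper's (largely implicit) argument: the paper justifies the theorem precisely by ``gathering the results of the previous sections,'' i.e., the unreduced Lagrange--Dirac inclusion is transported through the equivariant trivializations $\lambda_d$, $\hat\lambda_d$ and then descends to the quotient because $\hat D^{d+}$, $\hat X_{d+}$ and $\hat{\mathcal D}^+L_d$ were constructed equivariantly, with the matching $q_k^+=q_{k+1}$ identifying the two base points exactly as you note. Your ``alternative route'' via the local formulas is likewise what the paper does immediately afterwards to obtain \eqref{eq:geometricequations}, so no gap remains.
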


In order to obtain its local expression, we write the discrete vector field as
\begin{equation*}
[\hat X_{d+}]=\left\{[\hat X_{d+}^k]=(x_k,w_k,\mu_k,q_{k+1},w_{k+1},\mu_{k+1})\mid 0\leq k\leq N-1\right\}.
\end{equation*}
Subsequently, the reduced equations read as
\begin{equation*}
\left([\hat X_{d+}^k],[\hat{\mathcal D}^{+} L_d](x_k,x_k^+,g_k^+)\right)\in[\hat D^{d+}],\qquad 0\leq k\leq N-1.
\end{equation*} 
Making use of Proposition \ref{prop:reduceddiractructurelocal}, Proposition \ref{prop:reduceddiracdifferentiallocal} and Lemma \ref{lemma:partialderivativesld} we arrive at the local expression for the reduced discrete equations of motion,
\begin{equation}\label{eq:geometricequations}
\left\{\begin{array}{l}
\displaystyle \frac{\partial l_{d+}}{\partial x_1}-\left\langle\frac{\partial l_{d+}}{\partial g_1},\textrm{\normalfont h}_d^0(0,\cdot)\right\rangle+\textrm{\normalfont h}_{d,\Sigma}^*\left(\frac{\partial l_{d+}}{\partial g_1}\right)=w_{k+1},\vspace{0.2cm}\\
\displaystyle \frac{\partial l_{d+}}{\partial g_1}=\mu_{k+1},\vspace{0.2cm}\\
\displaystyle -\frac{\partial l_{d+}}{\partial x_0}+\left\langle\frac{\partial l_{d+}}{\partial g_1},\textrm{\normalfont h}_d^0(\cdot,0)\right\rangle=w_k-\textrm{\normalfont h}_{d,\Sigma}^*(\mu_k),\vspace{0.2cm}\\
\displaystyle \left\langle\frac{\partial l_{d+}}{\partial g_1},\textrm{\normalfont h}_d((0,\cdot),0)\right\rangle=\mu_k,\vspace{0.2cm}\\
\displaystyle x_k^+=x_{k+1},\vspace{0.2cm}\\
\displaystyle g_k^++\textrm{\normalfont h}_{d,Q}(x_k,0)=g_{k+1}-\textrm{\normalfont h}_{d,\Sigma}(x_{k+1}).\vspace{0.2cm}\\
\end{array}\right.
\end{equation}
In the above equations, partial derivatives of $l_{d+}$ are evaluated at $(x_k,x_k^+,g_k^+)$.

\subsection{\texorpdfstring{$(-)$}{(--)}-discrete Lagrange--Poincar\'e--Dirac equations}

As in the previous case, we define $\hat L_{d-}: (\Sigma\times G)\times Q\to\mathbb R$ by the condition $\hat L_{d-}\circ\tilde\lambda_d=L_d$, which induces  \emph{$(-)$-discrete reduced Lagrangian}, 
\begin{equation*}
l_{d-}:((\Sigma\times G)\times Q)/G\longrightarrow\mathbb R.
\end{equation*}
Locally, it is given by $l_{d-}:\Sigma\times G\times\Sigma\to\mathbb R$. It is easy to check that
\begin{equation}\label{eq:reducedlagrangianlocalminus}
l_{d-}(x_0,g_0,x_1)=L_d\big(x_0,g_0+\textrm{\normalfont h}_d^0(x_1,x_0),x_1,0\big),\qquad(x_0,g_0,x_1)\in\Sigma\times G\times\Sigma.
\end{equation}

\begin{lemma}\label{lemma:partialderivativesldminus}
Locally, for each $(x_0,x_1,g_1)\in\Sigma\times\Sigma\times G$ we have 
\begin{align*}
\displaystyle \frac{\partial L_d}{\partial q_0} & =\left(\frac{\partial l_{d-}}{\partial x_0}-\left\langle\frac{\partial l_{d-}}{\partial g_0},\textrm{\normalfont h}_d^0(0,\cdot)\right\rangle,\frac{\partial l_{d-}}{\partial g_0}\right),\vspace{0.1cm}\\
\displaystyle \frac{\partial L_d}{\partial x_1} & =\frac{\partial l_{d-}}{\partial x_1}-\left\langle\frac{\partial l_{d-}}{\partial g_0},\textrm{\normalfont h}_d^0(\cdot,0)\right\rangle,\qquad\frac{\partial L_d}{\partial g_1}=-\left\langle\frac{\partial l_{d-}}{\partial g_0},\textrm{\normalfont h}_d((0,\cdot),0)\right\rangle,
\end{align*}
where the partial derivatives of $L_d$ and $l_{d-}$ are evaluated at $\left(x_0,g_0+\textrm{\normalfont h}_d^0(x_1,x_0),x_1,0\right)$ and $(x_0,g_0,x_1)$, respectively.
\end{lemma}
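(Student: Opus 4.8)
The plan is to mirror exactly the proof of Lemma~\ref{lemma:partialderivativesld}, but working with the trivialization $\tilde\lambda_d$ (which trivializes the first factor, as in Remark~\ref{remark:lambdaminus}) in place of $\lambda_d$. First I would write down the defining relation $\hat L_{d-}\circ\tilde\lambda_d=L_d$ in local coordinates using the explicit formula for $\tilde\lambda_d$, namely $\tilde\lambda_d\big((x_0,g_0),(x_1,g_1')\big)=\big(x_0,g_1'-\textrm{\normalfont h}_d((x_1,g_1'),x_0),x_1\big)$ evaluated with the relevant arguments — but here the computation is cleaner if one works directly from the closed form \eqref{eq:reducedlagrangianlocalminus}, i.e.
\[
L_d\big((x_0,g_0),(x_1,g_1')\big)=l_{d-}\Big(x_0,\,g_0-\textrm{\normalfont h}_d\big((x_1,g_1'),x_0\big)+\textrm{\normalfont h}_d^0(x_1,x_0),\,x_1\Big),
\]
after absorbing the $\textrm{\normalfont h}_d$-terms so that the second argument of $l_{d-}$ reduces to $g_0$ exactly when $g_1'=0$. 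Then I would differentiate both sides with respect to each of the four local variables $x_0,g_0,x_1,g_1'$, apply the chain rule, and invoke \eqref{eq:partialderivativeshd} to evaluate the partial derivatives of $\textrm{\normalfont h}_d$ (which, by linearity, are constant maps: $D_1\textrm{\normalfont h}_d(\cdot,\cdot)(q)=\textrm{\normalfont h}_d(q,0)$ and $D_2\textrm{\normalfont h}_d(\cdot,\cdot)(x)=\textrm{\normalfont h}_d(0,x)$).

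Concretely, since the only $g_1'$-dependence on the right-hand side enters through $\textrm{\normalfont h}_d\big((x_1,g_1'),x_0\big)$, differentiating in $g_1'$ gives $\partial L_d/\partial g_1=-\langle \partial l_{d-}/\partial g_0,\textrm{\normalfont h}_d((0,\cdot),0)\rangle$, which is the third claimed identity. Differentiating in $g_0$ gives immediately $\partial L_d/\partial g_0=\partial l_{d-}/\partial g_0$, the $\mathfrak g$-component of the first identity. Differentiating in $x_1$ picks up the $\textrm{\normalfont h}_d^0$-terms through both the explicit $\textrm{\normalfont h}_d^0(x_1,x_0)$ and, when setting $g_1'=0$, the vanishing of $\textrm{\normalfont h}_d((x_1,0),x_0)$ against $\textrm{\normalfont h}_d^0(x_1,x_0)$; the surviving term is $-\langle\partial l_{d-}/\partial g_0,\textrm{\normalfont h}_d^0(\cdot,0)\rangle$ added to $\partial l_{d-}/\partial x_1$, giving the second identity. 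Finally, differentiating in $x_0$ yields the $\Sigma^*$-component $\partial l_{d-}/\partial x_0-\langle\partial l_{d-}/\partial g_0,\textrm{\normalfont h}_d^0(0,\cdot)\rangle$, completing the first identity. To conclude, as in Lemma~\ref{lemma:partialderivativesld}, one specializes to $g_1'=0$ and $g_0\mapsto g_0+\textrm{\normalfont h}_d^0(x_1,x_0)$ so that the evaluation points become $\left(x_0,g_0+\textrm{\normalfont h}_d^0(x_1,x_0),x_1,0\right)$ for $L_d$ and $(x_0,g_0,x_1)$ for $l_{d-}$.

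The main obstacle is purely bookkeeping: keeping straight which slot of $\textrm{\normalfont h}_d$ and $\textrm{\normalfont h}_d^0$ each variable feeds into, and correctly splitting $D_1\textrm{\normalfont h}_d$ into its $\Sigma$- and $G$-components via \eqref{eq:decompositionhd} and \eqref{eq:partialderivativeshd}, since $\textrm{\normalfont h}_d$ is being regarded as a function on $(\Sigma\times G)\times\Sigma$ while $\textrm{\normalfont h}_d^0$ lives on $\Sigma\times\Sigma$. There is no genuine analytic difficulty — everything is linear, so all derivatives are the maps themselves — but the sign conventions (note the roles of the first and second slots are swapped relative to the $(+)$-case, since now the \emph{second} configuration variable is the one set to the identity) require care. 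I would double-check the final formulas by verifying consistency with the $(-)$-discrete Lagrange--Poincar\'e--Dirac equations that follow, in analogy with how \eqref{eq:geometricequations} is assembled in the $(+)$-case.
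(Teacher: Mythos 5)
Your overall plan is the right one, and it is the intended proof: mimic the proof of Lemma~\ref{lemma:partialderivativesld} with $\tilde\lambda_d$ in place of $\lambda_d$, express $L_d$ through $l_{d-}$, apply the chain rule using \eqref{eq:partialderivativeshd}, and specialize at the end. However, the identity you take as the starting point is incorrect, and this is not mere bookkeeping. From $\hat L_{d-}\circ\tilde\lambda_d=L_d$, the $G$-invariance of $\hat L_{d-}$, and \eqref{eq:reducedlagrangianlocalminus}, the correct relation is
\begin{equation*}
L_d\big((x_0,g_0),(x_1,g_1')\big)=l_{d-}\Big(x_0,\;g_0-\textrm{\normalfont h}_d\big((x_1,g_1'),x_0\big),\;x_1\Big)
=l_{d-}\big(x_0,\;g_0-g_1'-\textrm{\normalfont h}_d^0(x_1,x_0),\;x_1\big),
\end{equation*}
with no extra $+\textrm{\normalfont h}_d^0(x_1,x_0)$ term: at $g_1'=0$ the middle slot is $g_0-\textrm{\normalfont h}_d^0(x_1,x_0)$, not $g_0$ (this is exactly what \eqref{eq:reducedlagrangianlocalminus} says, read backwards), and it reduces to $g_0$ only after the final substitution $g_1'=0$, $g_0\mapsto g_0+\textrm{\normalfont h}_d^0(x_1,x_0)$. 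Your version, $l_{d-}\big(x_0,g_0-\textrm{\normalfont h}_d((x_1,g_1'),x_0)+\textrm{\normalfont h}_d^0(x_1,x_0),x_1\big)$, simplifies by equivariance to $l_{d-}(x_0,g_0-g_1',x_1)$, which would force $L_d(x_0,g_0,x_1,g_1')=L_d(x_0,g_0+\textrm{\normalfont h}_d^0(x_1,x_0),x_1,g_1')$, false in general. (Your displayed local formula for $\tilde\lambda_d$ has the same slip: the middle component is $\omega_d(q_1,q_0)=g_0-\textrm{\normalfont h}_d((x_1,g_1'),x_0)$, not $g_1'-\textrm{\normalfont h}_d((x_1,g_1'),x_0)$.)

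The error propagates: if you actually differentiate the identity you wrote, the $x_0$- and $x_1$-dependence of $-\textrm{\normalfont h}_d((x_1,g_1'),x_0)$ cancels against that of the spurious $+\textrm{\normalfont h}_d^0(x_1,x_0)$, so you would get $\partial L_d/\partial x_0=\partial l_{d-}/\partial x_0$ and $\partial L_d/\partial x_1=\partial l_{d-}/\partial x_1$, missing the $\langle\partial l_{d-}/\partial g_0,\textrm{\normalfont h}_d^0(0,\cdot)\rangle$ and $\langle\partial l_{d-}/\partial g_0,\textrm{\normalfont h}_d^0(\cdot,0)\rangle$ corrections that the lemma asserts; the formulas you then quote as the outcome do not follow from your identity (only the $g_0$- and $g_1'$-derivatives survive, the latter coincidentally, since $\textrm{\normalfont h}_d((0,\cdot),0)=\mathrm{id}_G$). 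With the corrected identity above, the chain rule together with \eqref{eq:partialderivativeshd} gives precisely the three stated formulas, and evaluating at $g_1'=0$, $g_0\mapsto g_0+\textrm{\normalfont h}_d^0(x_1,x_0)$ yields the evaluation points $\left(x_0,g_0+\textrm{\normalfont h}_d^0(x_1,x_0),x_1,0\right)$ for $L_d$ and $(x_0,g_0,x_1)$ for $l_{d-}$, completing the proof exactly as in the $(+)$ case.
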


\begin{theorem}[$(-)$-discrete Lagrange--Poincar\'e--Dirac equations]
Let $(L_d,X_{d-})$ be a $(-)$-discrete Lagrange--Dirac system and suppose that $L_d$ is $G$-invariant. Let $[\hat D^{d-}]$, $[\hat X_{d-}]$ and $[\hat{\mathcal D}^{-} L_d]$ be the reduced $(-)$-discrete Dirac structure, the $(-)$-discrete reduced vector field and the reduced $(-)$-discrete Dirac differential defined in \eqref{eq:reduceddiracstructureminus}, \eqref{eq:reducedvectorfieldminus} and \eqref{eq:reduceddiracdifferentialminus}, respectively. Then, they satisfy the \emph{$(-)$-discrete Lagrange--Poincar\'e--Dirac equations}, i.e., for each $0\leq k\leq N-1$, we have
\begin{equation*}
\Big([\hat X_{d-}]\big([x_k,g_k,q_{k+1}]\big),[\hat{\mathcal D}^{-} L_d]\big([x_{k+1}^-,g_{k+1}^-,q_{k+1}]\big)\Big)\in[\hat D^{d-}].
\end{equation*}
\end{theorem}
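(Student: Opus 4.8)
The plan is to obtain this relation exactly as one obtains the $(+)$-discrete Lagrange--Poincar\'e--Dirac equations: it is an equivariance-and-descent argument, run through the $(-)$-versions of the trivialization maps. The starting point is the hypothesis that $(L_d,X_{d-})$ satisfies the $(-)$-discrete Lagrange--Dirac equations, i.e.\ for each $0\le k\le N-1$,
\[
\bigl(X_d^k,\ \mathcal D^- L_d(q_{k+1}^-,q_{k+1})\bigr)\in D^{d-},
\]
together with the accompanying identity $q_k=q_{k+1}^-$. The idea is to transport this membership relation first to the trivialized spaces and then to the quotients.

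For the transport step, I would use that the trivialized objects were defined precisely so that this works: $\hat D^{d-}$ is the image of $D^{d-}$ under $(\hat\lambda_d,\hat\lambda_d)$ on the $(T^*Q)^2$-factor and under $\bigl(\check\lambda_d,(\check\lambda_d^*)^{-1}\bigr)$ on the $T^*(Q^*\times Q)$-factor; $\hat X_{d-}=(\hat\lambda_d,\hat\lambda_d)\circ X_{d-}\circ\tilde\lambda_d^{-1}$; and $\hat{\mathcal D}^- L_d=\hat\gamma_Q^{d-}\circ\hat{dL}_{d}$, where $\hat{dL}_{d}$ is the $\tilde\lambda_d$-trivialization of $dL_d$ and $\hat\gamma_Q^{d-}$ that of $\gamma_Q^{d-}$. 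Since $X_{d-}$ and the Dirac differential are trivialized by the \emph{same} map $\tilde\lambda_d$ on $Q\times Q$, applying these maps to the displayed relation yields
\[
\bigl(\hat X_{d-}(\tilde\lambda_d(q_{k+1}^-,q_{k+1})),\ \hat{\mathcal D}^- L_d(\tilde\lambda_d(q_{k+1}^-,q_{k+1}))\bigr)\in\hat D^{d-}.
\]
Next comes the descent step: $\tilde\lambda_d$, $\hat\lambda_d$, $\check\lambda_d$ and $\gamma_Q^{d-}$ are $G$-equivariant by construction, while $dL_d$ — hence $X_{d-}$ and $\hat{dL}_{d}$ — is $G$-equivariant by Lemma~\ref{lemma:equivariancepartialderivatives}. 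Therefore $\hat D^{d-}$, $\hat X_{d-}$ and $\hat{\mathcal D}^- L_d$ descend to $[\hat D^{d-}]$, $[\hat X_{d-}]$ and $[\hat{\mathcal D}^- L_d]$ on the respective quotients, and the relation above descends to a membership of the corresponding $G$-orbits in $[\hat D^{d-}]$.

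It then remains to identify the two orbit arguments with those in the statement. Writing $q_{k+1}^-=(x_{k+1}^-,g_{k+1}^-)$ in a trivialization $Q=\Sigma\times G$, the orbit of $\tilde\lambda_d(q_{k+1}^-,q_{k+1})$ in the domain of $[\hat{\mathcal D}^- L_d]$ is, under the identification of Remark~\ref{remark:isomorphismtangentcotangentminus}, exactly $[x_{k+1}^-,g_{k+1}^-,q_{k+1}]$; using $q_k=q_{k+1}^-$ from the $(-)$-discrete Lagrange--Dirac equations, the same orbit in the domain of $[\hat X_{d-}]$ is $[q_k,q_{k+1}]$, that is $[x_k,g_k,q_{k+1}]$ with $q_k=(x_k,g_k)$. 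Substituting gives the claimed relation. The only real obstacle is the bookkeeping: tracking which of $\tilde\lambda_d$, $\hat\lambda_d$, $\check\lambda_d$ trivializes which factor (and why $\check\lambda_d$ rather than $\hat\lambda_d$ is forced on the codomain $T^*(Q^*\times Q)$ — this is dictated by the shape of $\Omega_{d-}^\flat$), and checking that the base points of the cotangent covectors agree after each step. A convenient sanity check, which simultaneously extracts the reduced equations of motion in usable form, is to substitute the explicit local formulas of Proposition~\ref{prop:reduceddiractructurelocalminus}, Proposition~\ref{prop:reduceddiracdifferentiallocalminus} and Lemma~\ref{lemma:partialderivativesldminus} into the defining equations of $[\hat D^{d-}]$, obtaining the $(-)$-analogue of the local system~\eqref{eq:geometricequations}.
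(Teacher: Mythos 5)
Your argument is correct and is essentially the paper's own route: the paper proves this theorem implicitly, by exactly the transport-through-$\tilde\lambda_d,\hat\lambda_d,\check\lambda_d$ and equivariance-and-descent construction you describe (stated for the $(+)$-case and repeated mutatis mutandis for the $(-)$-case), with the local equations \eqref{eq:geometricequationsminus} then extracted from Propositions~\ref{prop:reduceddiractructurelocalminus}, \ref{prop:reduceddiracdifferentiallocalminus} and Lemma~\ref{lemma:partialderivativesldminus}, just as in your closing remark. The only nitpick is that the identification of the domain orbit $[x_{k+1}^-,g_{k+1}^-,q_{k+1}]$ comes from the quotient identification of $((\Sigma\times G)\times Q)/G$ (the analogue of \eqref{eq:quotientQQ}), not from Remark~\ref{remark:isomorphismtangentcotangentminus}, which concerns the codomain quotients.
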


Locally, the equations are given by
\begin{equation*}
\left([\hat X_{d-}^k],[\hat{\mathcal D}^{-} L_d](x_{k+1}^-,g_{k+1}^-,x_{k+1})\right)\in[\hat D^{d-}],\qquad 0\leq k\leq N-1.
\end{equation*} 
Making use of Proposition \ref{prop:reduceddiractructurelocalminus}, Proposition \ref{prop:reduceddiracdifferentiallocalminus} and Lemma \ref{lemma:partialderivativesldminus} we arrive at the local expression for the reduced discrete equations of motion,
\begin{equation}\label{eq:geometricequationsminus}
\left\{\begin{array}{l}
\displaystyle \frac{\partial l_{d-}}{\partial x_0}-\left\langle\frac{\partial l_{d-}}{\partial g_0},\textrm{\normalfont h}_d^0(0,\cdot)\right\rangle+\textrm{\normalfont h}_{d,\Sigma}^*\left(\frac{\partial l_{d-}}{\partial g_0}\right)=w_k,\vspace{0.2cm}\\
\displaystyle \frac{\partial l_{d-}}{\partial g_0}=\mu_k,\vspace{0.2cm}\\
\displaystyle -\frac{\partial l_{d-}}{\partial x_1}+\left\langle\frac{\partial l_{d-}}{\partial g_0},\textrm{\normalfont h}_d^0(\cdot,0)\right\rangle=w_{k+1}-\textrm{\normalfont h}_{d,\Sigma}^*(\mu_{k+1}),\vspace{0.2cm}\\
\displaystyle \left\langle\frac{\partial l_{d-}}{\partial g_0},\textrm{\normalfont h}_d((0,\cdot),0)\right\rangle=\mu_{k+1},\vspace{0.2cm}\\
\displaystyle x_{k+1}^-=x_k,\vspace{0.2cm}\\
\displaystyle g_{k+1}^-+\textrm{\normalfont h}_{d,Q}(x_{k+1},0)=g_k-\textrm{\normalfont h}_{d,\Sigma}(x_k).
\end{array}\right.
\end{equation}
In the above equations, partial derivatives of $l_{d-}$ are evaluated at $(x_{k+1}^-,g_{k+1}^-,x_{k+1})$.

\section{Reduction of the discrete variational principle}\label{sec:reductionvariationalprinciple}

In this section, we perform reduction of discrete Lagrange--Dirac systems from the variational point of view. As expected, we will recover the discrete Lagrange--Poincar\'e--Dirac equations obtained from the geometric reduction of the discrete Dirac structure. Let $Q$ be a vector space and $G\subset Q$ be a vector subspace acting by addition on $Q$.

\subsection{Trivialization of the \texorpdfstring{$(+)$}{(+)}-discrete Pontryagin bundle}

Let $\omega_d: Q\times Q\to G$ be a discrete connection form. Using the trivializations defined in Section \ref{sec:trivializationsTQ}, we define the following map
\begin{equation*}
\Lambda_{d+}=(\lambda_d,\hat\lambda_d): Q\times Q\times Q\times Q^*\to Q\times(\Sigma\times G)\times Q\times(\Sigma^*\times\mathfrak g^*).
\end{equation*}
Again, the $G$-action on $Q\times Q\times Q\times Q^*$ given by
\begin{equation*}
g\cdot(q_0,q_0^+,q_1,p_1)=(g+q_0,g+q_0^+,g+q_1,p_1),\qquad g\in G,\quad(q_0,q_0^+,q_1,p_1)\in Q\times Q\times Q\times Q^*
\end{equation*}
induces an action on the trivialized space via $\Lambda_{d+}$.

Let $Q=\Sigma\times G$ be a trivialization of $\pi_{Q,\Sigma}$, as in Section \ref{sec:localdiscreteconnection}. For each $q_0=(x_0,g_0),q_0^+=(x_0^+,g_0^+),q_1=(x_1,g_1)\in Q=\Sigma\times G$ and $p_1=(w_1,r_1)\in Q^*=\Sigma^*\times G^*$, the local expression of $\Lambda_{d+}$ is
\begin{equation}\label{eq:Lambdadlocal}
\Lambda_{d+}(q_0,q_0^+,q_1,p_1)=\big(q_0,x_0^+,g_0^+-\textrm{\normalfont h}_d(q_0,x_0^+),q_1,w_1+\textrm{\normalfont h}_{d,\Sigma}^*(r_1),r_1\big).
\end{equation}
Similarly, for each $q_0=(x_0,g_0),q_1=(x_1,g_1)\in Q$, $(x_0^+,g_0^+)\in\Sigma\times G$ and $(w_1,\mu_1)\in\Sigma^*\times\mathfrak g^*$ we have
\begin{equation}\label{eq:invLambdadlocal}
\Lambda_{d+}^{-1}(q_0,x_0^+,g_0^+,q_1,w_1,\mu_1)=\Big(q_0,\big(x_0^+,g_0^++\textrm{\normalfont h}_d(q_0,x_0^+)\big),q_1,\big(w_1-\textrm{\normalfont h}_{d,\Sigma}^*(\mu_1),\mu_1\big)\Big).
\end{equation}
At last, locally we may identify the quotient spaces as
\begin{equation*}
\begin{array}{ccl}
(Q\times Q\times Q\times Q^*)/G & \simeq & \Sigma\times Q\times Q\times Q^*,\\
\left[q_0,q_0^+,q_1,p_1\right] & \mapsto & \left(x_0,-g_0+q_0^+,-g_0+q_1,p_1\right),
\end{array}
\end{equation*}
and
\begin{equation}\label{eq:localquotientQQQQ*}
\begin{array}{ccc}
(Q\times(\Sigma\times G)\times Q\times(\Sigma^*\times\mathfrak g^*))/G & \simeq & \Sigma\times(\Sigma\times G)\times Q\times(\Sigma^*\times\mathfrak g^*),\\
\left[q_0,x_0^+,g_0^+,q_1,w_1,\mu_1\right] & \mapsto & \left(x_0,x_0^+,g_0^+,-g_0+q_1,w_1,\mu_1\right).
\end{array}
\end{equation}

\subsection{Trivialization of the \texorpdfstring{$(-)$}{(-)}-discrete Pontryagin bundle}

In the same vein as in the previous section, we trivialize the $(-)$-discrete Pontryagin bundle. Namely, we define
\begin{equation*}
\Lambda_{d-}=(\tilde\lambda_d,\check\lambda_d): Q\times Q\times Q^*\times Q\to (\Sigma\times G)\times Q\times(\Sigma^*\times\mathfrak g^*)\times Q.
\end{equation*}
Again, the $G$-action on $Q\times Q\times Q^*\times Q$ may be transferred to the trivialized space via $\Lambda_{d-}$.

Let $Q=\Sigma\times G$ be a trivialization of $\pi_{Q,\Sigma}$, as in Section \ref{sec:localdiscreteconnection}. For each $q_0=(x_0,g_0),q_1^-=(x_1^-,g_1^-),q_1=(x_1,g_1)\in Q=\Sigma\times G$ and $p_0=(w_0,r_0)\in Q^*=\Sigma^*\times G^*$, the local expression of $\Lambda_{d+}$ is
\begin{equation}\label{eq:Lambdadlocalminus}
\Lambda_{d-}(q_1^-,q_1,p_0,q_0)=\big(x_1^-,g_1^--\textrm{\normalfont h}_d(q_1,x_1^-),q_1,w_0+\textrm{\normalfont h}_{d,\Sigma}^*(r_0),r_0,q_0\big).
\end{equation}
Similarly, for each $q_0=(x_0,g_0),q_1=(x_1,g_1)\in Q$, $(x_1^-,g_1^-)\in\Sigma\times G$ and $(w_0,\mu_0)\in\Sigma^*\times\mathfrak g^*$ we have
\begin{equation}\label{eq:invLambdadlocalminus}
\Lambda_{d-}^{-1}(x_1^-,g_1^-,q_1,w_0,\mu_0,q_0)=\Big((x_1^-,g_1^-+\textrm{\normalfont h}_d(q_1,x_1^-)),q_1,(w_0-\textrm{\normalfont h}_{d,\Sigma}^*(\mu_0),\mu_0),q_0\Big).
\end{equation}
At last, locally we may identify the quotient spaces as
\begin{equation*}
(Q\times Q\times Q^*\times Q)/G\simeq Q\times\Sigma\times Q^*\times Q,
\end{equation*}
and
\begin{equation}\label{eq:localquotientQQQQ*minus}
((\Sigma\times G)\times Q\times(\Sigma^*\times\mathfrak g^*)\times Q)/G\simeq (\Sigma\times G)\times\Sigma\times(\Sigma^*\times\mathfrak g^*)\times Q.
\end{equation}

\subsection{\texorpdfstring{$(+)$}{(+)}-discrete reduced variational principle}

Let $L_d: Q\times Q\to\mathbb R$ be a (possibly degenerate) $G$-invariant discrete Lagrangian and consider the corresponding $(+)$-discrete reduced Lagrangian $l_{d+}:(Q\times(\Sigma\times G))/G\to\mathbb R$. The \emph{$(+)$-discrete generalized energy} is the map
\begin{equation}\label{eq:Ed}
E_{d+}: Q\times Q\times Q\times Q^*\to\mathbb R,\qquad (q_0,q_0^+,q_1,p_1)\mapsto L_d(q_0,q_0^+)+\langle p_1,q_1-q_0^+\rangle.
\end{equation}
Since $L_d$ is $G$-invariant, so is $E_{d+}$. Analogous to the discrete Lagrangian, we consider the trivialized energy, i.e., $\hat E_{d+}: Q\times(\Sigma\times G)\times Q\times(\Sigma^*\times\mathfrak g^*)\to\mathbb R$, which is defined by the condition $\hat E_{d+}\circ\Lambda_{d+}=E_{d+}$. Again, $\hat E_{d+}$ is $G$-invariant, what enables us define the \emph{reduced $(+)$-discrete generalized energy},
\begin{equation*}
e_{d+}:(Q\times(\Sigma\times G)\times Q\times(\Sigma^*\times\mathfrak g^*))/G\longrightarrow\mathbb R.
\end{equation*}

\begin{lemma}\label{lemma:reducedactionlocal}
Locally, the reduced $(+)$-discrete generalized energy is a map $e_{d+}:\Sigma\times(\Sigma\times G)\times Q\times(\Sigma^*\times\mathfrak g^*)\to\mathbb R$. Furthermore, for each $(x_0,x_0^+,g_0^+,q_1,w_1,\mu_1)\in\Sigma\times(\Sigma\times G)\times Q\times(\Sigma^*\times\mathfrak g^*)$ we have
\begin{align*}
e_{d+} (x_0,x_0^+,g_0^+,q_1,w_1,\mu_1)&=
l_{d+}\left(x_0,x_0^+,g_0^+\right)+\big\langle w_1-\textrm{\normalfont h}_{d,\Sigma}^*(\mu_1),x_1-x_0^+\big\rangle\\
&\qquad\qquad+\big\langle\mu_1,g_1-g_0^+-\textrm{\normalfont h}_d((x_0,0),x_0^+)\big\rangle.
\end{align*}
\end{lemma}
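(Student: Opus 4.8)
The plan is to unwind the three layers of definitions exactly as in the proofs of Proposition~\ref{prop:reduceddiracdifferentiallocal} and Lemma~\ref{lemma:partialderivativesld}: first pass from $E_{d+}$ to its trivialized version $\hat E_{d+}$ using the explicit inverse $\Lambda_{d+}^{-1}$ in \eqref{eq:invLambdadlocal}, and then pass from $\hat E_{d+}$ to the reduced energy $e_{d+}$ using the local quotient identification \eqref{eq:localquotientQQQQ*}. Since $E_{d+}$ is $G$-invariant, $e_{d+}$ is well defined on the quotient, so the computation may be carried out on any chosen representative of each orbit.

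For the first step, fix $(q_0,x_0^+,g_0^+,q_1,w_1,\mu_1)\in Q\times(\Sigma\times G)\times Q\times(\Sigma^*\times\mathfrak g^*)$ with $q_1=(x_1,g_1)\in\Sigma\times G$. By definition $\hat E_{d+}=E_{d+}\circ\Lambda_{d+}^{-1}$, so substituting \eqref{eq:invLambdadlocal} into \eqref{eq:Ed} gives
\begin{equation*}
\hat E_{d+}(q_0,x_0^+,g_0^+,q_1,w_1,\mu_1)=L_d\big(q_0,(x_0^+,g_0^++\textrm{\normalfont h}_d(q_0,x_0^+))\big)+\big\langle(w_1-\textrm{\normalfont h}_{d,\Sigma}^*(\mu_1),\mu_1),\,q_1-(x_0^+,g_0^++\textrm{\normalfont h}_d(q_0,x_0^+))\big\rangle.
\end{equation*}
The only computation is to expand the pairing: writing $q_1-(x_0^+,g_0^++\textrm{\normalfont h}_d(q_0,x_0^+))=(x_1-x_0^+,\,g_1-g_0^+-\textrm{\normalfont h}_d(q_0,x_0^+))$ and using the splitting $Q^*=\Sigma^*\times\mathfrak g^*$, the pairing becomes $\langle w_1-\textrm{\normalfont h}_{d,\Sigma}^*(\mu_1),x_1-x_0^+\rangle+\langle\mu_1,g_1-g_0^+-\textrm{\normalfont h}_d(q_0,x_0^+)\rangle$.

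For the second step, the identification \eqref{eq:localquotientQQQQ*} lets me represent the orbit of $(q_0,x_0^+,g_0^+,q_1,w_1,\mu_1)$ by the point with $q_0=(x_0,0)$; with this choice $\textrm{\normalfont h}_d(q_0,x_0^+)=\textrm{\normalfont h}_d((x_0,0),x_0^+)=\textrm{\normalfont h}_d^0(x_0,x_0^+)$, and the $L_d$-term becomes $L_d(x_0,0,x_0^+,g_0^++\textrm{\normalfont h}_d^0(x_0,x_0^+))$, which is exactly $l_{d+}(x_0,x_0^+,g_0^+)$ by \eqref{eq:reducedlagrangianlocal}. Combining this with the expanded pairing term yields the claimed formula. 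There is no real obstacle here; the only thing to be careful about is the bookkeeping of which factors are being quotiented in \eqref{eq:localquotientQQQQ*} and the consequent legitimacy of fixing $g_0=0$, which is guaranteed by the $G$-invariance of $E_{d+}$.
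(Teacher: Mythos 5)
Your proposal is correct and follows essentially the same route as the paper: both unwind the definitions by composing the local quotient identification \eqref{eq:localquotientQQQQ*}, the inverse trivialization \eqref{eq:invLambdadlocal}, the definition \eqref{eq:Ed} of $E_{d+}$, and the local formula \eqref{eq:reducedlagrangianlocal} for $l_{d+}$, with the choice $g_0=0$ justified by $G$-invariance. The only difference is cosmetic — you evaluate $\hat E_{d+}$ at a general point before specializing $q_0=(x_0,0)$, whereas the paper specializes first — so no further comment is needed.
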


\begin{proof}
It is a straightforward computation, making use of the local expressions of the maps and actions that we have computed previously,
\begin{equation*}
\begin{array}{c}
(x_0,x_0^+,g_0^+,q_1,w_1,\mu_1)\\
\downmapsto\text{\scriptsize\eqref{eq:localquotientQQQQ*}}\\
\big((x_0,0),x_0^+,g_0^+,q_1,w_1,\mu_1\big)\\
\downmapsto\text{\scriptsize\eqref{eq:invLambdadlocal}}\\
\Big((x_0,0),\big(x_0^+,g_0^++\textrm{\normalfont h}_d((x_0,0),x_0^+)\big),q_1,\big(w_1-\textrm{\normalfont h}_{d,\Sigma}^*(\mu_1),\mu_1\big)\Big)\\
\downmapsto\text{\scriptsize\eqref{eq:Ed}}\\
L_d\big(x_0,0,x_0^+,g_0^++\textrm{\normalfont h}_d((x_0,0),x_0^+)\big)+\Big\langle\big(w_1-\textrm{\normalfont h}_{d,\Sigma}^*(\mu_1),\mu_1\big),q_1-\big(x_0^+,g_0^++\textrm{\normalfont h}_d((x_0,0),x_0^+)\big)\Big\rangle\\
\downmapsto\text{\scriptsize\eqref{eq:reducedlagrangianlocal}}\\
l_{d+}\left(x_0,x_0^+,g_0^+\right)+\big\langle w_1-\textrm{\normalfont h}_{d,\Sigma}^*(\mu_1),x_1-x_0^+\big\rangle+\big\langle\mu_1,g_1-g_0^+-\textrm{\normalfont h}_d((x_0,0),x_0^+)\big\rangle
\end{array}
\end{equation*}
\end{proof}

The following result relates the variational principles in both the original and the reduced spaces.

\begin{theorem}[Reduced variational principle]
Let $L_d: Q\times Q\to\mathbb R$ be a $G$-invariant discrete Lagrangian and $\left\{(q_k,q_k^+,p_{k+1})\in Q\times Q\times Q^*\mid 0\leq k\leq N\right\}$ be a trajectory on the $(+)$-discrete Pontryagin bundle. Consider the reduced trajectory, i.e.,
\begin{equation*}
\left\{\big[\hat q_k,\hat x_k^+,\hat g_k^+,\hat q_{k+1},\hat w_{k+1},\hat \mu_{k+1}\big]\in\big(Q\times(\Sigma\times G)\times Q\times(\Sigma^*\times\mathfrak g^*)\big)/G\mid 0\leq k\leq N-1\right\},    
\end{equation*}
where $(\hat q_k,\hat x_k^+,\hat g_k^+,\hat q_{k+1},\hat w_{k+1},\hat \mu_{k+1})=\Lambda_{d+}(q_k,q_k^+,q_{k+1},p_{k+1})$ for $0\leq k\leq N-1$. Then the $(+)$-discrete Lagrange--Pontryagin principle \eqref{eq:lagrangepontryaginprinciple} is satisfied if and only if the \emph{reduced $(+)$-discrete Lagrange--Pontryagin principle} is satisfied, i.e.,
\begin{equation}\label{eq:reducedlagrangepontryaginprinciple}
\delta\sum_{k=0}^{N-1}e_{d+} \left(\big[\hat q_k,\hat x_k^+,\hat g_k^+,\hat q_{k+1},\hat w_{k+1},\hat \mu_{k+1}\big]\right)=0,
\end{equation}
for free variations $\left\{\left(\delta\hat q_k,\delta \hat x_k^+,\delta \hat g_k^+,\delta \hat w_{k+1},\delta\hat\mu_{k+1}\right)\in Q\times(\Sigma\times G)\times(\Sigma^*\times\mathfrak g^*)\mid 0\leq k\leq N\right\}$ with fixed endpoints, i.e., $\delta\hat q_0=\delta\hat q_N=0$.
\end{theorem}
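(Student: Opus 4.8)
The plan is to establish the equivalence in two stages: first, that the $(+)$-discrete Lagrange--Pontryagin action and the reduced functional take the same value along corresponding trajectories; second, that the admissible variations on the two sides correspond, so that criticality transfers.

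For the first stage I would unwind the definitions node by node. Using the defining relation $\hat E_{d+}\circ\Lambda_{d+}=E_{d+}$, the $G$-invariance of $\hat E_{d+}$, and the fact that $e_{d+}$ is precisely the map induced by $\hat E_{d+}$ on the quotient, one has for every $k$
\[
e_{d+}\big(\big[\hat q_k,\hat x_k^+,\hat g_k^+,\hat q_{k+1},\hat w_{k+1},\hat\mu_{k+1}\big]\big)=\hat E_{d+}\big(\Lambda_{d+}(q_k,q_k^+,q_{k+1},p_{k+1})\big)=E_{d+}(q_k,q_k^+,q_{k+1},p_{k+1}),
\]
which by \eqref{eq:Ed} equals $L_d(q_k,q_k^+)+\langle p_{k+1},q_{k+1}-q_k^+\rangle$, the $k$-th summand of $\mathbb S_{L_d}^+$ in \eqref{eq:lagrangepontryaginaction}. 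Summing over $k$ shows that, along corresponding trajectories, $\mathbb S_{L_d}^+$ factors as the reduced functional composed with $\Lambda_{d+}$ and the quotient projection.

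For the second stage I would first invoke that $\Lambda_{d+}$ is a $G$-equivariant linear isomorphism to identify variations of the original trajectory with variations of the trivialized one, the endpoint constraint $\delta q_0=\delta q_N=0$ passing to $\delta\hat q_0=\delta\hat q_N=0$ because $\hat q_k=q_k$, the first component of $\lambda_d$ being the identity on $Q$. It then remains to descend to the quotient. Since $G$ acts by translation on a vector space, the quotient projection is a linear surjection: every admissible reduced variation is the image of an admissible trivialized variation (lift the base trajectory by $\Lambda_{d+}$ and then lift the variation through the projection), and conversely every trivialized variation projects down. Two lifts of the same reduced variation differ by an infinitesimal $G$-motion at each node, but the $G$-invariance of $\hat E_{d+}$ forces $d\hat E_{d+}$ to annihilate the fundamental vector fields of the $G$-action, so the first variation of $\sum_k\hat E_{d+}$ depends only on the projected variation and coincides with that of $\sum_k e_{d+}$. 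Combining the two stages, $\delta\mathbb S_{L_d}^+=0$ for all free variations with $\delta q_0=\delta q_N=0$ holds if and only if \eqref{eq:reducedlagrangepontryaginprinciple} holds for all free reduced variations with fixed endpoints.

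The step I expect to be the main obstacle is this last descent: making rigorous that the free variations appearing in the reduced principle are exactly the projections of the endpoint-fixed free variations upstairs, and that the ambiguity in lifting them --- an infinitesimal group motion at each interior node, with the endpoints pinned --- contributes nothing to the first variation. The abelian hypothesis together with the connection-based trivialization is what makes this clean: the group coordinate $\hat g_k^+$ and the $\mathfrak g^*$-coordinate $\hat\mu_{k+1}$ can be varied freely and independently in the reduced space, so no constrained, Euler--Poincar\'e-type variations arise, in contrast with the nonabelian Lagrange--Poincar\'e situation.
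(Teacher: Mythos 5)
Your proposal is correct and follows essentially the same route as the paper: the action values agree because $e_{d+}$ is induced by $\hat E_{d+}$ and $\hat E_{d+}\circ\Lambda_{d+}=E_{d+}$, and criticality transfers because $\Lambda_{d+}$ is a ($G$-equivariant) linear isomorphism carrying endpoint-fixed free variations to endpoint-fixed free variations. Your additional argument about lifting variations from the quotient and the irrelevance of the infinitesimal $G$-ambiguity is sound but not actually needed here, since the reduced principle as stated already takes variations $(\delta\hat q_k,\delta\hat x_k^+,\delta\hat g_k^+,\delta\hat w_{k+1},\delta\hat\mu_{k+1})$ on the trivialized space, where $e_{d+}$ evaluated on equivalence classes coincides with $\hat E_{d+}$.
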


\begin{proof}
By construction, we have
\begin{align*}
\displaystyle \mathbb S_{L_d}^+\left[(q_k,q_k^+,p_{k+1})_{k=0}^N\right] & = \displaystyle \sum_{k=0}^{N-1}(\hat E_{d+}\circ\Lambda_{d+})(q_k,q_k^+,q_{k+1},p_{k+1})\vspace{0.1cm}\\
& = \displaystyle \sum_{k=0}^{N-1}\hat E_{d+}\left(\hat q_k,\hat x_k^+,\hat g_k^+,\hat q_{k+1},\hat w_{k+1},\hat\mu_{k+1}\right)\vspace{0.1cm}\\
& = \displaystyle \sum_{k=0}^{N-1}e_{d+} \left(\big[\hat q_k,\hat x_k^+,\hat g_k^+,\hat q_{k+1},\hat w_{k+1},\hat \mu_{k+1}\big]\right).
\end{align*}
To arrive at our conclusion, note that the free variations of the Lagrange--Pontryagin principle yield free variations on the trivialized space (with fixed endpoints), since $\Lambda_{d+}$ is a linear isomorphism.
\end{proof}

Lastly, we show that the reduced variational equations agree with the reduced geometric equations obtained in the previous section from the reduced discrete Dirac structure.

\begin{proposition}
The variational equations obtained from the reduced $(+)$-discrete Lagrange--Pontryagin principle are the $(+)$-discrete Lagrange--Poincar\'e--Dirac equations.
\end{proposition}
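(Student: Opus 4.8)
The plan is to compute the first variation in the reduced $(+)$-discrete Lagrange--Pontryagin principle \eqref{eq:reducedlagrangepontryaginprinciple} explicitly, working in a trivialization $Q=\Sigma\times G$, and then to read off the stationarity conditions. Writing $q_k=(x_k,g_k)$, the quotient identification \eqref{eq:localquotientQQQQ*} sends the reduced trajectory at step $k$ to $\big(x_k,x_k^+,g_k^+,(x_{k+1},g_{k+1}-g_k),w_{k+1},\mu_{k+1}\big)$, so by the local expression of the reduced generalized energy in Lemma \ref{lemma:reducedactionlocal} the reduced action is
\[
\sum_{k=0}^{N-1}\Big[l_{d+}(x_k,x_k^+,g_k^+)+\big\langle w_{k+1}-\textrm{\normalfont h}_{d,\Sigma}^*(\mu_{k+1}),\,x_{k+1}-x_k^+\big\rangle+\big\langle\mu_{k+1},\,(g_{k+1}-g_k)-g_k^+-\textrm{\normalfont h}_d((x_k,0),x_k^+)\big\rangle\Big].
\]
I would then require the first variation of this sum to vanish over the free variations $\delta x_k,\delta g_k$ (with $\delta x_0=\delta g_0=\delta x_N=\delta g_N=0$), $\delta x_k^+,\delta g_k^+$, $\delta w_{k+1}$ and $\delta\mu_{k+1}$.

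The structural observation driving the computation is that $x_k^+,g_k^+,w_{k+1},\mu_{k+1}$ each occur only in the $k$-th summand, so their variations give pointwise equations, whereas $x_k$ and $g_k$ occur in the two consecutive summands indexed by $k-1$ and $k$ ($x_k$ as the first argument at step $k$ and inside $\langle w_k-\textrm{\normalfont h}_{d,\Sigma}^*(\mu_k),x_k-x_{k-1}^+\rangle$ at step $k-1$; $g_k$ inside the group differences $g_{k+1}-g_k$ at step $k$ and $g_k-g_{k-1}$ at step $k-1$), so their variations give the discrete evolution equations. Carrying this out: $\delta w_{k+1}$ gives $x_k^+=x_{k+1}$; $\delta\mu_{k+1}$ gives, after substituting this and applying the splitting \eqref{eq:decompositionhd} to $\textrm{\normalfont h}_d((x_k,0),x_{k+1})$, the reconstruction relation $g_k^++\textrm{\normalfont h}_{d,Q}(x_k,0)=(g_{k+1}-g_k)-\textrm{\normalfont h}_{d,\Sigma}(x_{k+1})$; $\delta g_k^+$ gives $\partial l_{d+}/\partial g_1=\mu_{k+1}$; $\delta x_k^+$ gives, after substituting this relation and evaluating $D_2\textrm{\normalfont h}_d$ via \eqref{eq:partialderivativeshd}, the first equation of \eqref{eq:geometricequations}; $\delta x_k$ gives, after substituting the $\delta g_k^+$ relation and evaluating $D_1\textrm{\normalfont h}_d$ via \eqref{eq:partialderivativeshd}, the third equation; and $\delta g_k$ gives $\mu_k=\mu_{k+1}$, which is the fourth equation of \eqref{eq:geometricequations} once one notices that $g\mapsto\textrm{\normalfont h}_d((0,g),0)$ is the identity on $G$ --- a consequence of the equivariance of $\textrm{\normalfont h}_d$ together with $\textrm{\normalfont h}_d(0,0)=0$ --- so that $\langle\partial l_{d+}/\partial g_1,\textrm{\normalfont h}_d((0,\cdot),0)\rangle=\partial l_{d+}/\partial g_1=\mu_{k+1}=\mu_k$. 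In every case the partial derivatives of $l_{d+}$ land at $(x_k,x_k^+,g_k^+)$, as in \eqref{eq:geometricequations}, and since each implication is reversible the reduced variational equations coincide exactly with the $(+)$-discrete Lagrange--Poincar\'e--Dirac equations.

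A shorter but less explicit route would combine the reduced variational principle established above (the reduced and unreduced Lagrange--Pontryagin principles agree, since $\Lambda_{d+}$ is a linear isomorphism carrying free variations with fixed endpoints to free variations with fixed endpoints) with the equivalence between the $(+)$-discrete Lagrange--Pontryagin principle and the $(+)$-discrete Lagrange--Dirac equations, and then transport the latter through $\Lambda_{d+}$ and the quotient; this is precisely how \eqref{eq:geometricequations} was obtained from the reduced discrete Dirac structure in the previous section. I expect the only real difficulty to be bookkeeping: tracking, under the quotient identification \eqref{eq:localquotientQQQQ*}, which reduced variables are internal to a single summand and which are shared by consecutive summands, and recognizing the discrete momentum-conservation relation $\mu_k=\mu_{k+1}$ coming from the $\delta g_k$ variation as the fourth equation of \eqref{eq:geometricequations} via the identity $\textrm{\normalfont h}_d((0,\cdot),0)=\mathrm{id}_G$.
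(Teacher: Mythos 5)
Your proposal is correct and follows essentially the same route as the paper: working in a trivialization, writing the reduced action via the local expression of $e_{d+}$, and taking variations in $\delta x_k,\delta g_k,\delta x_k^+,\delta g_k^+,\delta w_{k+1},\delta\mu_{k+1}$ to recover the six equations \eqref{eq:geometricequations}. The only difference is bookkeeping: you make the $g_k$-dependence explicit through the difference $g_{k+1}-g_k$ and then invoke $\textrm{\normalfont h}_d((0,\cdot),0)=\textrm{\normalfont id}_G$ to identify $\mu_k=\mu_{k+1}$ with the fourth equation, whereas the paper keeps the dependence inside $\textrm{\normalfont h}_d(q_k,x_k^+)$ and leaves that equation unsimplified; the two are equivalent.
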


\begin{proof}
Since the equations are local, we may work in a trivialization $Q\simeq\Sigma\times G$ of $\pi_{Q,\Sigma}$. Observe that locally, \eqref{eq:reducedlagrangepontryaginprinciple} reads as
\begin{equation*}
\delta\sum_{k=0}^{N-1}e_{d+} (x_k,x_k^+,g_k^+,q_{k+1},w_{k+1},\mu_{k+1})=0,
\end{equation*}
for free variations $\left\{(\delta x_k,\delta g_k,\delta x_k^+,\delta g_k^+,\delta w_{k+1},\delta\mu_{k+1})\in Q\times(\Sigma\times G)\times(\Sigma^*\times\mathfrak g^*)\mid 0\leq k\leq N\right\}$ with fixed endpoints, i.e., $\delta x_0=\delta x_N=0$ and $\delta g_0=\delta g_N=0$. Making use of Lemma \ref{lemma:reducedactionlocal}
and taking variations $\delta x_k$, $1\leq k\leq N-1$, with fixed endpoints, we get
\begin{equation*}
\frac{\partial l_{d+}}{\partial x_0}+w_k-\textrm{\normalfont h}_{d,\Sigma}^*(\mu_k)-\left\langle\mu_{k+1},\textrm{\normalfont h}_d^0(\cdot,0)\right\rangle=0.
\end{equation*}
Analogously, taking variations $\delta g_k$, $1\leq k\leq N-1$, with fixed endpoints, yield
\begin{equation*}
\mu_k-\left\langle\mu_{k+1},\textrm{\normalfont h}_d((0,\cdot),0)\right\rangle=0.
\end{equation*}
Now, we consider variations $\delta x_k^+$, $0\leq k\leq N-1$,
\begin{equation*}
\frac{\partial l_{d+}}{\partial x_1}-w_{k+1}+\textrm{\normalfont h}_{d,\Sigma}^*(\mu_{k+1})-\left\langle\mu_{k+1},\textrm{\normalfont h}_d^0(0,\cdot)\right\rangle=0.
\end{equation*}
Similarly, for variations $\delta g_k^+$, $0\leq k\leq N-1$,
\begin{equation*}
\frac{\partial l_{d+}}{\partial g_1}-\mu_{k+1}=0.
\end{equation*}
Next, for variations $\delta w_{k+1}$, $0\leq k\leq N-1$,
\begin{equation*}
x_{k+1}-x_k^+=0.
\end{equation*}
In the end, for variations $\delta\mu_{k+1}$, $0\leq k\leq N-1$,
\begin{equation*}
\textrm{\normalfont h}_{d,\Sigma}(x_{k+1}-x_k^+)+g_{k+1}-g_k^+- \textrm{\normalfont h}_d^0(x_k,x_k^+)=0.
\end{equation*}
In the above equations, partial derivatives of $l_{d+}$ are evaluated at $(x_k,x_k^+,g_k^+)$. By gathering all the equations and rearranging terms, it is easy to check that these equations are exactly \eqref{eq:geometricequations}.
\end{proof}

\subsection{\texorpdfstring{$(-)$}{(-)}-discrete reduced variational principle}

Last of all, we carry out the same procedure, but using the $(-)$-discrete reduced Lagrangian $l_{d-}:((\Sigma\times G)\times Q)/G\to\mathbb R$. The \emph{$(-)$-discrete generalized energy} is the map
\begin{equation}\label{eq:Edminus}
E_{d-}: Q\times Q\times Q^*\times Q\to\mathbb R,\qquad (q_1^-,q_1,p_0,q_0)\mapsto L_d(q_1^-,q_1)+\langle p_0,q_0-q_1^-\rangle.
\end{equation}
The trivialized energy, $\hat E_{d-}\circ\Lambda_{d-}=E_{d-}$, gives rise to the \emph{reduced $(-)$-discrete generalized energy},
\begin{equation*}
e_{d-}:((\Sigma\times G)\times Q\times(\Sigma^*\times\mathfrak g^*)\times Q)/G\longrightarrow\mathbb R.
\end{equation*}

\begin{lemma}\label{lemma:reducedactionlocalminus}
Locally, the reduced $(-)$-discrete generalized energy is a map $e_{d-}:(\Sigma\times G)\times\Sigma\times(\Sigma^*\times\mathfrak g^*)\times Q\to\mathbb R$. Furthermore, for each $(x_1^-,g_1^-,x_1,w_0,\mu_0,q_0)\in(\Sigma\times G)\times\Sigma\times(\Sigma^*\times\mathfrak g^*)\times Q$ we have
\begin{align*}
e_{d-} (x_1^-,g_1^-,x_1,w_0,\mu_0,q_0) & = l_{d-}(x_1^-,g_1^-,x_1)+\big\langle w_0-\textrm{\normalfont h}_{d,\Sigma}^*(\mu_0),x_0-x_1^-\big\rangle\\
& \qquad\qquad+\big\langle\mu_0,g_0-g_1^--\textrm{\normalfont h}_d((x_1,0),x_1^-)\big\rangle.
\end{align*}
\end{lemma}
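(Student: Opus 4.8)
The plan is to run the same computation as in the proof of Lemma~\ref{lemma:reducedactionlocal}, with the $(+)$-objects replaced by their $(-)$-counterparts throughout. By construction $\hat E_{d-}\circ\Lambda_{d-}=E_{d-}$, and $\hat E_{d-}$ is $G$-invariant, so $e_{d-}$ is the well-defined map on the quotient; its local expression is computed by choosing, for each point of the trivialized quotient, a convenient representative and pushing it through $\Lambda_{d-}^{-1}$ and then through the defining formula \eqref{eq:Edminus} for $E_{d-}$. Concretely, I would start from an arbitrary point $(x_1^-,g_1^-,x_1,w_0,\mu_0,q_0)\in(\Sigma\times G)\times\Sigma\times(\Sigma^*\times\mathfrak g^*)\times Q$ and, via the identification \eqref{eq:localquotientQQQQ*minus}, lift it to the representative $\big[(x_1^-,g_1^-),(x_1,0),(w_0,\mu_0),q_0\big]$ of $((\Sigma\times G)\times Q\times(\Sigma^*\times\mathfrak g^*)\times Q)/G$ whose first $Q$-factor has vanishing group component --- the $(-)$-analogue of the choice of representative made in the $(+)$-case.

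Then I would substitute $q_1=(x_1,0)$ into the inverse trivialization \eqref{eq:invLambdadlocalminus}, obtaining the point $(q_1^-,q_1,p_0,q_0)\in Q\times Q\times Q^*\times Q$ with $q_1^-=\big(x_1^-,g_1^-+\textrm{\normalfont h}_d((x_1,0),x_1^-)\big)$, $q_1=(x_1,0)$ and $p_0=\big(w_0-\textrm{\normalfont h}_{d,\Sigma}^*(\mu_0),\mu_0\big)$, while recording that $\textrm{\normalfont h}_d((x_1,0),x_1^-)=\textrm{\normalfont h}_d^0(x_1,x_1^-)$ by definition of $\textrm{\normalfont h}_d^0$. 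Feeding this into \eqref{eq:Edminus}, the term $L_d(q_1^-,q_1)=L_d\big(x_1^-,g_1^-+\textrm{\normalfont h}_d^0(x_1,x_1^-),x_1,0\big)$ is exactly the right-hand side of \eqref{eq:reducedlagrangianlocalminus} and therefore equals $l_{d-}(x_1^-,g_1^-,x_1)$; writing $q_0=(x_0,g_0)\in\Sigma\times G$ and splitting the pairing along $Q^*=\Sigma^*\times G^*$, the term $\langle p_0,q_0-q_1^-\rangle$ becomes $\big\langle w_0-\textrm{\normalfont h}_{d,\Sigma}^*(\mu_0),x_0-x_1^-\big\rangle+\big\langle\mu_0,g_0-g_1^--\textrm{\normalfont h}_d((x_1,0),x_1^-)\big\rangle$. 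Adding the two contributions yields the asserted formula.

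The only point that needs genuine care is the bookkeeping of the quotient: in the $(-)$-setting $G$ acts diagonally on the two $Q$-factors of $Q\times Q\times Q^*\times Q$, and it is the \emph{first} pair (the one handled by $\tilde\lambda_d$) that is gauge-fixed in \eqref{eq:localquotientQQQQ*minus}, so one must keep straight which coordinates survive the quotient and which shift accompanies the inverse identification. Once the representative with vanishing group component in the first $Q$-factor is fixed, every remaining step is a direct substitution into \eqref{eq:invLambdadlocalminus}, \eqref{eq:Edminus} and \eqref{eq:reducedlagrangianlocalminus}, parallel to the $(+)$-case, and no new estimate or structural argument is needed; I would display it as a single vertical chain of maps, exactly as in the proof of Lemma~\ref{lemma:reducedactionlocal}.
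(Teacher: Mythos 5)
Your computation is correct and follows exactly the route the paper intends: the paper states this lemma without an explicit proof, treating it as the $(-)$-analogue of Lemma~\ref{lemma:reducedactionlocal}, and your chain (gauge-fix $q_1=(x_1,0)$ via \eqref{eq:localquotientQQQQ*minus}, apply \eqref{eq:invLambdadlocalminus}, substitute into \eqref{eq:Edminus}, and identify $L_d\big(x_1^-,g_1^-+\textrm{\normalfont h}_d^0(x_1,x_1^-),x_1,0\big)$ with $l_{d-}(x_1^-,g_1^-,x_1)$ via \eqref{eq:reducedlagrangianlocalminus}) is precisely that computation, including the correct handling of which $Q$-factor is gauge-fixed in the quotient.
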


\begin{theorem}[Reduced variational principle]
Let $L_d: Q\times Q\to\mathbb R$ be a $G$-invariant discrete Lagrangian and $\left\{(q_{k+1}^-,p_k,q_{k+1})\in Q\times Q^*\times Q\mid 0\leq k\leq N\right\}$ be a trajectory on the $(-)$-discrete Pontryagin bundle. Consider the reduced trajectory, i.e.,
\begin{equation*}
\left\{\big[\hat x_{k+1}^-,\hat g_{k+1}^-,\hat q_{k+1},\hat w_k,\hat \mu_k,\hat q_k\big]\in\big((\Sigma\times G)\times Q\times(\Sigma^*\times\mathfrak g^*)\times Q\big)/G\mid 0\leq k\leq N-1\right\},    
\end{equation*}
where $(\hat x_{k+1}^-,\hat g_{k+1}^-,\hat q_{k+1},\hat w_k,\hat \mu_k,\hat q_k)=\Lambda_{d-}(q_{k+1}^-,q_{k+1},p_k,q_k)$ for $0\leq k\leq N-1$. Then the $(-)$-discrete Lagrange--Pontryagin principle \eqref{eq:lagrangepontryaginprincipleminus} is satisfied if and only if the \emph{reduced $(-)$-discrete Lagrange--Pontryagin principle} is satisfied, i.e.,
\begin{equation}\label{eq:reducedlagrangepontryaginprincipleminus}
\delta\sum_{k=0}^{N-1}e_{d-} \left([\hat x_{k+1}^-,\hat g_{k+1}^-,\hat q_{k+1},\hat w_k,\hat \mu_k,\hat q_k\big]\right)=0,
\end{equation}
for free variations $\left\{\left(\delta\hat x_{k+1}^-,\delta\hat g_{k+1}^-,\delta\hat q_{k+1},\delta\hat w_k,\delta\hat \mu_k,\delta\hat q_k\right)\in (\Sigma\times G)\times Q\times(\Sigma^*\times\mathfrak g^*)\times Q\mid 0\leq k\leq N\right\}$ with fixed endpoints, i.e., $\delta\hat q_0=\delta\hat q_N=0$.
\end{theorem}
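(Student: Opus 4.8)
The plan is to replay, almost verbatim, the argument used for the $(+)$-discrete reduced variational principle, the only changes being bookkeeping ones arising because in the $(-)$-case it is the \emph{first} factor that is trivialized (via $\tilde\lambda_d$ and $\check\lambda_d$, recall Remarks~\ref{remark:lambdaminus} and \ref{remark:hatlambdaminus}) rather than the second. There are two ingredients: (i) an identity rewriting the $(-)$-discrete Lagrange--Pontryagin action along a trajectory as the sum of the reduced $(-)$-discrete generalized energy along the corresponding reduced trajectory; and (ii) the fact that $\Lambda_{d-}$ is a linear isomorphism, which allows the variational problem on the $(-)$-discrete Pontryagin bundle to be transported to an equivalent variational problem on the trivialized space.

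First I would note that, directly from \eqref{eq:lagrangepontryaginactionminus} and \eqref{eq:Edminus}, the $k$-th summand of $\mathbb S_{L_d}^-$ is exactly $E_{d-}(q_{k+1}^-,q_{k+1},p_k,q_k)$, so that
\begin{align*}
\mathbb S_{L_d}^-\left[(q_{k+1}^-,p_k,q_{k+1})_{k=0}^N\right]&=\sum_{k=0}^{N-1}\big(\hat E_{d-}\circ\Lambda_{d-}\big)(q_{k+1}^-,q_{k+1},p_k,q_k)\\
&=\sum_{k=0}^{N-1}\hat E_{d-}\big(\hat x_{k+1}^-,\hat g_{k+1}^-,\hat q_{k+1},\hat w_k,\hat\mu_k,\hat q_k\big),
\end{align*}
using the defining relation $\hat E_{d-}\circ\Lambda_{d-}=E_{d-}$ and the definition of the hatted quantities. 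Since $\hat E_{d-}$ is $G$-invariant it factors through the quotient as $e_{d-}$, so each summand equals $e_{d-}\big([\hat x_{k+1}^-,\hat g_{k+1}^-,\hat q_{k+1},\hat w_k,\hat\mu_k,\hat q_k]\big)$, and $\mathbb S_{L_d}^-$ therefore coincides with the reduced action appearing in \eqref{eq:reducedlagrangepontryaginprincipleminus}. (The explicit local form of $e_{d-}$ recorded in Lemma~\ref{lemma:reducedactionlocalminus} is not needed for the equivalence itself; it will be used afterwards when extracting the reduced equations of motion.)

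Next I would transfer the variational principle. Because $\tilde\lambda_d$ and $\check\lambda_d$ are linear isomorphisms, so is $\Lambda_{d-}=(\tilde\lambda_d,\check\lambda_d)$, and assembling these maps over the $N$ time steps yields a linear bijection between trajectories on the $(-)$-discrete Pontryagin bundle and trivialized trajectories $\big(\hat x_{k+1}^-,\hat g_{k+1}^-,\hat q_{k+1},\hat w_k,\hat\mu_k,\hat q_k\big)_k$, under which free variations correspond to free variations. Moreover the last slot of $\check\lambda_d$ is the identity on $Q$ and the last slot of $\tilde\lambda_d$ is likewise the identity, so $\hat q_k=q_k$ for all $k$; in particular the endpoint constraints $\delta q_0=\delta q_N=0$ translate verbatim into $\delta\hat q_0=\delta\hat q_N=0$. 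Since criticality is preserved under linear changes of variables, $\delta\mathbb S_{L_d}^-=0$ over trajectories with $\delta q_0=\delta q_N=0$ holds if and only if \eqref{eq:reducedlagrangepontryaginprincipleminus} holds over trivialized trajectories with $\delta\hat q_0=\delta\hat q_N=0$, which is the claim.

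The only point requiring care is the bookkeeping in the first step: one must check that the per-step maps $\Lambda_{d-}$ glue consistently along a trajectory, i.e.\ that the configuration variable $q_{k+1}$ shared by consecutive steps (it is the $\hat q_{k+1}$ slot of step $k$ and the $\hat q_k$ slot of step $k+1$) is named coherently, so that the sum is genuinely a single function of the trivialized trajectory, and that the variables held fixed are precisely $q_0$ and $q_N$. Both facts are immediate once the time indices are tracked, so no genuinely hard step arises; the result is the mirror image of the $(+)$-case already established.
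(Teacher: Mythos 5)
Your proposal is correct and follows essentially the same route the paper takes: the paper proves the $(+)$-case by writing $\mathbb S_{L_d}^+=\sum_k(\hat E_{d+}\circ\Lambda_{d+})=\sum_k e_{d+}([\cdot])$ and invoking that $\Lambda_{d+}$ is a linear isomorphism so free variations with fixed endpoints correspond, and leaves the $(-)$-case to the mirror-image argument, which is exactly what you carry out (your observation that the $Q$-slots of $\tilde\lambda_d$ and $\check\lambda_d$ are identities, so $\hat q_k=q_k$ and the endpoint conditions transfer verbatim, is a harmless extra bookkeeping remark).
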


\begin{proposition}
The variational equations obtained from the reduced $(-)$-discrete Lagrange--Pontryagin principle are the $(-)$-discrete Lagrange--Poincar\'e--Dirac equations.
\end{proposition}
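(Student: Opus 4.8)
The plan is to mirror, step for step, the proof of the corresponding $(+)$-statement. Since the $(-)$-discrete Lagrange--Poincar\'e--Dirac equations \eqref{eq:geometricequationsminus} are local, I would work in a trivialization $Q\simeq\Sigma\times G$ of $\pi_{Q,\Sigma}$ as in Section~\ref{sec:localdiscreteconnection} and, using the identification \eqref{eq:localquotientQQQQ*minus}, rewrite the reduced $(-)$-discrete Lagrange--Pontryagin principle \eqref{eq:reducedlagrangepontryaginprincipleminus} as the stationarity of $\sum_{k=0}^{N-1} e_{d-}(x_{k+1}^-,g_{k+1}^-,x_{k+1},w_k,\mu_k,q_k)$ under free variations of the reduced coordinates subject to the fixed-endpoint constraints $\delta q_0=\delta q_N=0$, and then substitute the explicit local form of $e_{d-}$ supplied by Lemma~\ref{lemma:reducedactionlocalminus}.

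I would then impose the variations one slot at a time, exactly as in the $(+)$-case. The variations in $w_k$ and $\mu_k$ are confined to the $k$-th summand and produce the kinematic relation $x_{k+1}^-=x_k$ together with the reconstruction equation for the group variable, the latter after invoking $x_{k+1}^-=x_k$ and the splitting \eqref{eq:decompositionhd} of $\textrm{\normalfont h}_d$. The variations in $x_{k+1}^-$ and $g_{k+1}^-$ are likewise confined to the $k$-th summand; after differentiating $\textrm{\normalfont h}_d$ via \eqref{eq:partialderivativeshd} and using the elementary identities $\textrm{\normalfont h}_d((0,\cdot),0)=\mathrm{id}_G$ and $\textrm{\normalfont h}_d^0(0,\cdot)=\textrm{\normalfont h}_{d,\Sigma}$, they give the relations expressing $\mu_k$ and $w_k$ through $\partial l_{d-}/\partial g_0$ and $\partial l_{d-}/\partial x_0$. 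Finally, the variations in the configuration coordinates $x_j,g_j$ (for $1\le j\le N-1$) collect contributions from the two summands in which those coordinates appear after the quotient identification, yielding the remaining equations that couple consecutive time steps. Each of these is a direct, if somewhat tedious, computation from the formula for $e_{d-}$.

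To finish, I would gather the resulting families of stationarity conditions and check, by the same rearrangement as in the $(+)$-case --- substituting the $\mu_k$-relations into the others and re-expressing the $\textrm{\normalfont h}_d$-terms through \eqref{eq:decompositionhd} and the identities above --- that they coincide term by term with \eqref{eq:geometricequationsminus}; note that Lemma~\ref{lemma:partialderivativesldminus} is not actually needed here, since $e_{d-}$ is already written through $l_{d-}$. I expect the only real difficulty to be bookkeeping: tracking which summands each reduced coordinate enters --- the ``$(-)$''-variables $x_{k+1}^-,g_{k+1}^-$ live in a single summand, whereas the configuration variables $x_j,g_j$ enter two consecutive ones because of the shift built into \eqref{eq:localquotientQQQQ*minus} --- handling the dual pairings correctly, and keeping the evaluation points of the partial derivatives of $l_{d-}$ straight. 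As an alternative that avoids the computation altogether, one may observe that $e_{d-}$, the reduced $(-)$-discrete Dirac structure $[\hat D^{d-}]$, and the reduced vector field and Dirac differential are all obtained by transporting the unreduced objects through the equivariant trivializations $\tilde\lambda_d$, $\check\lambda_d$ and quotienting by $G$, so the equivalence of the unreduced $(-)$-discrete Lagrange--Pontryagin principle with the unreduced $(-)$-discrete Lagrange--Dirac equations descends directly to the quotient.
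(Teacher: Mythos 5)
Your proposal is correct and is essentially the paper's own (implicit) argument: the paper proves the $(+)$-case by working in a trivialization, substituting the local formula for the reduced generalized energy, taking variations slot by slot, and rearranging to match the geometric equations, and it leaves the $(-)$-case as the analogous computation, which is exactly what you carry out (correctly handling the shift in \eqref{eq:localquotientQQQQ*minus} and the identities $\textrm{\normalfont h}_d((0,\cdot),0)=\mathrm{id}_G$, $\textrm{\normalfont h}_d^0(0,\cdot)=\textrm{\normalfont h}_{d,\Sigma}$). Your observation that Lemma~\ref{lemma:partialderivativesldminus} is not needed on the variational side likewise matches the structure of the paper's $(+)$-proof.
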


\section{Nonlinear theory}\label{sec:nonlinear}

The previous reduction theory has been developed for the linear setting, i.e., when $Q$ is a vector space and $G\subset Q$ is a vector subspace acting by addition on $Q$. Nevertheless, it can be applied when $Q$ is an arbitrary smooth manifold and $G$ is an abelian Lie group acting freely and properly on $Q$. In order to see this, we use retractions and retraction compatible charts (see, for example, \cite{absil2008,leok2011}).

\begin{definition}
A \emph{retraction} of a smooth manifold $M$ is a smooth map $\mathcal R: TM\to M$ such that for each $m\in M$ we have $\mathcal R_m(0_m)=m$ and $\left(d\mathcal R_m\right)_{0_m}=\textrm{\normalfont id}_{T_m M}$, where $\mathcal R_m=\mathcal R\vert_{T_m M}$ and we make the identification $T_{0_m}(T_m M)\simeq T_m M$.
\end{definition}

Observe that the second condition ensures that $\mathcal R_m: T_m M\to M$ is invertible around $0_m$.

\begin{definition}
Let $M$ be an $n$-dimensional smooth manifold and $\mathcal R: TM\to M$ be a retraction of $M$. A coordinate chart $(U,\phi)$ on  $M$ is said to be \emph{compatible at $m\in M$} with $\mathcal R$ if $\phi(m)=0$ and $\mathcal R(v_m)=\phi^{-1}\left((d\phi)_m(v_m)\right)$ for each $v_m\in T_m M$, where we identify $T_m\mathbb R^n\simeq\mathbb R^n$ using the standard coordinates in $\mathbb R^n$.
\end{definition}

For the particular case of a Lie group $G$ and $g\in G$, it was shown in \cite[\S9]{leok2011} that the \emph{canonical coordinates of the first kind} at $g\in G$ (cf. \cite{varadarajan2013,MaPeSh1999}) are a coordinate chart compatible with the retraction defined by $\mathcal R_g^G=L_g\circ\exp\circ(dL_{g^{-1}})_g$, where $L_g: G\to G$ denotes the left multiplication by $g$.

\begin{proposition}\label{prop:mainfeature}
Let $\mathcal R: TM\to M$ be a retraction of an $n$-dimensional smooth manifold $M$ and $(U,\phi)$ be a compatible coordinate chart on $M$ at $m\in M$. Then for each\footnote{
We will assume that the inverse of $\mathcal R_m$ is defined on the whole of $U$ by choosing a smaller coordinate domain if necessary.} $r\in U$ and $p_m\in T_m^* M$ we have
\begin{equation*}
\left\langle p_m,\mathcal R_m^{-1}(r)\right\rangle=\sum_{i=1}^n p_i r^i,
\end{equation*}
where $\mathcal R_m^{-1}(r)\simeq r^i\partial_i$ and $p_m\simeq p_i dq^i$ in this chart. 
\end{proposition}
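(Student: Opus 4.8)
The plan is to show that, under the identification $T_m M \simeq \mathbb{R}^n$ furnished by the chart, the vector $\mathcal{R}_m^{-1}(r)$ is literally the coordinate $n$-tuple $\phi(r)$; once this is established, the asserted formula is just the standard expression of the duality pairing in a coordinate basis and its dual basis.

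First I would unwind the compatibility hypothesis. The condition $\mathcal{R}(v_m) = \phi^{-1}\!\left((d\phi)_m(v_m)\right)$ for all $v_m \in T_m M$ says precisely that $\mathcal{R}_m = \phi^{-1} \circ (d\phi)_m$ on the neighborhood of $0_m$ where $\mathcal{R}_m$ is invertible; shrinking the coordinate domain as in the footnote, we may assume this holds so that $\mathcal{R}_m$ is a diffeomorphism onto $U$. Inverting gives $\mathcal{R}_m^{-1} = \left((d\phi)_m\right)^{-1} \circ \phi$, hence $\mathcal{R}_m^{-1}(r) = \left((d\phi)_m\right)^{-1}(\phi(r))$ for every $r \in U$.

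Next I would recall that, by the very definition of the coordinate frame attached to $(U,\phi)$, the differential $(d\phi)_m \colon T_m M \to T_0\mathbb{R}^n \simeq \mathbb{R}^n$ sends $\partial_i|_m$ to the $i$-th standard basis vector; equivalently, writing $q^j$ for the coordinate functions, $\langle dq^j|_m, \partial_i|_m\rangle = \delta^j_i$. Therefore, if $\phi(r) = (r^1,\dots,r^n)$, then $\left((d\phi)_m\right)^{-1}(\phi(r)) = \sum_i r^i\,\partial_i|_m$, i.e.\ the components of $\mathcal{R}_m^{-1}(r)$ in the coordinate frame are exactly the coordinates $r^i$ of the point $r$. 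Writing $p_m = \sum_j p_j\, dq^j|_m$, the pairing then evaluates as
\[
\left\langle p_m, \mathcal{R}_m^{-1}(r)\right\rangle = \sum_{i,j} p_j\, r^i\, \langle dq^j|_m, \partial_i|_m\rangle = \sum_{i=1}^n p_i r^i,
\]
which is the claim.

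There is no genuine obstacle here; the only points that require care are bookkeeping ones: ensuring, as in the footnote, that $U$ is taken small enough that $\mathcal{R}_m^{-1}$ is defined on all of $U$, and being explicit about the identification $T_0\mathbb{R}^n \simeq \mathbb{R}^n$ via the standard coordinates, so that ``components of $\mathcal{R}_m^{-1}(r)$ in the coordinate basis'' and ``coordinates of the point $r$'' denote the same list of numbers $r^i$. In the Lie group situation discussed just above, this specializes to the statement that canonical coordinates of the first kind trivialize $\exp$ in precisely the way needed for the momentum pairing.
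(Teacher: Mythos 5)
Your argument is correct: the paper states this proposition without proof, treating it as an immediate unwinding of the definition of a retraction compatible chart, and your computation $\mathcal R_m^{-1}(r)=\left((d\phi)_m\right)^{-1}(\phi(r))$ followed by the standard coordinate expression of the dual pairing is exactly that intended argument. The only care needed is the domain bookkeeping you already address via the footnote's shrinking of $U$.
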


In other words, the previous result says that the dual pairing of $T_m M$ and $T_m^* M$ reduces to the usual Euclidean inner product on $\mathbb R^n$ when using retraction compatible charts.

\subsection{Retractions and abelian Lie group actions}

In what follows, let $Q$ be a smooth manifold and $G$ be a connected, abelian Lie group acting freely and properly (on the left) on $Q$, thus yielding a principal bundle $\pi_{Q,\Sigma}: Q\to\Sigma=Q/G$. Consider retractions $\mathcal R^\Sigma: T\Sigma\to\Sigma$ and $\mathcal R^G: TG\to G$  of $\Sigma$ and $G$, respectively, and a trivializing set $U\subset\Sigma$ of $\pi_{Q,\Sigma}$. For simplicity, we write $U=\Sigma$, so we have an identification $Q\simeq\Sigma\times G$. Under this identification, a straightforward check shows that the map
\begin{equation*}
\mathcal R=\left(\mathcal R^\Sigma,\mathcal R^G\right): TQ\to Q
\end{equation*}
is a retraction of $Q$. We may use it to define (at least locally) a discrete Lagrange--Pontryagin action. Namely, given a (possibly degenerate) discrete Lagrangian $L_d: Q\times Q\to\mathbb R$, the \emph{$(+)$-discrete Lagrange--Pontryagin action} is defined as
\begin{equation*}
\mathbb S_{L_d}^+\left[\left(q_k,q_k^+,p_{k+1}\right)_{k=0}^N\right]=\sum_{k=0}^{N-1}\left(L_d\left(q_k,q_k^+\right)+\left\langle p_{k+1},\mathcal R_{q_{k+1}}^{-1}(q_{k+1})-\mathcal R_{q_{k+1}}^{-1}(q_k^+)\right\rangle\right),
\end{equation*}
where $q_k,q_k^+\in Q$ and $p_{k+1}\in T_{q_{k+1}}^*Q$. The $(+)$-discrete Lagrange--Pontryagin principle is obtained by enforcing free variations vanishing at the endpoints. In order for the previous expression to be well-defined, $q_k^+$ must be in the neighborhood of $q_{k+1}$ where $\mathcal R_{q_{k+1}}$ is invertible for each $0\leq k\leq N-1$. This can always be achieved for bounded initial conditions by reducing the size of the time step. Furthermore, Proposition \ref{prop:mainfeature} ensures that using retraction compatible coordinate charts, this discrete action reduces to the one considered in the linear case \eqref{eq:lagrangepontryaginaction}.

In the same vein, the \emph{$(-)$-discrete Lagrange--Pontryagin action} is defined as
\begin{equation*}
\mathbb S_{L_d}^-\left[(q_{k+1}^-,p_k,q_{k+1})_{k=0}^N\right]=\sum_{k=0}^{N-1}\left(L_d(q_{k+1}^-,q_{k+1})+\left\langle p_k,\mathcal R_{p_k}^{-1}(q_k)-\mathcal R_{p_k}^{-1}(q_{k+1}^-)\right\rangle\right),
\end{equation*}
where $q_{k+1}^-,q_{k+1}\in Q$ and $p_k\in Q^*$. Once again, Proposition \ref{prop:mainfeature} ensures that, in a retraction compatible chart, the discrete action reads as in the linear case \eqref{eq:lagrangepontryaginactionminus}.

On the other hand, recall that the $G$ action on the trivialized principal bundle is given by the left multiplication, i.e., $g\cdot q_0=(x,gg_0)$ for each $g\in G$ and $q_0=(x_0,g_0)\in Q\simeq \Sigma\times G$. Furthermore, suppose that the retraction on $G$ is given by $\mathcal R_{g_0}^G=L_{g_0}\circ\exp\circ(dL_{g_0^{-1}})_{g_0}$, as mentioned when describing the canonical coordinates of the first kind. Since $G$ is abelian and connected, the exponential map is surjective and, hence, the inverse of $\mathcal R_{g_0}^G$ exists locally around every element of the group. As a result, when $q_0=(x_0,g_0)$ is fixed, an action of $T_{g_0}G$ on $T_{q_0}Q=T_{x_0}\Sigma\oplus T_{g_0}G$ may be built as follows
\begin{equation*}
u_{g_0}\cdot(v_{x_0},v_{g_0})=\left(v_{x_0},\left(\mathcal R_{g_0}^G\right)^{-1}\left(\mathcal R_{g_0}^G(u_{g_0})\mathcal R_{g_0}^G(v_{g_0})\right)\right),\qquad v_{x_0}\in T_{x_0}\Sigma,\quad u_{g_0},v_{g_0}\in T_{g_0}G.
\end{equation*}
When $g_0=0$, this action has a very simple expression,
\begin{equation*}
\xi\cdot(v_{x_0},\xi_0)=\left(v_{x_0},\log(\exp(\xi)\exp(\xi_0))\right)=(v_{x_0},\xi+\xi_0),\qquad v_{x_0}\in T_{x_0}\Sigma,\quad\xi,\xi_0\in\mathfrak g,
\end{equation*}
where we have used that $G$ is abelian. This is the case considered in the previous sections, a vector subspace acting by addition. Note that if $g_0\neq 0$, we obtain the same result by using the left translation.

In summary, the linear theory developed above is the coordinate representation of a general system with abelian group of symmetries when retraction compatible charts are used. This enables to use the linear theory at least locally in a coordinate chart. Furthermore, if $\Sigma$ and $G$ both admit retraction compatible atlases, then a retraction compatible atlas for $Q$ can be built such that each coordinate domain is a trivializing set for $\pi_{Q,\Sigma}$. Hence, we can make computations on the whole $Q$ by starting from a specific compatible chart and changing to another one whenever it is necessary. Since the variational principle, as well as the group action, have the same expressions in each compatible chart, the local preservation of geometric properties extends to the global setting. This is an important feature of retraction compatible atlases, since computation on local charts might otherwise lead to dynamics that is not globally well-defined.

\section{Examples}\label{sec:examples}

We present two examples to illustrate the reduction theory developed above. In both of them, we employ the $(+)$-discrete equations.

\subsection{Charged particle in a magnetic field}

We analyze the dynamics of a charged particle moving in a magnetic field, as presented in \cite{marsdenjurgen1993} for the continuous case. To account for gauge symmetry, we consider the \emph{Kaluza-Klein configuration space},
\begin{equation*}
Q_K=\mathbb R^3\times\mathbb S^1,
\end{equation*}
with coordinates $(q,\theta)$, where $q=(q^1,q^2,q^3)\in\mathbb R^3$ and $\theta\simeq e^{i\theta}\in\mathbb S^1$. The \emph{Kaluza-Klein Lagrangian} is defined as
\begin{equation*}
L_K(q,\dot q,\theta,\dot\theta)=\frac{1}{2}m\llangle\dot q,\dot q\rrangle+\frac{1}{2}\left(\langle A(q),\dot q\rangle+\dot\theta\right)^2,
\end{equation*}
where $m\in\mathbb R^+$ is the mass of the particle, $A\in\Omega^1(\mathbb R^3)$ is the magnetic potential, and $\llangle\cdot,\cdot\rrangle$ denotes the Euclidean inner product in $\mathbb R^3$. The conjugate momenta are given by
\begin{equation*}
p=\frac{\partial L_K}{\partial\dot q}=m\llangle\dot q,\cdot\rrangle+\left(\langle A(q),\dot q\rangle+\dot\theta\right)A(q),\qquad p_\theta=\frac{\partial L_K}{\partial\dot\theta}=\langle A(q),\dot q\rangle+\dot\theta.
\end{equation*}

It is clear that this Lagrangian is invariant under the tangent lifted action of $\mathbb S^1$ on $Q_K$ given by $\theta'\cdot(q,\theta)=(q,\theta'+\theta)$ for each $(q,\theta)\in Q_K$ and $\theta'\in\mathbb S^1$. Note that using these coordinates, the action is linear. The corresponding quotient is $\Sigma=Q_K/\mathbb S^1\simeq\mathbb R^3$. Choosing the principal connection $\omega=A+d\theta\in\Omega^1(Q_K)$ on $Q_K\to\mathbb R^3$, the corresponding reduced equations are the \emph{Lorentz force law} together with the conservation of the momentum corresponding to $\theta$, i.e.,
\begin{equation}\label{eq:lorentzlaw}
m\frac{d\dot q}{dt}=\frac{e}{c}\dot q\times\textbf B,\qquad\dot p_\theta=0,
\end{equation}
where $\textbf B\in\mathfrak X(\mathbb R^3)$ is the magnetic field corresponding to\footnote{
Recall that we can associate to each 2-form $\alpha=\alpha^i\iota_{\partial_i}(d^3q)\in\Omega^2(\mathbb R^3)$ a vector field $\boldsymbol\alpha=\alpha^i\partial_i\in\mathfrak X(\mathbb R^3)$, which is known as proxy field, where $\iota_U$ denotes the left interior product by $U\in\mathfrak X(\mathbb R^3)$.} $B=dA\in\Omega^2(\mathbb R^3)$, and $e=c\,p_\theta$ is the electric charge of the particle.

\subsubsection{Discrete equations}

Given $h> 0$, consider the following discrete Lagrangian
\begin{equation*}
L_d(q_0,\theta_0,q_0^+,\theta_0^+;h)=hL_K\left(q_0,\frac{q_0^+-q_0}{h},\theta_0,\frac{\theta_0^+-\theta_0}{h}\right).
\end{equation*}
It is easy to check from Definition \ref{def:discreteconnection} that the following is a discrete principal connection,
\begin{equation*}
\omega_d((q_0,\theta_0),(q_0^+,\theta_0^+))=\theta_0^+-\theta_0,\qquad(q_0,\theta_0),(q_0^+,\theta_0^+)\in Q_K.
\end{equation*}
Subsequently, the local map $\textrm{\normalfont h}_d: Q_K\times\mathbb R^3\to\mathbb S^1$ is given by $\textrm{\normalfont h}_d((q_0,\theta_0),q_0^+)=\theta_0$ and $\textrm{\normalfont h}_{d,\Sigma}\equiv0$. The reduced discrete Lagrangian \eqref{eq:reducedlagrangianlocal} is
\begin{align*}
l_d(q_0,q_0^+,\theta_0^+;h) & = L_d(q_0,0,q_0^+,\theta_0^++\textrm{\normalfont h}_d^0(q_0,q_0^+))\vspace{0.1cm}\\
& = \displaystyle \frac{m}{2h}\left\llangle q_0^+-q_0,q_0^+-q_0\right\rrangle+\frac{1}{2h}\left(\langle A(q_0),q_0^+-q_0\rangle+\theta_0^+\right)^2.
\end{align*}
Consider an interval $[0,T]\subset\mathbb R$ and divide it into $N=T/h$ subintervals $[t_k,t_{k+1}]$, with $t_k=kh$, $0\leq k\leq N-1$. As usual, we let $q_k=q(t_k)$, $0\leq k\leq N$, and analogously for the other quantities. Similarly, we express the discrete vector field as $[\hat X_d^k]=(q_k,w_k,\mu_k,q_{k+1},\theta_{k+1},w_{k+1},\mu_{k+1})$. By computing the partial derivatives of the reduced discrete Lagrangian, we obtain the reduced discrete equations
\begin{equation}\label{eq:reduceddiscreteequationsem}
\left\{\begin{array}{l}
\displaystyle \frac{m}{h}\llangle q_{k+1}-q_k,\cdot\rrangle+\mu_{k+1}A(q_k)=w_{k+1},\vspace{0.2cm}\\
\displaystyle \frac{1}{h}(\langle A(q_k),q_{k+1}-q_k\rangle+\theta_{k+1})=\mu_{k+1},\vspace{0.2cm}\\
\displaystyle \frac{m}{h}\llangle q_{k+1}-q_k,\cdot\rrangle-\mu_{k+1}\left(\langle A,q_{k+1}-q_k\rangle-A(q_k)\right)=w_k,\vspace{0.2cm}\\
\displaystyle \mu_{k+1}=\mu_k,
\end{array}\right.,\qquad 0\leq k\leq N-1.
\end{equation}

\subsubsection{Numerical computations}

\begin{table}[b]
    \begin{tabular}{c|c}
    $N$ & $||q(T)-q_N||$\\
    \hline
    10 & 1.6781\\
    50 & 0.25971\\
    100 & 0.06626\\
    200 & 0.01664
    \end{tabular}
    \caption{Number of steps vs. error at the last step}
    \label{table:errorcomparison}
\end{table}

In order to implement the above equations, we need to make a particular choice of the particle mass and charge, time interval, magnetic field and initial conditions. We express all of the magnitudes in natural units, i.e., $c=1$. We choose $m=1$, $e=1$ and $T=20$. For each $q=(q^1,q^2,q^3)\in\mathbb R^3$, we write $q=q^i\partial_i\in T_q\mathbb R^3\simeq\mathbb R^3$ when regarded as a vector. Hence, $\llangle q,\cdot\rrangle=q^i dq^i$. On the other hand, we suppose that the magnetic field is constant, i.e., $\textrm{\normalfont B}(q)=B_0\partial_z$ for some fixed $B_0\in\mathbb R$. The corresponding magnetic potential is \begin{equation*}
A(q)=\frac{B_0}{2}\left(-q^2 dq^1+q^1 dq^2\right),\qquad q=\left(q^1,q^2,q^3\right)\in\mathbb R^3.
\end{equation*}

We suppose that the initial position is the origin, $q_0=0$, and the initial velocity is $\dot q_0=\partial_1+\partial_3$. Likewise, we choose $\theta_0=0$. Since $A(q_0)=A(0)=0$, the corresponding initial momenta are 
\begin{equation*}
p_0=\llangle\dot q_0,\cdot\rrangle+\left(\langle A(q_0),\dot q_0\rangle+\dot\theta\right)A(q_0)=\llangle\dot q_0,\cdot\rrangle=dq^1+dq^3,
\end{equation*}
and $(p_\theta)_0=e=1$. Likewise, we have $\dot\theta_0=(p_\theta)_0-\langle A(q_0),\dot q_0\rangle=1$. The initial position and momenta may be trivialized using \eqref{eq:hatlambdadlocal},
\begin{equation*}
\big((q_0,\theta_0),w_0,\mu_0\big)=\hat\lambda_d\big((q_0,\theta_0),(p_0,(p_\theta)_0)\big)=\big((0,0),(1,0,1),1\big).
\end{equation*}

On the other hand, it is easy to check that the exact solution of \eqref{eq:lorentzlaw} for $B_0=1$ is 
\begin{equation*}
q(t)=\left(\sin\left(t\right),\cos\left(t\right)-1,t\right),\qquad 0\leq t\leq T.
\end{equation*}
After working out the approximate solution for different values of $N$ and comparing to the exact solution, we can see that the error decreases with the number of steps at a second-order rate of convergence, as shown in Table \ref{table:errorcomparison}. In addition, Figure \ref{fig:chargedparticle} compares both the exact and numerical trajectories for $N=100$, i.e., $h=0.2$.

\begin{figure}[t]
    \centering
    \includegraphics{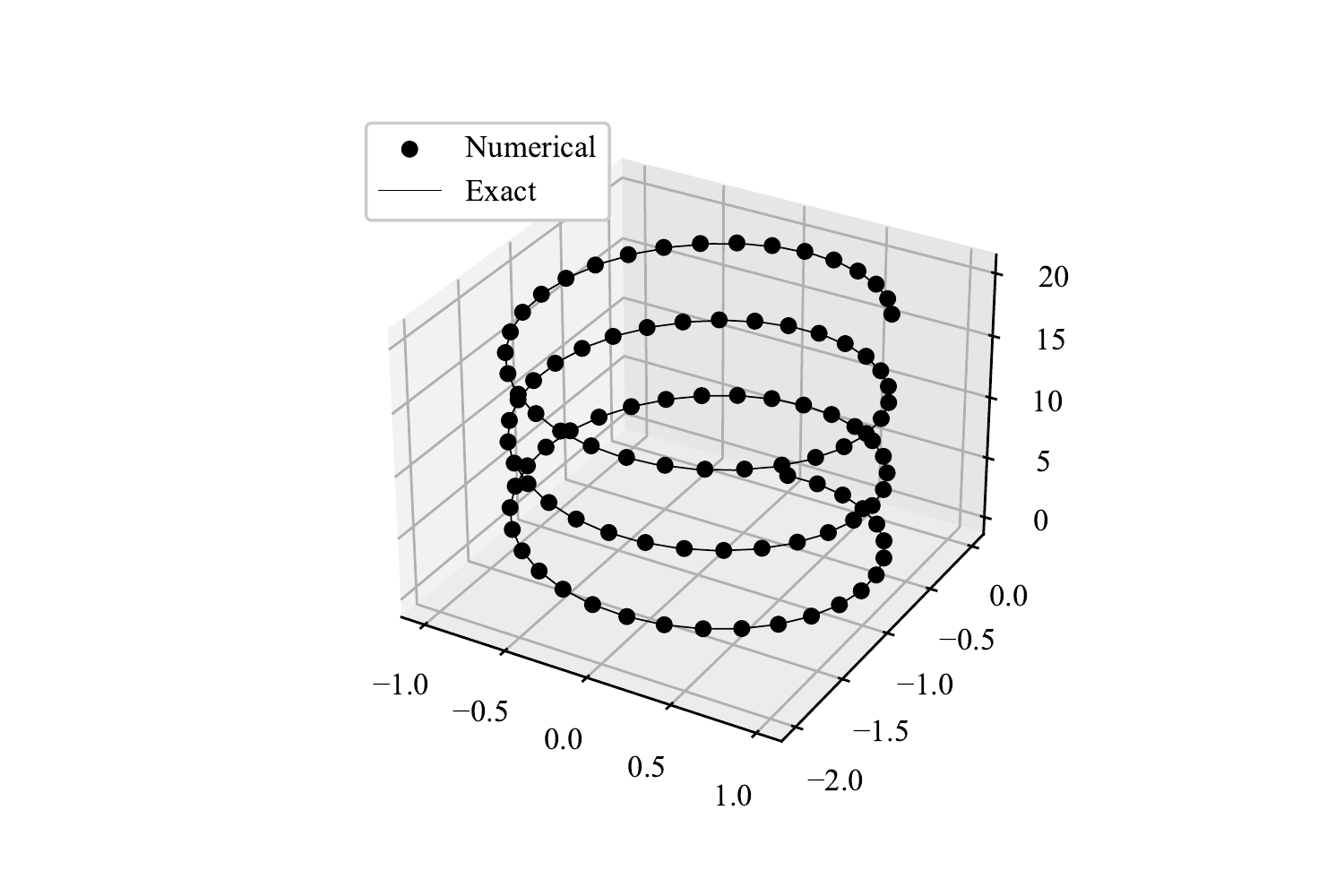}
    \caption{Exact and approximate trajectories of a charged particle moving in a constant vector field. The approximate trajectory is computed using time step $h=0.2$.}
    \label{fig:chargedparticle}
\end{figure}

\subsection{Double spherical pendulum}

In this example, we investigate the double spherical pendulum. We assume that there is not friction and that the system is under a uniform gravitational field. The dynamics of the double spherical pendulum has been investigated in \cite{marsden1992,marsden1993} and variational integrators from different perspectives have been proposed in \cite{jalnapurkar2005,leok2009}.

For $i=1,2$, we denote by $m_i\in\mathbb R^+$, $l_i\in\mathbb R^+$ and $r_i\in\mathbb R^3$ the particle mass, the link length  and the position of the $i$-th pendulum, respectively. Furthermore, we use standard coordinates $r_i=(r_i^1,r_i^2,r_i^3)\in\mathbb R^3$ with coordinate origin at the fixed pivot, and we suppose that the gravitational acceleration is given by $\boldsymbol{g}=(0,0,-g)$ for some fixed $g\in\mathbb R^+$. This way, the Lagrangian is
\begin{equation*}
L(r_1,r_2,\dot r_1,\dot r_2)=\frac{1}{2}m_1\llangle\dot r_1,\dot r_1\rrangle+\frac{1}{2}m_2\llangle\dot r_2,\dot r_2\rrangle-m_1\,g\,r_1^3-m_2\,g\,r_2^3,
\end{equation*}
where $\llangle\cdot,\cdot\rrangle$ denotes the Euclidean inner product in $\mathbb R^3$. In addition, the links connecting the particles yield the following constraints,
\begin{equation*}
\llangle r_1,r_1\rrangle=l_1^2,\qquad \llangle r_2-r_1,r_2-r_1\rrangle=l_2^2.
\end{equation*}

\begin{figure}[ht]
    \centering
    \includegraphics[width=0.5\textwidth]{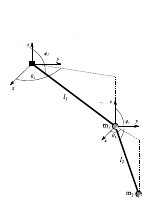}
    \caption{Spherical coordinates for the double spherical pendulum.}
    \label{fig:doublesphericalpendulum}
\end{figure}

To avoid the constraints we use spherical coordinates for each particle, $r_i=(\rho_i,\theta_i,\varphi_i)$, $i=1,2$, where the origin of the first sphere is at the pivot and the origin of the second one is at the first particle, as shown in Figure \ref{fig:doublesphericalpendulum}. As such, the constraints become $\rho_i=l_i$, $i=1,2$, and may be introduced straightforwardly in the Lagrangian. Then, the configuration space is given by 
\begin{equation*}
Q=\mathbb S^2\times\mathbb S^2
\end{equation*}
with angular coordinates $(\theta_1,\varphi_1,\theta_2,\varphi_2)$. After the change of coordinates, the Lagrangian reads
\begin{multline*}
\displaystyle L\left((\theta_1,\varphi_1,\theta_2,\varphi_2),(\dot\theta_1,\dot\varphi_1,\dot\theta_2,\dot\varphi_2)\right)=\vspace{0.1cm}\\
\displaystyle \frac{1}{2}l_1^2\,m_1\,(\dot\varphi_1^2 + \dot\theta_1^2\sin^2\varphi_1)+T_2-g\,m_1\,l_1\cos\varphi_1 - g\,m_2\,(l_1\cos\varphi_1 + l_2\cos\varphi_2).
\end{multline*}
where $T_2$ is the kinetic energy of the second pendulum,
\begin{multline*}
\displaystyle T_2=\frac{1}{2}m_2\,\Big(l_1^2\,\dot\varphi_1^2 + l_2^2\,\dot\varphi_2^2 + l_1^2\,\dot\theta_1^2\sin^2\varphi_1 + l_2^2\,\dot\theta_2^2\sin^2\varphi_2 + 2\,l_1\,l_2\,\big(\dot\varphi_1\,\dot\varphi_2\sin\varphi_1\sin\varphi_2\vspace{0.1cm}\\
\displaystyle + \dot\varphi_1\,\dot\varphi_2\cos\varphi_1\cos\varphi_2\cos(\theta_1 - \theta_2)+ \dot\varphi_1\,\dot\theta_2\sin\varphi_2\sin(\theta_1 - \theta_2)\cos\varphi_1\vspace{0.1cm}\\
\displaystyle - \dot\varphi_2\,\dot\theta_1\sin\varphi_1\sin(\theta_1 - \theta_2)\cos\varphi_2 + \dot\theta_1\,\dot\theta_2\sin\varphi_1\sin\varphi_2\cos(\theta_1 - \theta_2)\big)\Big).
\end{multline*}

Observe that the system is invariant by simultaneous rotation of both pendula around the $Z$-axis, i.e., the group of symmetries is $G=\mathbb S^1$ with the action on $\mathbb S^2\times\mathbb S^2$ given locally by 
\begin{equation*}
\theta\cdot(\theta_1,\varphi_1,\theta_2,\varphi_2)=(\theta+\theta_1,\varphi_1,\theta+\theta_2,\varphi_2),
\end{equation*}
for each $(\theta_1,\varphi_1,\theta_2,\varphi_2)\in\mathbb S^2\times\mathbb S^2$ and $\theta\in\mathbb S^1$. As usual, we denote the quotient by $\Sigma=(\mathbb S^2\times\mathbb S^2)/\mathbb S^1$. 

\begin{remark}
Observe that this action is not free. Indeed, it leaves invariant configurations with $\varphi_1=k_1\pi$ and $\varphi_2=k_2\pi$ for some $k_1,k_2\in\mathbb Z$. Therefore, the following is only valid for trajectories not passing through those configurations. 
\end{remark}

At last, we perform another change of coordinates,
\begin{equation*}
\vartheta_1=\frac{\theta_1+\theta_2}{2},\qquad\vartheta_2=\frac{\theta_2-\theta_1}{2},
\end{equation*}
with $\varphi_1$ and $\varphi_2$ remaining the same. Observe that the inverse is given by $\theta_1=\vartheta_1-\vartheta_2$ and $\theta_2=\vartheta_1+\vartheta_2$. In these coordinates, the action reads
\begin{equation*}
\theta\cdot(\vartheta_1,\varphi_1,\vartheta_2,\varphi_2)=(\theta+\vartheta_1,\varphi_1,\vartheta_2,\varphi_2),
\end{equation*}
for each $(\vartheta_1,\varphi_1,\vartheta_2,\varphi_2)\in\mathbb S^2\times\mathbb S^2$ and $\theta\in\mathbb S^1 $.

\subsubsection{Discrete equations}

For the sake of simplicity, we will let $q_0=(\vartheta_1,\varphi_1,\vartheta_2,\varphi_2)\in\mathbb S^2\times\mathbb S^2$ and $x_0=(\varphi_1,\vartheta_2,\varphi_2)\in(\mathbb S^2\times\mathbb S^2)/\mathbb S^1$, and analogous for $q_0^+$ and $x_0^+$. Given $h> 0$, we define the discrete Lagrangian as
\begin{equation*}
L_d(q_0,q_0^+;h)=hL\left(\frac{q_0+q_0^+}{2},\frac{q_0^+-q_0}{h}\right),\qquad q_0,q_0^+\in\mathbb S^2\times\mathbb S^2.
\end{equation*}
Likewise, we choose the following discrete principal connection
\begin{equation*}
\omega_d\left(q_0,q_0^+\right)=\vartheta_1^+-\vartheta_1,\qquad q_0,q_0^+\in\mathbb S^2\times\mathbb S^2.
\end{equation*}
In particular, the map $\textrm{\normalfont h}_d: Q\times\Sigma\to G$ is given by $\textrm{\normalfont h}_d\left(q_0,x_0^+\right)=\vartheta_1$. Thus, $\textrm{\normalfont h}_{d,\Sigma}\equiv0$, $\textrm{\normalfont h}_{d,\Sigma}^*\equiv0$ and $\textrm{\normalfont h}_d^0\equiv0$. The reduced discrete Lagrangian is
\begin{equation*}
l_d(x_0,x_0^+,\vartheta_1^+;h) = L_d\big((0,\varphi_1,\vartheta_2,\varphi_2),(\vartheta_1^+,\varphi_1^+,\vartheta_2^+,\varphi_2^+);h\big),
\end{equation*}
for each $x_0,x_0^+\in\Sigma$ and $\vartheta_1^+\in\mathbb S^1$. Given the interval $[0,T]\subset\mathbb R$, we divide it into $N=T/h$ subintervals $[t_k,t_{k+1}]$, with $t_k=kh$, $0\leq k\leq N-1$. We denote the discrete vector field by $[\hat X_d^k]=(x_k,w_k,\mu_k,q_{k+1},w_{k+1},\mu_{k+1})$, $0\leq k\leq N-1$. Observe that $w_k=(w_k^1,w_k^2,w_k^3)\in\Sigma^*\simeq\mathbb R^3$ and $\mu_k\in G^*\simeq\mathbb R$. Since $\textrm{\normalfont h}_d((\vartheta_1,0),0)=\vartheta_1$ and $\textrm{\normalfont h}_{d,Q}(x_0,0)=0$, the reduced discrete equations read
\begin{equation*}
\left\{\begin{array}{l}
\displaystyle \left(\frac{\partial l_d}{\partial\varphi_1^+},\frac{\partial l_d}{\partial\vartheta_2^+},\frac{\partial l_d}{\partial\varphi_2^+}\right)=\left(w_1^1,w_1^2,w_1^3\right),\vspace{0.1cm}\\
\displaystyle \mu_0=\mu_1,\vspace{0.1cm}\\
\displaystyle \left(\frac{\partial l_d}{\partial\varphi_1},\frac{\partial l_d}{\partial\vartheta_2},\frac{\partial l_d}{\partial\varphi_2}\right)=-\left(w_0^1,w_0^2,w_0^3\right),\vspace{0.1cm}\\
\displaystyle\frac{\partial l_d}{\partial\vartheta_1^+}=\mu_0,
\end{array}\right.
\end{equation*}
where $(\varphi_1,\vartheta_2,\varphi_2)$, $\mu_0$ and $w_0$ are the initial conditions, and $(\vartheta_1^+,\varphi_1^+,\vartheta_2^+,\varphi_2^+)$, $\mu_1$ and $w_1$ are the unknowns. As in the previous example, by using \eqref{eq:hatlambdadlocal}, the momentum is given by $p_0=(\mu_0,w_0)$.

\begin{figure}[t]
    \centering
    \includegraphics[width=0.6\textwidth]{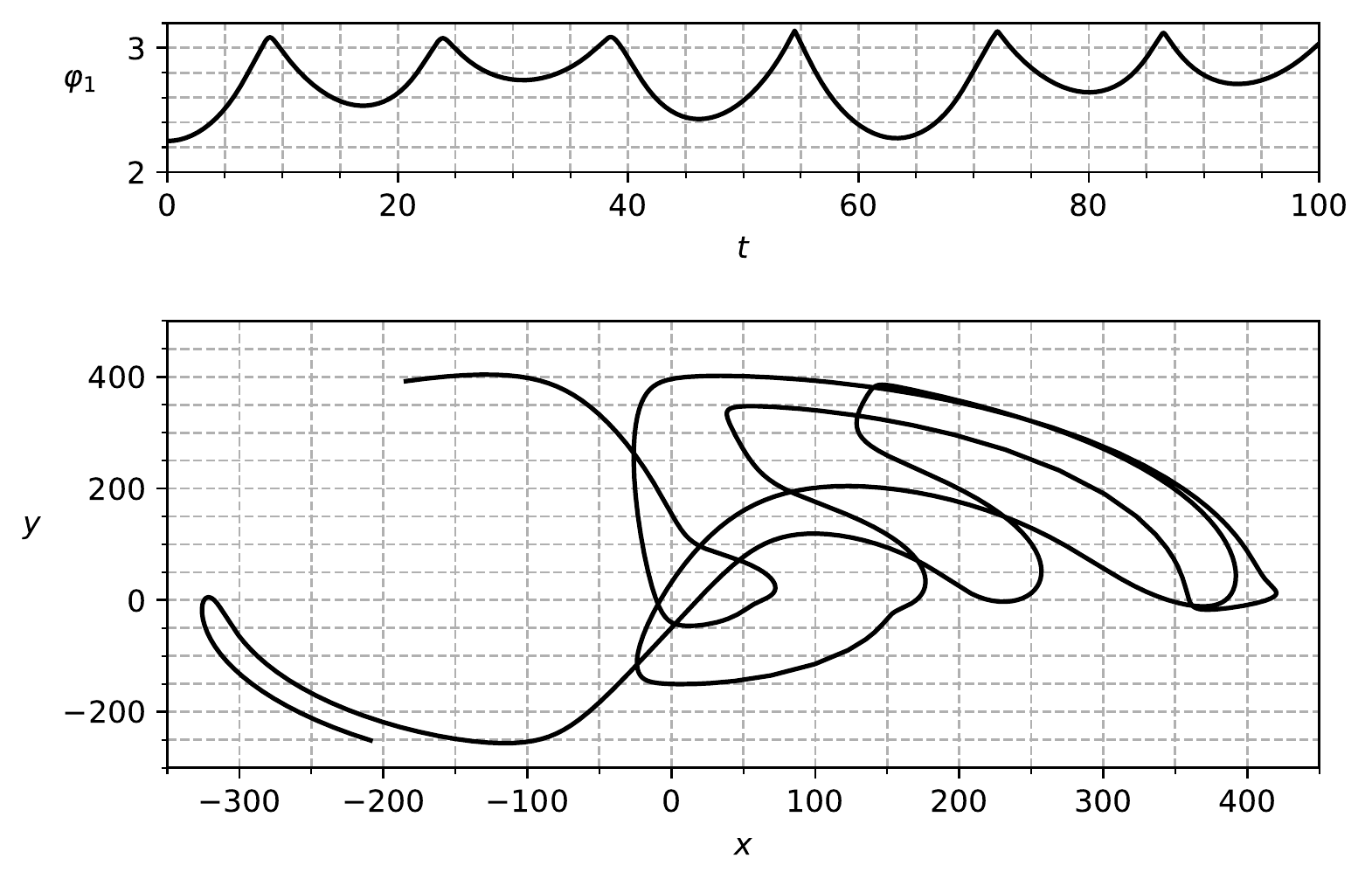}
    \caption{Evolution of $\varphi_1$ (top) and projection of the trajectory of the second pendulum on the $XY$-plane (bottom) for the double spherical pendulum.}
    \label{fig:trajectory}
\end{figure}

\begin{figure}[b]
    \centering
    \includegraphics[width=0.6\textwidth]{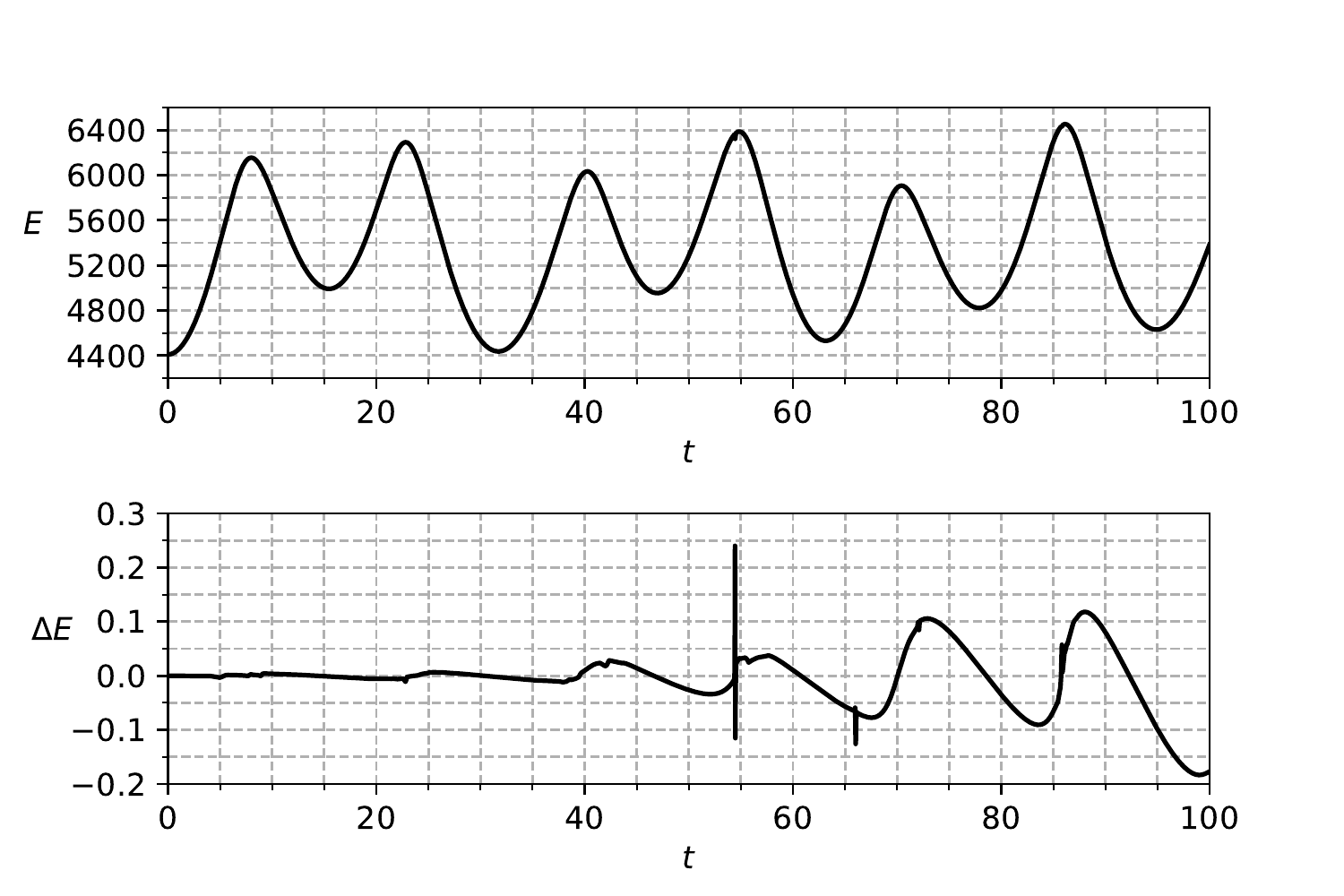}
    \caption{Evolution of the discrete energy for the double spherical pendulum (top) and difference between the energy computed with the discrete reduced equations and the energy computed with the discrete Euler--Lagrange equations (bottom).}
    \label{fig:energy_drift}
\end{figure}

\subsubsection{Numerical computations}

For the numerical simulation, we need to fix the parameters of the system, as well as the initial conditions. Using SI units, we pick $T=100$, $N=10^4$, $m_1=20$, $m_2=35$, $l_1=500$, $l_2=800$, $g=9.8$, $q_0=(0,9/4,2,3)$ and $(\mu_0,w_0)=(0,0,1,1)$. Recall that the $(+)$-discrete generalized energy of the system is given by \eqref{eq:Ed}, i.e., $(E_d)_k=L_d(q_k,q_k^+)$, where we have used that $q_{k+1}=q_k^+$, $0\leq k\leq N-1$. The evolution of the energy is plotted in Figure \ref{fig:energy_drift}. Observe that it exhibits good near energy conservation, since it oscillates around a fixed value instead of exhibiting a spurious drift. This is typical of symplectic, and in particular, variational integrators, but is not generally true of standard integrators. Moreover, we have simulated the system by using the $(+)$-discrete Euler--Lagrange equations. The difference between the energies obtained from our reduced equations and the energy obtained from the Euler--Lagrange equations is also plotted in Figure \ref{fig:energy_drift}. As can be seen, both methods give almost the same values for the discrete energy. However, we have observed that our method was about 75\% faster than the usual discrete Euler--Lagrange equations, which suggests that the reduced integrator could be significantly faster than the unreduced one.

\section{Conclusions}
In this paper, we developed the theory of discrete Dirac reduction of discrete Lagrange--Dirac systems with an abelian symmetry group acting on a linear configuration space. This involves the use of the notion of discrete principal connections to coordinatize the quotient spaces, which allows us to study the reduction of the Dirac structure and the discrete variational principle expressed in terms of the discrete generalized energy. Both the reduced discrete Dirac structure and the reduced discrete variational principle lead to the same reduced discrete equations of motion. We also discussed the role of retractions and the atlas of retraction compatible charts in allowing us to generalize the local theory that was discussed to a discrete reduction theory that is globally well-defined on a manifold.

For future work, we will extend this to the setting of nonabelian symmetry groups, and to discrete analogues of Routh reduction, where the discrete dynamics is also restricted to the level sets of the discrete momentum. In the same vein, it would be interesting to explore discrete Dirac reduction by stages. To that end, a category containing discrete Dirac structures that is closed under quotients must be constructed and the reduction procedure must be defined on the whole category.

\section*{Acknowledgements}

ARA was supported by a FPU grant from the Spanish Ministry of Science, Innovation and Universities (MICIU). ML was supported in part by the NSF under grants DMS-1411792, DMS-1345013, DMS-1813635, CCF-2112665, by AFOSR under grant FA9550-18-1-0288, and by the DoD under grant HQ00342010023 (Newton Award for Transformative Ideas during the COVID-19 Pandemic).

\section*{Declarations}

The authors have no competing interests to declare that are relevant to the content of this article.

\bibliographystyle{plainnat}
\bibliography{biblio.bib}

\end{document}